\newcites{ap}{References}
\newcommand{\agata}[1]{{\color{blue}#1}}
\newcommand{\yesmark}{\checkmark}
\newcommand{\nomark}{ }
\newcommand{\figureword}{Fig.}
\newcommand{\lemmaword}{Lem.}
\newcommand{\propword}{Prop.}
\newcommand{\theoremword}{Th.}
\newcommand{\propositionword}{Prop.}
\newcommand{\tableword}{Tab.}
\newcommand{\appendixword}{Appendix}
\newcommand{\seq}{\; \vdash \;}
\newcommand{\seqcl}{\, \models \,}
\newcommand{\seqlk}{\Rightarrow}
\newcommand{\seqind}[1]{\; \vdash_{#1} \;}
\newcommand{\scomma}{, \;}
\newcommand{\inputworldset}{\textit{In}}
\newcommand{\inputworld}{\textit{in}}
\newcommand{\outputworld}{\textit{out}}
\newcommand{\otvalid}{1-2-valid\xspace}
\newcommand{\otvalidity}{1-2-validity\xspace}
\newcommand{\tfvalid}{3-4-valid\xspace}
\newcommand{\tfvalidity}{3-4-validity\xspace}
\newcommand{\ioseq}[3]{{#1 \vdash (#2, #3)}}
\newcommand{\rulename}[1]{\textcolor{black}{\text{(#1)}}}
\newcommand{\inaxlabel}{IN}
\newcommand{\inaxplabel}{IN-AX$^+$}
\newcommand{\outaxlabel}{OUT}
\newcommand{\outaxplabel}{OUT-AX$^+$}
\newcommand{\pairelimlabel}[1]{E$_#1$}
\newcommand{\weaklabel}{IO-Wk}
\newcommand{\contractlabel}{IO-Ctr}
\newcommand{\cutlabel}{IO-Cut}
\newcommand{\weaklabelLK}{Wk}
\newcommand{\contractlabelLK}{Ctr}
\newcommand{\cutlabelLK}{Cut}
\newcommand{\Krulelabel}{$\square$ R}
\newcommand{\iologic}[1]{\text{OUT$_#1$}}
\newcommand{\ioslogic}[1]{\text{OUT$^{\bot}_#1$}}
\newcommand{\iocalc}[1]{\text{${\mathcal SC}^\bot_#1$}}
\newcommand{\partset}{\mathcal{P}}
\newcommand{\varset}{\mathcal{V}}
\newcommand{\varbij}{\nu}
\newcommand{\varlab}[2]{#1^{#2}}
\newcommand{\propencoding}[3]{{\mathcal{P}^{#1}_{#2}(#3)}}
\newcommand{\numofinputworlds}[1]{\mathcal{N}_{#1}}
\newtheorem{remark}{Remark}
\newtheorem{theorem}{Theorem}
\newtheorem{proposition}{Proposition}
\newtheorem{lemma}{Lemma}
\newtheorem{definition}{Definition}
\newtheorem{notation}{Notation}
\newtheorem{corollary}{Corollary}
\newtheoremstyle{TheoremNum}
    {\topsep}{\topsep}              
    {\itshape}                      
    {}                              
    {\bfseries}                     
    {.}                             
    { }                             
    {\thmname{#1}\thmnote{ \bfseries #3}}
\theoremstyle{TheoremNum}
\newtheorem{lemman}{Lemma}
\newtheorem{thn}{Theorem}
\newcolumntype{U}{>{\centering\arraybackslash}p{0.045\textwidth}}
\newcolumntype{V}{>{\centering\arraybackslash}p{0.035\textwidth}}
\tikzset{
modal/.style={>=stealth',shorten >=1pt,shorten <=1pt,auto,
node distance=1.5cm,semithick},
world/.style={circle,draw,minimum size=1cm,fill=gray!15},
point/.style={circle,draw,fill=black,inner sep=0.5mm},
reflexive/.style={->,in=120,out=60,loop,looseness=#1},
reflexive/.default={5},
reflexive point/.style={->,in=135,out=45,loop,looseness=#1},
reflexive point/.default={25},
reflexive above/.style={->,loop,in=120,out=60,looseness=#1},
reflexive above/.default={7},
reflexive below/.style={->,loop,in=240,out=300,looseness=#1},
reflexive below/.default={7},
reflexive left/.style={->,loop,in=150,out=210,looseness=#1},
reflexive left/.default={7},
reflexive right/.style={->,loop,in=30,out=330,looseness=#1},
reflexive right/.default={7}
}
\title{Streamlining Input/Output Logics with Sequent Calculi --- Extended Version}
\author{%
Agata Ciabattoni\and
Dmitry Rozplokhas \\
\affiliations
Institute of Logic and Computation, TU Wien, Vienna, Austria \\
\emails
\{agata, dmitry\}@logic.at
}
\begin{document}



\maketitle

\begin{abstract}
Input/Output (I/O) logic is a general framework for reasoning about conditional norms and/or causal relations. 
We streamline Bochman's causal I/O logics via proof-search-oriented sequent calculi. 
 Our calculi establish a natural syntactic link between the derivability in these logics and in the original I/O logics. 
 As a consequence of our results, we obtain new, simple semantics for all these logics, complexity bounds, embeddings into normal modal logics, and efficient deduction methods. 
Our work encompasses many scattered results and provides uniform solutions to various unresolved problems.

\end{abstract}

\section{Introduction}

Input/Output (I/O) logic is a general framework proposed by~\cite{IO_original}
to reason about conditional norms. 
I/O logic is not a single logic but rather a family of logics, each viewed as a “transformation engine”, which converts an input (condition under which the obligation holds) into an output (what is obligatory under these conditions).
Many different I/O logics have been defined,
e.g.,~\cite{IO_original_constrained,IOhandbook,IO_minimal_aggregative,IO_conceptual}, and also used as building blocks for causal reasoning~\cite{causal_logic,causality_for_nonmonotonic,Perls_causality_logic,causality_book},
laying down the logical foundations for the causal calculus~\cite{McCainT97}, and for legal reasoning~\cite{kelsenian_logic}. 
I/O logics manipulate Input-Output pairs\footnote{Production rules $A \seqlk B$, in Bochman's terminology.} $(A, B)$, which consists of boolean formulae representing either conditional obligations (in the case of the original I/O logics) or causal relations ($A$ causes $B$, in the case of their causal counterparts).
Different I/O logics are defined by varying the mechanisms of obtaining new pairs from a set of pairs (entailment problem). 
Each I/O logic is characterized by its own semantics. The original I/O logics use a procedural approach, while their causal counterparts adopt bimodels, which in general consist of pairs of arbitrary deductively-closed sets of formulae. Additionally, each I/O logic is equipped with a proof calculus, consisting of axioms and rules but not suitable for proof search.

This paper deals with the four original I/O logics \iologic{1}-\iologic{4} in \cite{IO_original} and their
causal counterpart \ioslogic{1}-\ioslogic{4} in~\cite{causality_for_nonmonotonic}.
We introduce proof-search-oriented sequent calculi and use them to bring together scattered results and to provide uniform solutions to various unresolved problems.
%
%
%
%
Indeed~\cite{IO_sequents_argumentation}
characterized 
many I/O logics through an argumentative approach using sequent-style calculi. Their calculi are not proof search-oriented. First sequent calculi of this kind for some I/O logics, including \iologic{1} and \iologic{3}, 
have been proposed in~\cite{IO_sequents}.
Their implementation provides an alternative decidability proof, although not optimal (entailment is shown to be in 
$\Pi^P_3$).
Moreover, the problem of finding proof-search-oriented calculi for \iologic{2} and \iologic{4} was left open there. A prover for these two logics was introduced in~\cite{IO_in_HOL}. The prover encodes in classical Higher Order Logic their embeddings from~\cite{IO_original} into the normal modal logics {\bf K} and {\bf KT}.
%
Finding an embedding of \iologic{1} and \iologic{3} into normal modal logics was left as an open problem, that~\cite{IOhandbook} indicates as difficult, if possible at all.
An encoding of various I/O logics into more complicated logics (adaptive modal logics) is in~\cite{IO_adaptive_characterization}. 
Using their procedural semantics, \cite{IO_decision_procedures} defined
goal-directed decision procedures for the
original I/O logics, 
without mentioning the complexity of the task.
\cite{IO_complexity} showed that the entailment problem for \iologic{1}, \iologic{2}, and \iologic{4} is co-NP-complete, while for \iologic{3} the complexity was determined to lie within the first and second levels of the polynomial hierarchy, without exact resolution.
%

In this paper we follow a new path that streamlines the considered logics. Inspired by the modal embedding of \ioslogic{2} and \ioslogic{4} 
in~\cite{causal_logic}, we design well-behaving sequent calculi for Bochman's causal I/O logics.
%
The normal form of derivations in these calculi allows a simple syntactic link between derivability in the original I/O logics and in their causal versions to be established, making it possible to utilize our calculi for the original I/O logics as well.  
As a by-product:
\vspace{-0.1cm}
\begin{itemize}
\item We introduce a simple possible worlds semantics.
\vspace{-0.1cm}
\item We prove co-NP-completeness and provide 
efficient automated procedures for the entailment problem; the latter are obtained
via reduction to unsatisfiability of a classical logic formula of polynomial size.
\vspace{-0.1cm}
\item We provide embeddings into the shallow fragment of the modal logics $\mathbf{K}$, $\mathbf{KD}$ (i.e., standard deontic logic~\cite{SDL}), and their extension with axiom $\mathbf{F}$.
\end{itemize}
These results are uniformly obtained 
{\em for all} four original I/O logics and their causal versions.

\section{Preliminaries}
In the I/O logic framework, conditional norms (or causal relations) are expressed as pairs $(B, Y)$ of propositional boolean formulae.   
The semantics is operational, rather than truth-functional.
The meaning of the deontic/causal concepts in these logics is given in terms of a set of procedures yielding outputs for inputs. The basic mechanism underpinning these procedures is detachment (modus ponens).

On the syntactic side, different I/O logics are obtained by varying the mechanisms of obtaining new input-output pairs from a given set of these pairs. The mechanisms
introduced in the original paper~\cite{IO_original} are based on the following (axioms and) rules ($\models$ denotes semantic entailment in classical propositional logic):

%
 
\begin{description}
\item[\rulename{TOP}] $(\top, \top)$ is derivable from no premises
\item[\rulename{BOT}] $(\bot, \bot)$ is derivable from no premises
\item[\rulename{WO}] $(A,X)$ derives $(A, Y)$ whenever $X \seqcl Y$ 
\item[\rulename{SI}] $(A,X)$ derives $(B, X)$ whenever $B \seqcl A$ 
\item[\rulename{AND}] $(A,X_1)$ and $(A,X_2)$ derive $(A, X_1 \wedge X_2)$
\item[\rulename{OR}] $(A_1,X)$ and $(A_2,X)$ derive $(A_1 \vee A_2, X)$
\item[\rulename{CT}] $(A,X)$ and $(A \wedge X,Y)$ derive $(A, Y)$
\end{description}

Different I/O logics are given by different subsets $R$ of these rules, see \figureword~\ref{tab:io_logics_rule_schemes}. The basic system, called {\em simple-minded output} \iologic{1},  consists of the rules $\{\rulename{TOP},$ $\rulename{WO},$ $ \rulename{SI},$ $ \rulename{AND}\}$.
Its extension with \rulename{OR} (for reasoning by cases) leads to {\em basic output} logic \iologic{2}, with \rulename{CT} (for reusability of outputs as inputs in derivations) to {\em simple-minded reusable output} logic \iologic{3}, and with both \rulename{OR} and \rulename{CT} to {\em basic reusable output} logic \iologic{4}. 
Their causal counterpart~\cite{causality_for_nonmonotonic}, 
that we denote by \ioslogic{i} for $i= 1, \dots , 4$, extends the corresponding logics with \rulename{BOT}. 
\begin{definition}
\label{def:IO_derivability}
Given a set of pairs $G$ and
a set $R$ of rules,
a derivation in an I/O logic of a pair $(B, Y)$ from $G$ is 
a tree with $(B, Y)$ at the root, each non-leaf node derivable from its immediate parents by one of the rules in $R$, and each leaf node is an element of $G$ or an axiom from $R$.
\end{definition}

\begin{table}
\centering
\scalebox{0.85}{\begin{tabular}
{|c| c  c  c  c  c  c  c|}
\hline
Logic  &
\rulename{TOP} &
\rulename{BOT} &
\rulename{WO} &
\rulename{SI} &
\rulename{AND} &
\rulename{OR} &
\rulename{CT} \\[1pt]
\hline
\iologic{1} &
\yesmark &
\nomark &
\yesmark &
\yesmark &
\yesmark &
\nomark &
\nomark \\[1pt]
\iologic{2} &
\yesmark &
\nomark &
\yesmark &
\yesmark &
\yesmark &
\yesmark &
\nomark \\[1pt]
\iologic{3} &
\yesmark &
\nomark &
\yesmark &
\yesmark &
\yesmark &
\nomark &
\yesmark \\[1pt]
\iologic{4} &
\yesmark &
\nomark &
\yesmark &
\yesmark &
\yesmark &
\yesmark &
\yesmark \\[1pt]
\ioslogic{1} &
\yesmark &
\yesmark &
\yesmark &
\yesmark &
\yesmark &
\nomark &
\nomark \\[1pt]
\ioslogic{2} &
\yesmark &
\yesmark &
\yesmark &
\yesmark &
\yesmark &
\yesmark &
\nomark \\[1pt]
\ioslogic{3} &
\yesmark &
\yesmark &
\yesmark &
\yesmark &
\yesmark &
\nomark &
\yesmark \\[1pt]
\ioslogic{4} &
\yesmark &
\yesmark &
\yesmark &
\yesmark &
\yesmark &
\yesmark &
\yesmark \\[1pt]
\hline
\end{tabular}}

\setlength{\belowcaptionskip}{-10pt}
\caption{Defining rules for the considered I/O logics}
\label{tab:io_logics_rule_schemes}
\end{table}
We indicate by $G \vdash_{OUT\ast} (B,Y)$
that the pair $(B, Y)$ is derivable in the I/O logic $OUT\ast$
from the set of pairs in $G$ ({\em entailment problem}). 
We will refer to $(B, Y)$ as the \emph{goal pair}, to the formulae $B$ and $Y$ as the \emph{goal input} and \emph{goal output} respectively, and to the pairs in $G$ as \emph{deriving pairs}.

\begin{remark}
A derivation in I/O logic is a sort of natural deduction proof, acting on pairs, rather than formulae. This proof theory is however not helpful to decide whether 
$G \vdash_{OUT\ast} (B,Y)$ holds, or to prove metalogical results (e.g., complexity bounds). The main reason is that derivations have no well-behaved normal forms, and in general are difficult to find. 
\cite{IO_sequents} introduced the first proof-search oriented calculi, which operate effectively only in the absence of \rulename{OR} (hence not  
 for \ioslogic{2} and \ioslogic{4}).
The calculi use sequents 
that manipulate pairs expressed using the conditional logic connective $>$.
\end{remark}


\section{Sequent Calculi for Causal I/O Logics}
\label{sec:sequent}
We present sequent-style calculi for the causal I/O logic \ioslogic{1}-\ioslogic{4}.
Their basic objects are 
$$\mbox{{\em I/O sequents}} \quad \ioseq{(A_1, X_1), \dots, (A_n, X_n)}{B}{Y}$$ dealing with pairs, as well as $$\mbox{{\em Genzen's LK sequents}} \quad A_1, \dots A_n \seqlk B_1, \dots , B_m$$ 
dealing with boolean formulae (meaning that $\{A_1, \dots, A_n\} $ $ \models (B_1 \vee \dots \vee B_m)$).
Our calculi are defined by extending the sequent calculus LK\footnote{We assume the readers to be familiar with LK
(a brief overview is given in \appendixword~\ref{sec:ap_LK})
.}
for classical logic 
%
with three rules manipulating I/O sequents: one {\em elimination rule} ---different for each logic--- that removes one of the deriving pair while modifying the goal pair, and two {\em concluding rules} that transform the derivation of the goal pair into an LK derivation of either the goal input or the goal output. The latter rules, which are the same for all the considered logics, are in \figureword~\ref{fig:IO_concluding_rules}.

\begin{figure}
    \centering
    \begin{tabular}{cc}
\scalebox{0.90}{\begin{prooftree}
    \hypo{B \seqlk}
    \infer1[\rulename{\inaxlabel}]{\ioseq{G}{B}{Y}}
\end{prooftree}} \quad & \quad
\scalebox{0.90}{\begin{prooftree}
    \hypo{\seqlk Y}
    \infer1[\rulename{\outaxlabel}]{\ioseq{G}{B}{Y}}
\end{prooftree}}
    \end{tabular}
    \setlength{\belowcaptionskip}{-12pt}
    \caption{Concluding rules (same for all causal I/O logic)}
    \label{fig:IO_concluding_rules}
\end{figure}

\begin{definition}
A {\em derivation} in our calculi is a finite labeled tree whose internal nodes are I/O or $LK$ sequents s.t. the label of each node follows from the labels of
its children using the calculus rules.
We say that an I/O sequent $\ioseq{(A_1, X_1), \dots, (A_n, X_n)}{B}{Y}$ is derivable if all the leaves of its derivation are LK axioms. 
\end{definition}


A derivation of an I/O sequent
consists of two phases. Looking at it bottom up, we first encounter rules dealing with pairs (pair elimination and concluding rules) followed by LK rules.
The calculi, in a sense, uphold the ideological principles guiding I/O logics: pairs (i.e., conditional norms) are treated separately from the boolean statements.

It is easy to see that using (\inaxlabel) and (\outaxlabel) 
we can derive 
$\rulename{TOP}$ and $\rulename{BOT}$; their soundness 
in the weakest causal I/O logic $\ioslogic{1}$ is proven below.
\begin{lemma}
\label{lemma:concluding}
(\inaxlabel) and (\outaxlabel) are derivable in 
$\ioslogic{1}$.
\end{lemma}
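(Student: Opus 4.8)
The plan is to give, for each concluding rule, an explicit derivation in the I/O logic $\ioslogic{1}$ in the sense of \definitionword~\ref{def:IO_derivability}, witnessing that whenever the premise of the rule holds the goal pair $(B,Y)$ is derivable. I will use that $\ioslogic{1}$ contains both axioms $\rulename{TOP}$ and $\rulename{BOT}$ together with the rules $\rulename{WO}$ and $\rulename{SI}$ (see \figureword~\ref{tab:io_logics_rule_schemes}). First I would translate the $LK$-premises: by soundness and completeness of $LK$, the premise $\seqlk Y$ of (\outaxlabel) says that $Y$ is a classical tautology, i.e.\ $\top \seqcl Y$, while the premise $B \seqlk$ of (\inaxlabel) says that $B$ is unsatisfiable, i.e.\ $B \seqcl \bot$.

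For (\outaxlabel) I would start from the axiom $\rulename{TOP}$, yielding $(\top,\top)$, then apply $\rulename{WO}$ using the side condition $\top \seqcl Y$ to obtain $(\top,Y)$, and finally apply $\rulename{SI}$ using the trivially valid side condition $B \seqcl \top$ to reach the goal pair $(B,Y)$. This shows that $\seqlk Y$ suffices to derive $(B,Y)$ in $\ioslogic{1}$ (in fact from no deriving pairs at all), which is precisely the content of (\outaxlabel).

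For (\inaxlabel) the argument is dual, but it is here that the axiom $\rulename{BOT}$ does the essential work. Starting from $\rulename{BOT}$, i.e.\ $(\bot,\bot)$, I would apply $\rulename{SI}$ with side condition $B \seqcl \bot$ (the translated premise) to get $(B,\bot)$, and then $\rulename{WO}$ with the trivially valid $\bot \seqcl Y$ to get $(B,Y)$.

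Both derivations are short and routine, so I do not expect a genuine technical obstacle; the only point worth emphasising is conceptual. The derivability of (\inaxlabel) relies crucially on $\rulename{BOT}$, the axiom that distinguishes the causal systems $\ioslogic{i}$ from the original logics $\iologic{i}$. In the latter, where $\rulename{BOT}$ is unavailable, this inference is not in general sound, which is exactly why the concluding rules are justified specifically for the causal family.
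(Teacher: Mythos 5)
Your proof is correct and matches the paper's own argument exactly: for (\rulename{\inaxlabel}) it uses \rulename{BOT} followed by \rulename{SI} (with $B \seqcl \bot$) and \rulename{WO} (with $\bot \seqcl Y$), and for (\rulename{\outaxlabel}) it uses \rulename{TOP} followed by \rulename{WO} (with $\top \seqcl Y$) and \rulename{SI} (with $B \seqcl \top$), with all side conditions checked correctly. Your closing remark that \rulename{BOT} is essential for (\rulename{\inaxlabel}) and marks the divide between the causal and original systems also accords with the paper's later analysis in the section relating \ioslogic{k} to \iologic{k}.
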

\begin{proof}
If $B \seqlk$  we have the following  derivation in $\ioslogic{1}$: from
$(\bot,\bot)$ and $B \seqcl \bot$ (i.e., $B \seqlk$) we get $(B, \bot)$ by $\rulename{SI}$; the required pair $(B,Y)$ follows by $\rulename{WO}$.

Assume $\seqlk Y$. From  
$(\top,\top)$ and $\seqlk Y$ (i.e., $\top \seqcl Y$) by $\rulename{WO}$ we get $(\top, Y)$, from which $(B,Y)$ follows by $\rulename{SI}$.
\end{proof}

\vspace{-2mm}

Henceforth, when presenting derivations in our calculi, we will omit the
LK sub-derivations.
\subsection{Basic Production Inference \ioslogic{2}}
\label{sec:sequent_2}
The calculus \iocalc{2} for the causal\footnote{Called basic production inference in~\cite{causality_for_nonmonotonic}.} basic output logic \ioslogic{2} 
is obtained by adding to the core calculus (consisting of $LK$ with the rules $(\inaxlabel)$ and $(\outaxlabel)$) the pair elimination rule \rulename{\pairelimlabel{2}} in 
\figureword~\ref{fig:IO_pair_elimination_rules}.

\begin{remark}
The rule  \rulename{\pairelimlabel{2}} is inspired by
the embedding in~\cite{causal_logic}
of \ioslogic{2} into the modal logic $\mathbf{K}$: ${(A_1, X_1) \scomma \dots \scomma (A_n, X_n) \vdash_{\ioslogic{2}} (B, Y)}$  iff $ (\ast) \;  {(A_1 \to \square X_1) \scomma \dots \scomma (A_n \to \square X_n) \scomma B \seq \square Y}$ is derivable in $\mathbf{K}$. To provide the rule's intuition we make use of
the sequent calculus $GK$ for $\mathbf{K}$ in~\cite{GK_calculus}. $GK$ 
extends LK with the following rule for introducing boxes (or eliminating them, looking at the rule bottom up):
\[ \scalebox{0.90}{\begin{prooftree}
    \hypo{A_1 \scomma \dots \scomma A_n \seq B}
    \infer1[\rulename{\Krulelabel}]{\square A_1, \dots, \square A_n \seq \square B}
\end{prooftree}}  \]
%
To prove the sequent $(\ast)$ in $GK$ we can apply the $LK$ rule for $\to$ to one of the implications $(A_i \to \square X_i)$ on the left. This creates two premises: (a) ${G' \scomma B \seqind{K} \square Y \scomma A_i}$ and (b) ${G' \scomma B \scomma \square X_i \seqind{K} \square Y}$ (where $G'$ is the set of all implications on the left-hand side but ${(A_i \to \square X_i)}$). Now (a) ${G' \scomma B \seqind{K} \square Y \scomma A_i}$ is equivalent in $\mathbf{K}$ to (derivable in $GK$ if and only if so is) the sequent ${G' \scomma (B \wedge \neg A_i) \seqind{K} \square Y}$, that using the embedding again leads to the first premise of \rulename{\pairelimlabel{2}}; (b)
${G' \scomma B \scomma \square X_i \seqind{K} \square Y}$ is equivalent (for suitable $G'$) to ${G' \scomma B \seqind{K} \square (Y \vee \neg X_i)}$, 
which leads to the second premise of 
\rulename{\pairelimlabel{2}}.
%
\end{remark}

\begin{figure*}
    \centering
    \begin{tabular}{cc}
\scalebox{0.90}{\begin{prooftree}
    \hypo{\ioseq{G}{B \land \neg A}{Y}}
    \hypo{\ioseq{G}{B}{Y \lor \neg X}}
    \infer2[\rulename{\pairelimlabel{2}}]{\ioseq{(A, X) \scomma G}{B}{Y}}
\end{prooftree}} &
\scalebox{0.90}{\begin{prooftree}
    \hypo{B \seqlk A}
    \hypo{\ioseq{G}{B}{Y \lor \neg X}}
    \infer2[\rulename{\pairelimlabel{1}}]{\ioseq{(A, X) \scomma G}{B}{Y}}
\end{prooftree}} \\[5mm]
\scalebox{0.90}{\begin{prooftree}
    \hypo{\ioseq{G}{B \land \neg A}{Y}}
    \hypo{\ioseq{G}{B \land X}{Y \lor \neg X}}
    \infer2[\rulename{\pairelimlabel{4}}]{\ioseq{(A, X) \scomma G}{B}{Y}}
\end{prooftree}} &
\scalebox{0.90}{\begin{prooftree}
    \hypo{B \seqlk A}
    \hypo{\ioseq{G}{B \land X}{Y \lor \neg X}}
    \infer2[\rulename{\pairelimlabel{3}}]{\ioseq{(A, X) \scomma G}{B}{Y}}
\end{prooftree}}
    \end{tabular}
    \caption{Sequent rules for pair elimination (one for each considered causal I/O logic)}
    \label{fig:IO_pair_elimination_rules}
\end{figure*}


\begin{figure*}
    \centering
    \begin{tabular}{ccc}
\scalebox{0.90}{\begin{prooftree}
    \hypo{\ioseq{G}{B}{Y}}
    \infer1[\rulename{\weaklabel}]{\ioseq{(A, X), G}{B}{Y}}
\end{prooftree}} &
\scalebox{0.90}{\begin{prooftree}
    \hypo{\ioseq{(A, X), (A, X), G}{B}{Y}}
    \infer1[\rulename{\contractlabel}]{\ioseq{(A, X), G}{B}{Y}}
\end{prooftree}} &
\scalebox{0.90}{\begin{prooftree}
    \hypo{\ioseq{G}{C}{Z}}
    \hypo{\ioseq{(C, Z), G'}{B}{Y}}
    \infer2[\rulename{\cutlabel}]{\ioseq{G, G'}{B}{Y}}
\end{prooftree}}
    \end{tabular}
    \setlength{\belowcaptionskip}{-10pt}
    \caption{Structural I/O rules (admissible in all our calculi)}
    \label{fig:IO_structural_rules}
\end{figure*}

We prove below the soundness and completeness of the calculus \iocalc{2} for $\ioslogic{2}$. We start by describing a useful characterization of derivability in \iocalc{2} of an I/O sequent $\ioseq{(A_1, X_1), \dots, (A_n, X_n)}{B}{Y}$ via derivability of certain sequents in LK.


\begin{notation}
$\partset(X)$ will denote the set of all \textit{partitions} of the set $X$, i.e.,
$\partset(X) = \{ (I,J) \mid I \cup J = X, I \cap J = \emptyset \}$
\end{notation}

Notice that 
if a concluding rule $\rulename{\inaxlabel}$ or $\rulename{\outaxlabel}$ can be applied to the conclusion of \rulename{\pairelimlabel{2}}, it can also be applied to its premises. 
This observation implies that if $(A_1, X_1),$ $ \dots, $ $ (A_n, X_n) \vdash (B,Y)$ is derivable in \iocalc{2} there is a derivation in which the concluding rules are applied only when all deriving pairs are eliminated. We use this {\em I/O normal form} of derivations in the proof of the following lemma.

\begin{lemma}[Characterization lemma for \iocalc{2}]
\label{lem:char_2}
$\ioseq{(A_1, X_1), \dots, (A_n, X_n)}{B}{Y}$ is derivable in \iocalc{2} iff for all partitions 
$(I, J) \in \partset(\{1, \dots ,n \})$, either 
${B \seqlk \{A_i\}_{i \in I}}$ or ${\{X_j\}_{j \in J} \seqlk Y}$ is derivable in LK.
\end{lemma}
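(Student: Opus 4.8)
The plan is to prove both directions by induction on the number $n$ of deriving pairs, exploiting the I/O normal form justified just before the statement, which lets me assume that the concluding rules $\rulename{\inaxlabel}$ and $\rulename{\outaxlabel}$ are applied only once all pairs have been eliminated. The base case $n=0$ is immediate: there are no pairs to eliminate, the only applicable rules are the concluding ones, so $\ioseq{}{B}{Y}$ is derivable iff either $B \seqlk$ or $\seqlk Y$ is derivable in LK; and $\partset(\emptyset)$ contains only the trivial partition $(\emptyset, \emptyset)$, for which the two disjuncts $B \seqlk \{A_i\}_{i \in \emptyset}$ and $\{X_j\}_{j \in \emptyset} \seqlk Y$ read exactly as $B \seqlk$ and $\seqlk Y$. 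So the characterization holds vacuously at the base.

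For the inductive step I would focus on the last pair-elimination rule applied at the root of an I/O-normal derivation, namely $\rulename{\pairelimlabel{2}}$ eliminating $(A_1, X_1)$ (relabelling so the eliminated pair is the first). By the rule, $\ioseq{(A_1,X_1),\dots,(A_n,X_n)}{B}{Y}$ is derivable iff both premises $\ioseq{(A_2,X_2),\dots}{B \land \neg A_1}{Y}$ and $\ioseq{(A_2,X_2),\dots}{B}{Y \lor \neg X_1}$ are derivable, each having only $n-1$ deriving pairs. I then apply the induction hypothesis to both premises, obtaining characterizations in terms of partitions of $\{2,\dots,n\}$. The key combinatorial observation is that partitions $(I,J)$ of $\{1,\dots,n\}$ split according to whether $1 \in I$ or $1 \in J$: those with $1 \in I$ correspond bijectively to partitions $(I',J)$ of $\{2,\dots,n\}$ together with the condition contributed by the first premise, and those with $1 \in J$ correspond to the second premise. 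Concretely, I must show that for each partition $(I',J')$ of $\{2,\dots,n\}$, the first-premise disjunct $B \land \neg A_1 \seqlk \{A_i\}_{i \in I'}$ is LK-equivalent to $B \seqlk \{A_i\}_{i \in I' \cup \{1\}}$ (moving $\neg A_1$ across the turnstile turns it into $A_1$ on the right), and the second-premise disjunct $\{X_j\}_{j \in J'} \seqlk Y \lor \neg X_1$ is LK-equivalent to $\{X_j\}_{j \in J' \cup \{1\}} \seqlk Y$ (moving $\neg X_1$ turns it into $X_1$ on the left).

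Assembling these equivalences is where the argument needs care. From the two induction hypotheses I know: for every partition of $\{2,\dots,n\}$ the first premise gives a disjunction, and likewise the second. I want to conclude that for every partition $(I,J)$ of $\{1,\dots,n\}$ one of its two disjuncts is LK-derivable. Given such $(I,J)$, set $J' = J \setminus \{1\}$ and $I' = I \setminus \{1\}$; note $(I',J')$ need not be a partition of $\{2,\dots,n\}$ unless I restrict correctly, so I will instead build, from $(I,J)$, the partition $(I \setminus \{1\}, J)$ if $1 \in I$, and $(I, J \setminus \{1\})$ if $1 \in J$, each a genuine partition of $\{2,\dots,n\}$, and feed it to the appropriate premise's induction hypothesis. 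The forward direction ($\Rightarrow$, derivability implies the partition condition) follows by reading off the disjunct supplied for that restricted partition and re-deriving the unrestricted disjunct via the LK equivalences above (plus weakening); the converse ($\Leftarrow$) runs the same equivalences backwards to feed both premises and rebuild the derivation with $\rulename{\pairelimlabel{2}}$. The main obstacle I anticipate is bookkeeping the correspondence between partitions of $\{1,\dots,n\}$ and those of $\{2,\dots,n\}$ cleanly, and verifying that every partition of the larger set arises from exactly one premise, so that the universal quantifier over partitions in the conclusion is matched exactly by the two induction hypotheses without gaps or double-counting. The underlying LK manipulations (shifting a negated formula across the turnstile, weakening) are routine and invertible, so they pose no real difficulty once the partition correspondence is set up.
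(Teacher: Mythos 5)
Your proposal is correct and follows essentially the same route as the paper's own proof: induction on $n$ using the I/O normal form, applying the induction hypothesis to the two premises of \rulename{\pairelimlabel{2}}, and the same LK equivalences (moving $\neg A_1$ across the turnstile to get $B \seqlk \{A_i\}_{i \in I' \cup \{1\}}$, and dually moving $\neg X_1$ to get $\{X_j\}_{j \in J' \cup \{1\}} \seqlk Y$), the only difference being that the paper keeps the eliminated pair as an arbitrary $(A_k, X_k)$ where you relabel it to be the first. One cosmetic remark: your worry that $(I \setminus \{1\}, J \setminus \{1\})$ might fail to be a partition of $\{2,\dots,n\}$ is unfounded---since exactly one of $I$, $J$ contains $1$, it always is one---and your case split on $1 \in I$ versus $1 \in J$ to decide which premise's induction hypothesis to invoke is exactly the bookkeeping the paper performs.
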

\begin{proof}
By induction on $n$. 
Base case: $n = 0$. The only partition is $(\emptyset, \emptyset)$.  From derivability of $B \seqlk $ or $\seqlk Y$ follows $\ioseq{}{B}{Y}$ by either $\rulename{\inaxlabel}$ or $\rulename{\outaxlabel}$; the converse also holds.

Inductive case: from $n$ to $n+1$. 
Let $G = \{(A_1, X_1), \dots,$ $ (A_{n+1}, X_{n+1})\}$ and consider only derivations in \iocalc{2} in I/O normal form (so the last applied rule can only be \rulename{\pairelimlabel{2}}).
We characterize the condition when there exists a
derivation of $\ioseq{G}{B}{Y}$ whose last rule applied is \rulename{\pairelimlabel{2}} eliminating a pair $(A_k,X_k)$, for some $k \in {\{1, \dots ,n+1\}}$. 
This application leads to two premises: $\ioseq{G'}{B \wedge \neg A_k}{Y}$ and $\ioseq{G'}{B}{Y \vee \neg X_k}$, where $G' = G \setminus \{(A_k,X_k)\}$. By the inductive hypothesis, the derivability of these premises is equivalent to the derivability of the following sequents: for each $(I', J') \in \partset(\{1, \dots, n+1\} \setminus \{k\})$:
\begin{itemize}
    \item (a1) $B \wedge \neg A_k \seqlk \{A_i\}_{i \in I'}$ or  (a2) $\{X_j\}_{j \in J'} \seqlk Y $, and
    \item (b1) $ B \seqlk \{A_i\}_{i \in I'}$ or (b2) $\{X_j\}_{j \in J'} \seqlk Y \vee \neg X_k$.
\end{itemize}

\noindent
(a1) is equivalent in $LK$ to (a1)' $B \seqlk \{A_i\}_{i \in I' \cup \{k\}}$, and (b2) to (b2)' $\{X_j\}_{j \in J' \cup \{k\}} \seqlk Y$. Hence (a1)', (a2) give the required condition for the partition $(I' \cup \{k\}, J')$, while (b1), (b2)' for the partition $(I', J' \cup \{k\})$. 
\end{proof}

\vspace{-2mm}
The soundness and completeness proof of $\iocalc{2}$ makes use of 
the admissibility in the calculus of the structural rules for I/O sequents (weakening, contraction, and cut)  in 
\figureword~\ref{fig:IO_structural_rules}. Recall that a rule is {\em admissible} if its addition does not change the set of sequents that can be derived.

\begin{lemma}
\label{lem:struct_admissibility_2}
The rules \rulename{\weaklabel}, \rulename{\contractlabel} and \rulename{\cutlabel} in 
\figureword~\ref{fig:IO_structural_rules} are admissible in $\iocalc{2}$.
\end{lemma}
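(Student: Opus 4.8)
The plan is to prove the admissibility of the three structural rules by leveraging the Characterization Lemma (\lemmaword~\ref{lem:char_2}), which reduces derivability of an I/O sequent in \iocalc{2} to a purely propositional condition over all partitions of the deriving pairs. This is the cleanest route: rather than arguing by induction on derivation height and permuting rule applications (the standard but tedious approach), I would translate each structural rule into its characterization-level statement and discharge it by elementary LK reasoning. So for each rule, I would assume the characterization condition holds for the premise(s) and show it holds for the conclusion.

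First I would handle \rulename{\weaklabel}. Suppose $\ioseq{G}{B}{Y}$ is derivable, so the characterization holds for $G = \{(A_1,X_1),\dots,(A_n,X_n)\}$. I must show the condition holds for $(A,X),G$, i.e., for every partition $(I,J)$ of $\{0,1,\dots,n\}$ (with index $0$ for the new pair $(A,X)$). Given such a partition, restrict it to $\{1,\dots,n\}$ to obtain a partition of the old index set; the characterization gives either $B \seqlk \{A_i\}_{i \in I \cap \{1,\dots,n\}}$ or $\{X_j\}_{j \in J \cap \{1,\dots,n\}} \seqlk Y$. By weakening in LK (adding $A$ or $X$ to the appropriate side), this yields the required disjunct for the full partition. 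Contraction \rulename{\contractlabel} is similarly routine: a partition of the single-copy index set extends to the two-copy set by duplicating the assignment of the contracted pair, and LK contraction collapses the repeated formula.

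The main obstacle will be \rulename{\cutlabel}, which genuinely mixes two characterization conditions. Here I assume the characterizations for $\ioseq{G}{C}{Z}$ and for $\ioseq{(C,Z),G'}{B}{Y}$, and must establish it for $\ioseq{G,G'}{B}{Y}$. Fix a partition $(I,J)$ of the combined index set for $G,G'$. The idea is to use the first premise's characterization on the $G$-part: for the induced partition, either $C$ entails the selected inputs from $G$, or the selected outputs from $G$ entail $Z$. I then feed the cut formula $(C,Z)$ into the second premise by considering both ways of placing its index in a partition of $(C,Z),G'$ — assigning the cut pair to the input side in one case and the output side in the other — and combine the two resulting LK sequents with the first premise's information by a propositional cut on $C$ (resp.\ $Z$). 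Concretely, the delicate step is arranging the case split so that the $C$-cut or $Z$-cut in LK fires exactly when the first premise supplies the matching half; I expect to need both placements of the cut-pair's index simultaneously and then merge via \cutlabelLK\ in LK to eliminate $C$ or $Z$. This reduces I/O-cut admissibility to ordinary cut in LK, which is available.

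Throughout, the essential observations are the two LK equivalences already exploited in the proof of \lemmaword~\ref{lem:char_2} (moving a formula across the turnstile as a negation, i.e.\ $B \wedge \neg A \seqlk \Delta$ iff $B \seqlk \Delta, A$) together with admissibility of weakening and cut in LK itself. Since \lemmaword~\ref{lem:char_2} makes derivability in \iocalc{2} equivalent to a finite conjunction of LK-derivability statements, each structural rule becomes a finitary implication between such conjunctions, and the whole argument stays at the propositional level without any induction on I/O derivations.
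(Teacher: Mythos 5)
Your proposal is correct and takes essentially the same route as the paper's proof: the paper also derives all three rules from the characterization lemma (Lemma~\ref{lem:char_2}), handling \rulename{\weaklabel} and \rulename{\contractlabel} by LK weakening/contraction applied to restricted (resp.\ duplicated) partitions, and \rulename{\cutlabel} by instantiating the second premise's characterization at \emph{both} placements of the cut pair's index, instantiating the first premise at the induced partition, and then closing the case analysis with LK weakening and an LK cut on $C$ or on $Z$. No gaps to report.
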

\begin{proof}
By Lem.~\ref{lem:char_2} we can reduce the admissibility of these structural rules to the admissibility of weakening, contraction, and cut in $LK$. Consider the case \rulename{\cutlabel}. Let $G = {\{(D_1, W_1) \dots (D_m, W_m) \}}$, $G' = \{(A_1, X_1), $ $ \dots $ $ (A_n, X_n) \}$ and $(A_{n+1}, X_{n+1}) = (C, Z)$. Now $\ioseq{G, G'}{B}{Y}$ is derivable in \iocalc{2} iff for any  $(I_1, J_1) \in {\partset(\{1,\dots,n\})}$ and $(I_2, J_2) \in {\partset(\{1,\dots,m\})}$ either ${B \seqlk \{A_i\}_{i \in I_1}, \{D_i\}_{i \in I_2}}$  or ${\{X_j\}_{j \in J_1}, \{W_j\}_{j \in J_2} \seqlk Y}$ is derivable in LK.
It is tedious but easy to see that this holds by applying Lem.~\ref{lem:char_2} to the hypotheses
$\ioseq{(C,Z), G}{B}{Y}$ and $\ioseq{G'}{C}{Z}$, and using the structural rules of LK.
(Full proof in \appendixword~\ref{sec:ap_proofs}).
\end{proof}

\begin{theorem}[Soundness and completeness of \iocalc{2}]
\label{th:sound_comp_2}
${G \vdash (B,Y)}$ 
is derivable in \iocalc{2} 
iff $(B, Y)$ is derivable from the pairs in $G$ in \ioslogic{2}.
\end{theorem}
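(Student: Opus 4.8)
The plan is to prove the two implications separately, using the characterization \lemmaword~\ref{lem:char_2} and the structural admissibility \lemmaword~\ref{lem:struct_admissibility_2} as the main tools, so that almost everything reduces to elementary LK facts together with \lemmaword~\ref{lemma:concluding}. For the \textbf{soundness} direction (derivability in \iocalc{2} implies derivability from $G$ in \ioslogic{2}) I would induct on the derivation in \iocalc{2}, assuming it to be in I/O normal form (justified by the observation preceding \lemmaword~\ref{lem:char_2}), so that the only I/O steps are the concluding rules and \rulename{\pairelimlabel{2}}, while the remaining steps are LK derivations supplying the classical entailments. The concluding rules are dispatched by \lemmaword~\ref{lemma:concluding}: if $B \seqlk$ (resp. $\seqlk Y$) is LK-derivable then $(B,Y)$ is already derivable in $\ioslogic{1}$, and hence in \ioslogic{2}, for any $G$. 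The only genuine computation is the soundness of \rulename{\pairelimlabel{2}}: assuming inductively that $(B \land \neg A, Y)$ and $(B, Y \lor \neg X)$ are \ioslogic{2}-derivable from $G$, and using that $(A,X)$ is an available deriving pair, I build the derivation as follows. From $(A,X)$ by \rulename{SI} (with $B \land A \seqcl A$) I get $(B \land A, X)$; from $(B, Y \lor \neg X)$ by \rulename{SI} (with $B \land A \seqcl B$) I get $(B \land A, Y \lor \neg X)$; \rulename{AND} yields $(B \land A, X \land (Y \lor \neg X))$, and since $X \land (Y \lor \neg X) \seqcl Y$, \rulename{WO} gives $(B \land A, Y)$. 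Applying \rulename{OR} to this and the premise $(B \land \neg A, Y)$ produces $((B \land A) \lor (B \land \neg A), Y)$, whence $(B,Y)$ follows by \rulename{SI}, as $B \seqcl (B \land A) \lor (B \land \neg A)$.

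For the \textbf{completeness} direction I would induct on the \ioslogic{2}-derivation of $(B,Y)$ from $G$, proving that for every pair $(C,Z)$ labeling a node the I/O sequent $\ioseq{G}{C}{Z}$ is derivable in \iocalc{2}; the root then gives the claim. The leaves $(\top,\top)$ and $(\bot,\bot)$ are immediate from $\rulename{\outaxlabel}$ and $\rulename{\inaxlabel}$; for a deriving pair $(A,X)\in G$ the sequent $\ioseq{(A,X)}{A}{X}$ meets the condition of \lemmaword~\ref{lem:char_2} (every partition puts index $1$ either on the left, giving $A \seqlk A$, or on the right, giving $X \seqlk X$), and \rulename{\weaklabel} from \lemmaword~\ref{lem:struct_admissibility_2} restores the full context $G$. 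For the internal nodes I verify, through \lemmaword~\ref{lem:char_2}, that the partition conditions for the premises entail those for the conclusion of each \ioslogic{2} rule: \rulename{WO} and \rulename{SI} follow by an LK cut with $X \seqlk Y$ resp. $B \seqlk A$; for \rulename{AND}, on any partition $(I,J)$ for which the shared left condition $A \seqlk \{A_i\}_{i\in I}$ fails, both premises force $\{X_j\}_{j\in J}\seqlk X_1$ and $\{X_j\}_{j\in J}\seqlk X_2$, so $\{X_j\}_{j\in J}\seqlk X_1\land X_2$ by $\land$-right; dually for \rulename{OR}, when the shared right condition $\{X_j\}_{j\in J}\seqlk X$ fails, both premises force $A_1\seqlk\{A_i\}_{i\in I}$ and $A_2\seqlk\{A_i\}_{i\in I}$, so $A_1\lor A_2\seqlk\{A_i\}_{i\in I}$ by $\lor$-left.

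I expect the \rulename{\pairelimlabel{2}} case of soundness to be the main obstacle, since it is the only point where an actual \ioslogic{2}-derivation must be synthesized by hand: pinning down the exact \rulename{SI}/\rulename{AND}/\rulename{WO}/\rulename{OR} chain, together with the supporting propositional facts $X\land(Y\lor\neg X)\seqcl Y$ and $B\seqcl (B\land A)\lor(B\land\neg A)$, is where the argument is most delicate. Everything else is bookkeeping once \lemmaword~\ref{lem:char_2} and \lemmaword~\ref{lem:struct_admissibility_2} are available; in particular the completeness direction never constructs an I/O derivation directly, because the characterization turns each \ioslogic{2} rule into an elementary LK manipulation over partitions.
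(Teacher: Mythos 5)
Your proof is correct. The soundness half is essentially the paper's own argument: your \ioslogic{2}-translation of \rulename{\pairelimlabel{2}} is the derivation of \figureword~\ref{fig:pair_elim_2_soundness_derivation} up to a trivial rearrangement (you apply \rulename{SI} to $(A,X)$ and postpone a single \rulename{WO} until after \rulename{AND}, using $X \wedge (Y \vee \neg X) \seqcl Y$, where the paper first weakens the output to $Y \vee X$; the concluding \rulename{OR}/\rulename{SI} steps coincide), and the concluding rules are dispatched via \lemmaword~\ref{lemma:concluding} in both; your restriction to I/O normal form is harmless but not actually needed for this direction. The completeness half, however, takes a genuinely different route. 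The paper constructs explicit \iocalc{2}-derivations for each rule of \ioslogic{2} (e.g.\ of $\ioseq{(B,X_1),(B,X_2)}{B}{X_1 \wedge X_2}$ for \rulename{AND}) and glues them onto the inductive hypotheses using the admissible structural rules \rulename{\cutlabel} and \rulename{\contractlabel} from \lemmaword~\ref{lem:struct_admissibility_2}. You instead verify derivability through \lemmaword~\ref{lem:char_2} directly: each I/O rule becomes a one-step LK manipulation of the partition conditions (an LK cut with $X \seqlk Y$ resp.\ $B \seqlk A$ for \rulename{WO}/\rulename{SI}, a $\wedge$-right for \rulename{AND}, a $\vee$-left for \rulename{OR}), with only \rulename{\weaklabel} needed for the base case $(A,X) \in G$, whose partition check is correct. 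Since the paper proves \rulename{\cutlabel} admissibility itself by exactly this kind of partition bookkeeping, your proof in effect inlines that argument and dispenses with I/O cut and contraction altogether, giving a shorter induction. What the paper's version buys in exchange is explicit sequent derivations witnessing each inductive step, a pattern it reuses verbatim for \iocalc{1}, \iocalc{3}, and \iocalc{4} and which supports extracting derivations as certificates; your semantic bookkeeping is equally rigorous but yields derivability only indirectly through the (constructive) characterization lemma. I see no gap in either direction.
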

\begin{proof}
{\em (Completeness)} 
Assume that $(B,Y)$ is derivable in \ioslogic{2}.
We prove by induction on the derivation tree that for each pair $(A, X)$ occurring in it, the I/O sequent $\ioseq{G}{A}{X}$ is derivable in \iocalc{2}. The case 
$(A, X) \in G$ is: 
\[ \scalebox{0.90}{\begin{prooftree}
    \hypo{A \wedge \neg A \seqlk}
    \infer1[\rulename{\inaxlabel}]{\ioseq{G'}{A \wedge \neg A}{X}}
    \hypo{\seqlk X \vee \neg X}
    \infer1[\rulename{\outaxlabel}]{\ioseq{G'}{A}{X \vee \neg X}}
    \infer2[\rulename{\pairelimlabel{2}}]{\ioseq{(A,X), G'}{A}{X}}
\end{prooftree}} \]
We show the case of \rulename{SI} ($B \models A$ iff $B \wedge \neg A \seqlk$):
\[ \scalebox{0.90}{\begin{prooftree}
    \hypo{\substack{\textit{by I.H.} \\ \vdots}}
    \infer1{\ioseq{G}{A}{Y}}
    \hypo{B \wedge \neg A \seqlk}
    \infer1[\rulename{\inaxlabel}]{\ioseq{}{B \wedge \neg A}{Y}}
    \hypo{\seqlk Y \vee \neg Y}
    \infer1[\rulename{\outaxlabel}]{\ioseq{}{B}{Y \vee \neg Y}}
    \infer2[\rulename{\pairelimlabel{2}}]{\ioseq{(A,Y)}{B}{Y}}
    \infer2[\rulename{\cutlabel}]{\ioseq{G}{B}{Y}}
\end{prooftree}}  \]
For \rulename{AND}: we derive ${(B, X_1), (B, X_2) \vdash (B, X_1 \land X_2)}$, and apply \rulename{\cutlabel} twice followed by many applications of contraction to the resulting derivation as follows
\[ \scalebox{0.85}{\begin{prooftree}
    \hypo{\substack{\textit{by I.H.} \\ \vdots}}
    \infer1{\ioseq{G}{B}{X_1}}
    \hypo{\substack{\textit{by I.H.} \\ \vdots}}
    \infer1{\ioseq{G}{B}{X_2}}
    \hypo{\ioseq{(B, X_1) \scomma (B, X_2)}{B}{X_1 \land X_2}}
    \infer2
    {\ioseq{G \scomma (B, X_1)}{B}{X_1 \land X_2}}
    \infer2
    {\ioseq{G, G}{B}{X_1 \land X_2}}
    \infer1[\rulename{\contractlabel} $\times n$]{\ioseq{G}{B}{X_1 \land X_2}}
\end{prooftree}}  \]

The claim follows by Lem~\ref{lem:struct_admissibility_2}.
(Full proof in Appendix~\ref{sec:ap_proofs}).

{\em Soundness} 
See Lem.~\ref{lemma:concluding} and 
 \figureword~\ref{fig:pair_elim_2_soundness_derivation} for the rule \rulename{\pairelimlabel{2}}.
\end{proof}

\begin{figure}
    \scalebox{0.90}{\begin{prooftree}
    \hypo{(A, X)}
    \infer1[\rulename{WO}]{(A, Y \vee X)}
    \infer1[\rulename{SI}]{(B \wedge A, Y \vee X)}
    \hypo{(B, Y \vee \neg X)}
    \infer1[\rulename{SI}]{(B \wedge A, Y \vee \neg X)}
    \infer2[\rulename{AND}]{(B \wedge A, (Y \vee X) \wedge (Y \vee \neg X))}
    \infer1[\rulename{WO}]{(B \wedge A, Y \vee (X \wedge \neg X))}
    \infer1[\rulename{WO}]{(B \wedge A, Y)}
    \hypo{\hspace{-18mm} (B \wedge \neg A, Y)}
    \infer2[\rulename{OR}]{((B \wedge A) \vee (B \wedge \neg A), Y)}
    \infer1[\rulename{SI}]{(B \wedge (A \vee \neg A), Y)}
    \infer1[\rulename{SI}]{(B, Y)}
    \end{prooftree}}

    \setlength{\belowcaptionskip}{-10pt}
    \caption{Derivation of the rule \rulename{\pairelimlabel{2}} in \ioslogic{2}}
    \label{fig:pair_elim_2_soundness_derivation}
\end{figure}
%


\subsection{Causal Production Inference \ioslogic{4}}
\label{sec:sequent_4}
The calculus \iocalc{4} for the causal version of reusable output logic \ioslogic{4} extends the core calculus (consisting of $LK$ with the the rules $(\inaxlabel)$ and $(\outaxlabel)$) with the pair elimination rule \rulename{\pairelimlabel{4}} in 
\figureword~\ref{fig:IO_pair_elimination_rules}.

Inspired by the normal modal logic embedding of \ioslogic{4} in~\cite{causal_logic},
the shape of the rule \rulename{\pairelimlabel{4}} requires to amend the statement of the characterization lemma. The proof of this lemma is similar to the one for \iocalc{2}
(See full proof in  Appendix~\ref{sec:ap_proofs}).

\begin{lemma}[Characterization lemma for \iocalc{4}]
\label{lem:char_4}
$\ioseq{(A_1, X_1), \dots, (A_n, X_n)}{B}{Y}$ is derivable in \iocalc{4} iff for all $(I, J)\in $ $\partset(\{1, \ldots ,n\})$, either $B, \{X_j\}_{j \in J} \seqlk
\{A_i\}_{i \in I}$ or ${\{X_j\}_{j \in J} \seqlk Y}$ is derivable in $LK$.
\end{lemma}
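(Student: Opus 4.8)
The plan is to mirror the proof of Lemma~\ref{lem:char_2} almost verbatim, adapting only the bookkeeping needed to account for the extra occurrence of $X$ in the antecedent of the second premise of \rulename{\pairelimlabel{4}}. I would proceed by induction on $n$, working throughout with derivations in I/O normal form, so that the last rule applied to a sequent with a nonempty set of deriving pairs is necessarily \rulename{\pairelimlabel{4}}. The base case $n=0$ is identical to before: the unique partition is $(\emptyset,\emptyset)$, and the condition ``$B \seqlk$ or $\seqlk Y$'' is exactly what the concluding rules \rulename{\inaxlabel} and \rulename{\outaxlabel} capture.

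For the inductive step from $n$ to $n+1$, I would set $G = \{(A_1,X_1),\dots,(A_{n+1},X_{n+1})\}$ and characterize when there is a normal-form derivation whose last rule is \rulename{\pairelimlabel{4}} eliminating some pair $(A_k,X_k)$. This yields the two premises $\ioseq{G'}{B \wedge \neg A_k}{Y}$ and $\ioseq{G'}{B \wedge X_k}{Y \vee \neg X_k}$, with $G' = G \setminus \{(A_k,X_k)\}$. Applying the inductive hypothesis to each premise, for every partition $(I',J') \in \partset(\{1,\dots,n+1\}\setminus\{k\})$ I obtain: from the first premise, either (a1) $B \wedge \neg A_k, \{X_j\}_{j\in J'} \seqlk \{A_i\}_{i\in I'}$ or (a2) $\{X_j\}_{j\in J'} \seqlk Y$; and from the second premise, either (b1) $B \wedge X_k, \{X_j\}_{j\in J'} \seqlk \{A_i\}_{i\in I'}$ or (b2) $\{X_j\}_{j\in J'}, X_k \seqlk Y \vee \neg X_k$.

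The key manipulations, carried out using the invertibility/equivalences of LK, are then: (a1) is equivalent to (a1)$'$ $B, \{X_j\}_{j\in J'} \seqlk \{A_i\}_{i\in I'\cup\{k\}}$, by moving $\neg A_k$ across the turnstile; and (b2) $\{X_j\}_{j\in J'}, X_k \seqlk Y \vee \neg X_k$ is always derivable in LK (the formula $\neg X_k$ already appears on the right with $X_k$ on the left), so (b2) is vacuously satisfied. Consequently, for the partition $(I'\cup\{k\}, J')$ the required disjunction is supplied by the pair (a1)$'$, (a2), matching the characterization condition $B, \{X_j\}_{j\in J'} \seqlk \{A_i\}_{i\in I'\cup\{k\}}$ or $\{X_j\}_{j\in J'} \seqlk Y$. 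For the partition $(I', J'\cup\{k\})$, the condition demands $B, \{X_j\}_{j\in J'}, X_k \seqlk \{A_i\}_{i\in I'}$ or $\{X_j\}_{j\in J'}, X_k \seqlk Y$, and here I use (b1) together with (a2): (b1) is exactly $B, X_k, \{X_j\}_{j\in J'} \seqlk \{A_i\}_{i\in I'}$, the first disjunct, while (a2) gives the second disjunct (weakening $X_k$ into the antecedent). As $k$ ranges and $(I',J')$ ranges, these two families of partitions exactly exhaust $\partset(\{1,\dots,n+1\})$, establishing both directions.

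The main obstacle, and the point requiring the most care, is verifying that the disjuncts recombine correctly for \rulename{\pairelimlabel{4}}: unlike the \iocalc{2} case, where (b2) transformed into (b2)$'$ and genuinely contributed, here (b2) collapses to a tautology and the ``$X_j$ on the left'' pattern must be tracked consistently so that the asymmetry between the input side (which accumulates the $X_j$ for $j \in J$) and the output side is respected in both the forward and backward directions. I expect the forward direction (derivability implies the LK condition) to be the mechanical one, while the backward direction requires checking that, given the LK condition for all partitions of $\{1,\dots,n+1\}$, one can split it into the two inductive-hypothesis conditions for a well-chosen eliminated pair $(A_k,X_k)$; this is the step where one must confirm that no partition is lost or double-counted.
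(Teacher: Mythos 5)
There is a genuine error at the heart of your recombination step. Your claim that (b2), $\{X_j\}_{j\in J'}, X_k \seqlk Y \vee \neg X_k$, ``is always derivable in LK'' is false: a sequent with $X_k$ on the left and $\neg X_k$ on the right is \emph{not} an axiom pattern --- by the negation rules it is equivalent to deleting $\neg X_k$, i.e.\ to $\{X_j\}_{j\in J'}, X_k \seqlk Y$, which is certainly no tautology (take all $X_j$ and $X_k$ true, $Y$ false). Derivability would require the same formula on both sides of the turnstile. (Incidentally, the inductive hypothesis applied to the second premise $\ioseq{G'}{B \wedge X_k}{Y \vee \neg X_k}$ actually yields $\{X_j\}_{j\in J'} \seqlk Y \vee \neg X_k$, without the extra $X_k$ on the left --- the goal input never enters the output-side sequents --- though your variant happens to be logically equivalent to the correct one.) Far from being vacuous, (b2) is precisely the ingredient you need: moved across the turnstile it reads $\{X_j\}_{j\in J'\cup\{k\}} \seqlk Y$, the second disjunct of the lemma's condition for the partition $(I', J'\cup\{k\})$.

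Because you discard (b2), you source that second disjunct from (a2) instead, which cross-wires the disjunctions coming from the two premises and leaves a case uncovered in the forward direction: if the first premise satisfies only (a1) and the second only (b2), your argument produces neither disjunct for $(I', J'\cup\{k\})$, even though the condition does hold there via (b2). Worse, in the backward direction the vacuity claim makes the inductive-hypothesis condition for the second premise trivially satisfied, wrongly certifying $\ioseq{G'}{B\wedge X_k}{Y \vee \neg X_k}$ as always derivable (already false for $G' = \emptyset$, e.g.\ $B = X_k = p$, $Y = q$). The paper's proof keeps the two premises decoupled, exactly as your closing worry suggests one must: after rewriting, (a1)/(a2) become verbatim the condition for the partitions $(I'\cup\{k\}, J')$, while (b1) $\equiv B, \{X_j\}_{j\in J'\cup\{k\}} \seqlk \{A_i\}_{i\in I'}$ and (b2) $\equiv \{X_j\}_{j\in J'\cup\{k\}} \seqlk Y$ become verbatim the condition for $(I', J'\cup\{k\})$. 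The point the paper emphasizes is that the two modifications --- the $X_k$ conjoined to the goal input in the second premise of \rulename{\pairelimlabel{4}}, and the $\{X_j\}$ appearing on the left of the first sequent in the statement --- complement each other, so $X_k$ migrates cleanly into $\{X_j\}_{j\in J'\cup\{k\}}$ in both sequents. With that repair your induction goes through and is otherwise identical to the paper's.
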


%

\begin{theorem}[Soundness and completeness of \iocalc{4}]
\label{th:sound_comp_4}
$\ioseq{G}{B}{Y}$ is derivable in \iocalc{4} iff the pair $(B, Y)$ is derivable from the pairs in $G$ in \ioslogic{4}.
\end{theorem}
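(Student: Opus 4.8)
The plan is to follow the two-part structure of Theorem~\ref{th:sound_comp_2}, reusing the characterization Lemma~\ref{lem:char_4} and the concluding rules, and adapting each step to account for the reusability rule \rulename{CT} that distinguishes \ioslogic{4} from \ioslogic{2}. As a preliminary I would establish the \iocalc{4}-analog of Lemma~\ref{lem:struct_admissibility_2}: the rules \rulename{\weaklabel}, \rulename{\contractlabel}, and \rulename{\cutlabel} are admissible in \iocalc{4}. The argument is verbatim that of Lemma~\ref{lem:struct_admissibility_2}, only invoking Lemma~\ref{lem:char_4} in place of Lemma~\ref{lem:char_2}: derivability of $\ioseq{G, G'}{B}{Y}$ reduces to an LK condition ranging over partitions, and the three structural rules are discharged by the corresponding structural rules of LK (now with the $\{X_j\}_{j \in J}$ appearing on the left of the first disjunct as well).

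For soundness I would show that each rule of \iocalc{4} is derivable in \ioslogic{4}. The concluding rules \rulename{\inaxlabel} and \rulename{\outaxlabel} are handled by Lemma~\ref{lemma:concluding} (they are already derivable in \ioslogic{1}, hence in \ioslogic{4}), and LK is sound in the usual way. The only genuinely new work is an \ioslogic{4}-derivation of the pair-elimination rule \rulename{\pairelimlabel{4}}, the analog of Figure~\ref{fig:pair_elim_2_soundness_derivation}. Here the second premise is $\ioseq{G}{B \land X}{Y \lor \neg X}$ rather than $\ioseq{G}{B}{Y \lor \neg X}$, so the extra $X$ in the input must be absorbed using \rulename{CT}: from the eliminated pair $(A,X)$ I get $(B \land A, X)$ by \rulename{SI}, from the second premise I get $(B \land A \land X, Y \lor \neg X)$ by \rulename{SI}, and \rulename{CT} then yields $(B \land A, Y \lor \neg X)$. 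From this pair the derivation proceeds exactly as in Figure~\ref{fig:pair_elim_2_soundness_derivation}: combining with $(B \land A, Y \lor X)$ (obtained from $(A,X)$ by \rulename{WO} and \rulename{SI}) via \rulename{AND} and \rulename{WO} gives $(B \land A, Y)$, and an \rulename{OR} with the first premise $(B \land \neg A, Y)$ followed by two \rulename{SI} steps yields the conclusion $(B, Y)$.

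For completeness I would argue by induction on the \ioslogic{4}-derivation tree, proving that every pair $(A,X)$ occurring in it yields a derivable I/O sequent $\ioseq{G}{A}{X}$ in \iocalc{4}. The cases for $(A,X) \in G$ and for \rulename{SI}, \rulename{WO}, \rulename{AND}, \rulename{OR} are as in the proof of Theorem~\ref{th:sound_comp_2} (relying on the admissible structural rules above), and \rulename{TOP}, \rulename{BOT} follow from the concluding rules. The new case is \rulename{CT}: it suffices to check that the base sequent $\ioseq{(A, X), (A \land X, Y)}{A}{Y}$ is derivable in \iocalc{4} and then to combine it with the inductive hypotheses via \rulename{\cutlabel} and \rulename{\contractlabel}, exactly as in the \rulename{AND} case. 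Derivability of this base sequent follows from Lemma~\ref{lem:char_4}: over the four partitions of $\{1,2\}$ every case reduces to a valid LK sequent, and the critical partition $(\{2\},\{1\})$ requires $A, X \seqlk A \land X$, which holds precisely because Lemma~\ref{lem:char_4} places the output $X$ on the left --- the very feature absent from Lemma~\ref{lem:char_2}, which correctly makes \rulename{CT} underivable in \iocalc{2}.

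I expect the main obstacle to be the soundness direction, specifically the explicit \ioslogic{4}-derivation of \rulename{\pairelimlabel{4}}: one must thread the reusability of the eliminated output $X$ through a single application of \rulename{CT} and then realign the remaining steps (\rulename{SI}, \rulename{WO}, \rulename{AND}, \rulename{OR}) so that the case split on $A$ versus $\neg A$ closes correctly. The bookkeeping of the boolean side-formulae ($Y \lor X$ versus $Y \lor \neg X$, collapsing $X \land \neg X$) is routine but error-prone, and it is the one place where the construction genuinely departs from the \iocalc{2} case rather than merely re-instantiating Lemma~\ref{lem:char_4}.
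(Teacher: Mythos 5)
Your proposal is correct and takes essentially the same route as the paper: admissibility of the structural rules in \iocalc{4} via \lemmaword~\ref{lem:char_4}, soundness via exactly the paper's \rulename{CT}-based replacement of the subderivation of $(B \wedge A, Y \vee \neg X)$ in \figureword~\ref{fig:pair_elim_2_soundness_derivation}, and completeness by deriving the base sequent $\ioseq{(A,X), (A \wedge X, Y)}{A}{Y}$ and combining it with the inductive hypotheses via \rulename{\cutlabel} and \rulename{\contractlabel}. The only (harmless) divergence is that you certify this base sequent through the characterization lemma --- your partition check, including the critical $(\{2\},\{1\})$ case requiring $A, X \seqlk A \wedge X$, is accurate --- whereas the paper exhibits an explicit \iocalc{4} derivation with two applications of \rulename{\pairelimlabel{4}}; likewise, your reuse of the Theorem~\ref{th:sound_comp_2} completeness cases implicitly needs the easy observation (the paper's Lemma that \iocalc{2}-derivability implies \iocalc{4}-derivability) that those derivations survive the strengthened goal input in \rulename{\pairelimlabel{4}}'s second premise.
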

\begin{proof}
{\em (Completeness)}
To derive \rulename{CT} in \iocalc{4} we first derive $\ioseq{(A, X), (A \wedge X, Y)}{A}{Y}$, and then apply \rulename{\cutlabel} and \rulename{\contractlabel} (\figureword~\ref{fig:IO_structural_rules}).
The claim follows by the admissibility of these
structural rules in \iocalc{4}, which can be reduced to the admissibility of the structural rules in LK as for \iocalc{2}.
(Full proof in Appendix~\ref{sec:ap_proofs}).

{\em (Soundness)}
Replace in \figureword~\ref{fig:pair_elim_2_soundness_derivation}
 the subtree that derives $(B \wedge A, Y \vee \neg X)$ from the pair $(B, Y \vee \neg X)$ by the following derivation, which uses the rule \rulename{CT}
\[ \scalebox{0.90}{\begin{prooftree}
    \hypo{(A, X)}
    \infer1[\rulename{SI}]{(B \wedge A, X)}
    \hypo{(B \wedge X, Y \vee \neg X)}
    \infer1[\rulename{SI}]{(B \wedge A \wedge X, Y \vee \neg X)}
    \infer2[\rulename{CT}]{(B \wedge A, Y \vee \neg X)}
\end{prooftree}} \]
\end{proof}

%
\subsection{Production Inference \ioslogic{1} and
Regular Production Inference
\ioslogic{3}}
\label{sec:sequent_13}
%

The calculi \iocalc{1}  and \iocalc{3} for the causal simple-minded output \ioslogic{1} and simple-minded reusable output \ioslogic{3} consist of $LK$ with $(\inaxlabel)$ and $(\outaxlabel)$ extended with the pair elimination rules \rulename{\pairelimlabel{1}} and \rulename{\pairelimlabel{3}} in 
\figureword~\ref{fig:IO_pair_elimination_rules}, respectively.

\begin{remark}
Unlike \ioslogic{2} and \ioslogic{4}, there is no modal embedding in the literature to provide guidance for the development of the pair elimination rules for \ioslogic{1} and \ioslogic{3}. 
These rules are instead designed by appropriately modifying \rulename{\pairelimlabel{2}} and \rulename{\pairelimlabel{4}}.
Indeed \ioslogic{1} and \ioslogic{3} impose restrictions on \ioslogic{2} and \ioslogic{4}, respectively, by prohibiting the combination of inputs. The rules \rulename{\pairelimlabel{1}} and \rulename{\pairelimlabel{3}} are defined by reflecting this limitation.
\end{remark}


Due to the $LK$ premise in their peculiar rules, derivations in \iocalc{1} and \iocalc{3} have a simpler form w.r.t. derivations in \iocalc{2} and \iocalc{4}; this form could be exploited for the soundness and completeness proof. We proceed instead as for the latter calculi by proving the characterization lemma. This lemma will be key to 
solve the open problems about computational bounds and modal embeddings 
for \ioslogic{1} and \ioslogic{3}. 
The proof of the lemma for \iocalc{1} and \iocalc{3} is less straightforward than for the other logics. The intuition here is that the characterization considers all possible ways to apply the rule \rulename{\pairelimlabel{3}} (or \rulename{\pairelimlabel{1}}), by partitioning the premises $(A_1, X_1), \dots, (A_n, X_n)$ into two disjoint sets ($I$ of remaining deriving pairs and $J$ of eliminated pairs).
We will focus on the lemma for \iocalc{3}, the one for \iocalc{1} being a simplified case (with a very similar proof). 
Its proof relies on the following result:

\begin{lemma}
\label{lem:semi_invert_3}
If $\ioseq{(A,X), G}{B}{Y}$ is derivable in \iocalc{3}, then so is $\ioseq{G}{B \wedge X}{Y \vee \neg X}$.
\end{lemma}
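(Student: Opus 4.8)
The plan is to induct on the height of the given \iocalc{3} derivation of $\ioseq{(A,X) \scomma G}{B}{Y}$ and inspect its last rule. Because the distinguished pair $(A,X)$ and all pairs of $G$ are still present at the root, that rule is either a concluding rule (\rulename{\inaxlabel} or \rulename{\outaxlabel}) or an instance of \rulename{\pairelimlabel{3}} eliminating some pair.

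The concluding cases are immediate. If the last rule is \rulename{\inaxlabel}, then $B \seqlk$ is $LK$-derivable, hence so is $B \land X \seqlk$ (strengthening the antecedent, since $B \land X \models B$), and \rulename{\inaxlabel} gives $\ioseq{G}{B \land X}{Y \lor \neg X}$. If it is \rulename{\outaxlabel}, then $\seqlk Y$ is $LK$-derivable, hence so is $\seqlk Y \lor \neg X$ (weakening the succedent), and \rulename{\outaxlabel} gives the same conclusion. If the last rule is \rulename{\pairelimlabel{3}} eliminating the distinguished pair $(A,X)$ itself, then its right premise is literally $\ioseq{G}{B \land X}{Y \lor \neg X}$, so there is nothing to prove.

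The one substantial case is \rulename{\pairelimlabel{3}} eliminating a pair of $G$, say $G = (A' , X') \scomma G'$. Its two premises are the $LK$ sequent $B \seqlk A'$ and the I/O sequent $\ioseq{(A,X) \scomma G'}{B \land X'}{Y \lor \neg X'}$, the latter with a strictly shorter derivation. I apply the induction hypothesis to this I/O premise, obtaining a derivation of $\ioseq{G'}{(B \land X') \land X}{(Y \lor \neg X') \lor \neg X}$. I then rebuild the target $\ioseq{(A' , X') \scomma G'}{B \land X}{Y \lor \neg X}$ by one application of \rulename{\pairelimlabel{3}} that eliminates $(A',X')$: its left premise $B \land X \seqlk A'$ follows from $B \seqlk A'$ by strengthening the antecedent, and its right premise is $\ioseq{G'}{(B \land X) \land X'}{(Y \lor \neg X) \lor \neg X'}$.

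The crux --- and the only place needing care --- is that the induction hypothesis delivers the goal input $(B \land X') \land X$ and goal output $(Y \lor \neg X') \lor \neg X$, whereas the premise demanded by \rulename{\pairelimlabel{3}} carries the reassociated $(B \land X) \land X'$ and $(Y \lor \neg X) \lor \neg X'$; these agree only up to $LK$-provable equivalence. To close this gap I would first isolate a routine admissibility fact: derivability in \iocalc{3} is preserved when the goal input and goal output are replaced by $LK$-equivalent formulae. This is proved by a parallel induction on the derivation, the point being that in every rule the side conditions ($B \seqlk$, $\seqlk Y$, $B \seqlk A'$) and the goal components passed to the premises depend on $B$ and $Y$ only through $LK$-entailment, so equivalence propagates. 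With this replacement lemma in hand the reconstruction above goes through, completing the induction. (The companion statement for \iocalc{1} is the degenerate case in which $X'$ is not reused on the left: the input stays $B$ throughout and the left premise needs no strengthening, so the same argument applies verbatim.)
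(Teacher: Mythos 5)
Your proof is correct and takes essentially the same route as the paper's: an induction on the derivation with a case split on the last rule, where the concluding and self-elimination cases are immediate and the substantial case re-applies \rulename{\pairelimlabel{3}} to the eliminated pair of $G$ after invoking the induction hypothesis (with $B \wedge X \seqlk A'$ obtained by strengthening the antecedent). The associativity/commutativity mismatch you carefully patch via a replacement-under-$LK$-equivalence admissibility lemma is silently glossed over in the paper, which writes the premise as $\ioseq{G'}{B \wedge X' \wedge X}{Y \vee \neg X' \vee \neg X}$ without parenthesization; your explicit lemma (correctly proved by a parallel induction, and avoiding circularity with the characterization lemma) is a sound but inessential refinement of the same argument.
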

\begin{proof}
Easy induction on the length of the derivation. We proceed by case distinction on the last applied rule. 
(Full proof
in Appendix~\ref{sec:ap_proofs}).
\end{proof}

\begin{lemma}[Characterization lemma for \iocalc{3}]
\label{lem:char_3}
$\ioseq{(A_1, X_1), \dots, (A_n, X_n)}{B}{Y}$ is derivable in \iocalc{3} iff for all 
$(I, J) \in \partset(\{1, \dots , n\})$, 
one of the following holds:
\begin{itemize}
\item ${B, \{X_j\}_{j \in J} \seqlk A_i}$ is derivable in LK for some $i \in I$,
\item ${B, \{X_j\}_{j \in J} \seqlk}$ is derivable in LK,
\item ${\{X_j\}_{j \in J} \seqlk Y}$ is derivable in LK.
\end{itemize}
\end{lemma}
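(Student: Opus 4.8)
The plan is to prove the characterization lemma for \iocalc{3} by induction on $n$, mirroring the structure of the proof of Lem.~\ref{lem:char_2} but accounting for the asymmetric shape of the rule \rulename{\pairelimlabel{3}}, whose left premise is an $LK$ sequent ${B \seqlk A}$ rather than an I/O sequent. The base case $n=0$ is immediate: the only partition is $(\emptyset,\emptyset)$, and derivability of $\ioseq{}{B}{Y}$ via the concluding rules corresponds exactly to the derivability of ${B \seqlk}$ (second bullet) or ${\seqlk Y}$ (third bullet). The inductive step from $n$ to $n+1$ is where the real work lies.

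For the inductive step I would first use Lem.~\ref{lem:semi_invert_3} to reduce to the I/O normal form: since the concluding rules can be pushed up past \rulename{\pairelimlabel{3}}, I may assume the last applied rule is \rulename{\pairelimlabel{3}}, eliminating some pair $(A_k, X_k)$. An application of \rulename{\pairelimlabel{3}} to $\ioseq{G}{B}{Y}$ yields two premises: the $LK$ sequent ${B \seqlk A_k}$ and the I/O sequent $\ioseq{G'}{B \wedge X_k}{Y \vee \neg X_k}$ with $G' = G \setminus \{(A_k,X_k)\}$. The subtlety, compared with \iocalc{2}, is that here derivability does not factor cleanly into a conjunction over the two premises of a single rule application: because the left premise is just an $LK$ sequent, the pair $(A_k,X_k)$ can only be eliminated when ${B \seqlk A_k}$ is derivable. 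So the overall sequent is derivable iff \emph{there exists} some $k$ with ${B \seqlk A_k}$ derivable in $LK$ such that the I/O premise $\ioseq{G'}{B \wedge X_k}{Y \vee \neg X_k}$ is derivable — and this is exactly where Lem.~\ref{lem:semi_invert_3} guarantees that choosing a derivable $k$ does not lose generality. I would then apply the inductive hypothesis to $\ioseq{G'}{B \wedge X_k}{Y \vee \neg X_k}$, obtaining for each partition $(I', J') \in \partset(\{1,\dots,n+1\}\setminus\{k\})$ one of three $LK$-derivability conditions, and show these match the stated three conditions for the two extended partitions $(I' \cup \{k\}, J')$ and $(I', J' \cup \{k\})$ of $\{1,\dots,n+1\}$.

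The bookkeeping proceeds by the same rewritings as in Lem.~\ref{lem:char_2}, now tracking three disjuncts instead of two. For the partition placing $k$ in $J$, the premise's conditions ${B \wedge X_k, \{X_j\}_{j \in J'} \seqlk A_i}$, ${B \wedge X_k, \{X_j\}_{j \in J'} \seqlk}$, and ${\{X_j\}_{j \in J'} \seqlk Y \vee \neg X_k}$ are $LK$-equivalent (by shuffling $X_k$ and $\neg X_k$ across the turnstile, using weakening, contraction, and cut in $LK$) to precisely the three stated conditions for $(I', J' \cup \{k\})$. For the partition placing $k$ in $I$, I combine the premise's conditions with the side condition ${B \seqlk A_k}$: the key observation is that ${B \seqlk A_k}$ together with the first-bullet condition involving $X_k$ in the antecedent collapses, via cut in $LK$, to the first-bullet condition for $(I' \cup \{k\}, J')$ (where $A_k$ becomes the witnessing $A_i$ with $i = k \in I$).

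The main obstacle I anticipate is justifying the \emph{existential} nature of the rule choice in the normal form, i.e.\ arguing that the derivability of $\ioseq{G}{B}{Y}$ is equivalent to the disjunction over all choices of eliminable $k$ rather than to a conjunction. This is where Lem.~\ref{lem:semi_invert_3} is indispensable: semi-invertibility ensures that if any eliminable pair leads to a derivable premise then all partitions are accounted for, and conversely that failing to satisfy some partition cannot be rescued by a different elimination order. Carefully interleaving this existential/universal structure with the inductive hypothesis — so that a single underivable partition for the $(n+1)$-case traces back to an underivable premise for every admissible $k$ — is the delicate point, and I would take care to verify that the three-way case split remains exhaustive and mutually consistent across both extended partitions.
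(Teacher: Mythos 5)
There is a genuine gap in your inductive step: the reduction to ``I/O normal form'' is not available in \iocalc{3}. The permutation argument that justifies the normal form for \iocalc{2} (if a concluding rule applies to the conclusion of the elimination rule, it applies to both premises, so concluding rules can be postponed until all pairs are eliminated) breaks for \rulename{\pairelimlabel{3}}, because its left premise is the \emph{LK} sequent ${B \seqlk A}$, which need not be derivable even when a concluding rule applies. Concretely, $\ioseq{(p,q)}{r}{s \vee \neg s}$ is derivable only by \rulename{\outaxlabel}: no application of \rulename{\pairelimlabel{3}} is possible since ${r \seqlk p}$ is not derivable. So you may \emph{not} ``assume the last applied rule is \rulename{\pairelimlabel{3}}''; derivations ending in a concluding rule with deriving pairs still present are unavoidable, and this is exactly the structural feature that forces the three-disjunct, existential-over-$i \in I$ shape of the characterization. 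You also misattribute the normal form to \lemmaword~\ref{lem:semi_invert_3}, which states semi-invertibility of \rulename{\pairelimlabel{3}} (derivability of $\ioseq{(A,X),G}{B}{Y}$ yields derivability of $\ioseq{G}{B \wedge X}{Y \vee \neg X}$), not any permutation of concluding rules.

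The good news is that your induction is repairable, and after the repair it needs neither the normal form nor \lemmaword~\ref{lem:semi_invert_3}. Add the two missing cases for the last rule of the given derivation: if it is \rulename{\inaxlabel} (resp.\ \rulename{\outaxlabel}) with nonempty antecedent, then ${B \seqlk}$ (resp.\ ${\seqlk Y}$) is derivable, and LK weakening yields the second (resp.\ third) bullet for \emph{every} partition. If it is \rulename{\pairelimlabel{3}} eliminating $(A_k,X_k)$, partitions with $k \in J$ are handled by the inductive hypothesis via the rewritings you describe, while partitions with $k \in I$ follow simply by weakening the side condition ${B \seqlk A_k}$ to ${B, \{X_j\}_{j \in J} \seqlk A_k}$, taking $i = k$ as witness --- no LK cut is involved, contrary to your sketch, and your worry about the existential/universal interplay dissolves: a single derivation's last rule already accounts for all partitions. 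The converse direction works as you outline (pick a witness $i$ from the all-in-$I$ partition, or conclude immediately). Note that the paper takes a genuinely different route: for the left-to-right direction it fixes an arbitrary partition $(I,J)$, applies \lemmaword~\ref{lem:semi_invert_3} once per $j \in J$ to obtain a derivation of $\ioseq{\{(A_i,X_i)\}_{i \in I}}{B \wedge \bigwedge_{j \in J} X_j}{Y \vee \bigvee_{j \in J} \neg X_j}$, and reads the three bullets off the last rule of that derivation; for right-to-left it constructs the derivation stepwise, eliminating one witness pair at a time. Your repaired induction trades the semi-invertibility lemma for a longer case analysis; both are legitimate, but as written your proof hinges on a normalization step that fails.
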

\begin{proof}

$(\Rightarrow)$: 
Let $(I, J) \in \partset(\{1, \dots , n\})$ be any partition. 
By (several application of) \lemmaword~\ref{lem:semi_invert_3}
to each $(A_j, X_j)$ with $j \in J$, we get that the I/O sequent
$(\ast) \; \ioseq{\{ (A_i, X_i) \mid i \in I \}}{B \wedge \bigwedge_{j \in J} X_j}{Y \vee \bigvee_{j \in J} \neg X_j}$ is derivable in \iocalc{3}. 
We consider the last rule $(r)$ applied in the derivation of $(\ast)$.   Three cases can arise:
\begin{itemize}
    \item $(r) =$ \rulename{\pairelimlabel{3}} eliminating the deriving pair $(A_k, X_k)$ with $k \in I$; hence $B \wedge \bigwedge_{j \in J} X_j \seqlk A_k$ (and therefore $B, \{ X_j \}_{j \in J} \seqlk A_k$) 
    is derivable in LK,
    \item $(r) =$ \rulename{\inaxlabel} then $B, \{ X_j \}_{j \in J} \seqlk$ is derivable in LK.
    \item $(r) =$ \rulename{\outaxlabel} then 
    ${\{ X_j \}_{j \in J} \seqlk Y}$ is derivable in LK.
    
\end{itemize}

$(\Leftarrow)$: We stepwise construct a derivation in \iocalc{3} of
$\ioseq{(A_1, X_1), \dots, (A_n, X_n)}{B}{Y}$. We start with
$(I,J) = (\{1,\dots,n\}, \emptyset)$, and we distinguish the three cases 
from the assumption of the lemma:
(1) ${B, \{X_j\}_{j \in J} \seqlk A_i}$ for some $i \in I$,
(2) ${B, \{X_j\}_{j \in J} \seqlk}$, and
(3) ${\{X_j\}_{j \in J} \seqlk Y}$. If either (2) or (3) holds the derivation follows by applying a concluding rule ($J = \emptyset$). If (1) holds, we apply \rulename{\pairelimlabel{3}} bottom up getting
$\ioseq{\{(A_t, X_t)\}_{t \in \{1,\dots,n\} \setminus \{i\}}}{B \wedge X_i}{Y \vee \neg X_i}$ as the second premise. We now apply the same reasoning to 
this latter sequent
(considering the partition $(I, J) = (\{1,\dots,n\} \setminus \{i\}, \{i\})$), and keep applying it 
for the second premise of \rulename{\pairelimlabel{3}}, until a concluding rule is applied. This will eventually happen since $I$ loses the index of the eliminated pair at each step.
\end{proof}


\vspace{-2mm}
The characterization lemma for \iocalc{1} has the following formulation (the proof is similar to the proof for \iocalc{3}).
%

\begin{lemma}[Characterization lemma for \iocalc{1}]
\label{lem:char_1}
$\ioseq{(A_1, X_1), \dots, (A_n, X_n)}{B}{Y}$ is derivable in \iocalc{1} iff for all partitions $(I, J) \in \partset(\{1, \dots , n\})$, at least one of the following holds:
\begin{itemize}
\item ${B \seqlk A_i}$ is derivable in LK for some $i \in I$,
\item ${B \seqlk}$ is derivable in LK,
\item ${\{X_j\}_{j \in J} \seqlk Y}$ is derivable in LK.
\end{itemize}
\end{lemma}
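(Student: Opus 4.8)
The plan is to mirror the proof of the characterization lemma for \iocalc{3} (\lemmaword~\ref{lem:char_3}) step by step, exploiting the fact that \rulename{\pairelimlabel{1}} differs from \rulename{\pairelimlabel{3}} only in that it leaves the goal input untouched. Accordingly, the first ingredient I would establish is the semi-invertibility statement analogous to \lemmaword~\ref{lem:semi_invert_3}: if $\ioseq{(A, X) \scomma G}{B}{Y}$ is derivable in \iocalc{1}, then so is $\ioseq{G}{B}{Y \vee \neg X}$. Note that here the input stays $B$ (rather than becoming $B \wedge X$), which is exactly why the $\{X_j\}_{j \in J}$ never appear on the left of the turnstile in the first two conditions of the lemma. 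I would prove this by induction on the length of the derivation with a case distinction on the last applied rule, as for \iocalc{3}; the only mildly interesting case is when the last rule is \rulename{\pairelimlabel{1}} eliminating a pair of $G$ distinct from $(A, X)$, where the induction hypothesis is applied to the second premise and \rulename{\pairelimlabel{1}} is then reapplied.

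For the direction $(\Rightarrow)$, I would fix an arbitrary partition $(I, J) \in \partset(\{1, \dots, n\})$ and apply the semi-invertibility lemma once for each $j \in J$, obtaining a derivation of $\ioseq{\{(A_i, X_i) \mid i \in I\}}{B}{Y \vee \bigvee_{j \in J} \neg X_j}$. Since this sequent carries only the pairs indexed by $I$, its last I/O rule is one of three: \rulename{\pairelimlabel{1}} eliminating some $(A_k, X_k)$ with $k \in I$, whose LK-premise gives $B \seqlk A_k$ (first condition); \rulename{\inaxlabel}, giving $B \seqlk$ (second condition); or \rulename{\outaxlabel}, giving $\seqlk Y \vee \bigvee_{j \in J} \neg X_j$, a sequent LK-equivalent to $\{X_j\}_{j \in J} \seqlk Y$ (third condition).

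For $(\Leftarrow)$, I would build the derivation from the conclusion upwards, starting with $(I, J) = (\{1, \dots, n\}, \emptyset)$ and maintaining the invariant that the current goal output is $Y \vee \bigvee_{j \in J} \neg X_j$. At each stage I case on which of the three conditions holds for the current partition: conditions (2) and (3) let me close the branch with \rulename{\inaxlabel} or \rulename{\outaxlabel}, while condition (1) lets me apply \rulename{\pairelimlabel{1}} bottom-up, discharging its LK-premise $B \seqlk A_i$ directly and recursing on the second premise, which corresponds to the partition $(I \setminus \{i\}, J \cup \{i\})$. The transfer of hypotheses rests on the LK-equivalence between $\{X_j\}_{j \in J \cup \{i\}} \seqlk Y$ and $\{X_j\}_{j \in J} \seqlk Y \vee \neg X_i$, and the process terminates because $I$ loses one index at every elimination.

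I expect the bookkeeping of these LK-level equivalences --- shifting each $\neg X_j$ consistently across the turnstile in both directions --- to be the only place requiring care. Conceptually this case is genuinely simpler than \iocalc{3}: since \rulename{\pairelimlabel{1}} never enriches the goal input, the input $B$ accumulates no conjuncts, so there is no interaction between the eliminated inputs and the remaining goal input to track, which is precisely what makes the $\{X_j\}_{j \in J}$ drop out of the left-hand sides in the statement.
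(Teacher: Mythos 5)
Your proposal is correct and takes essentially the same route as the paper, which proves \lemmaword~\ref{lem:char_1} precisely by adapting the proof of \lemmaword~\ref{lem:char_3}: an \iocalc{1}-analogue of the semi-invertibility \lemmaword~\ref{lem:semi_invert_3} with the goal input left unchanged, then for $(\Rightarrow)$ the case analysis on the last rule applied after eliminating the pairs indexed by $J$, and for $(\Leftarrow)$ the same bottom-up construction terminating as $I$ shrinks. The LK-equivalence you flag between ${\{X_j\}_{j \in J} \seqlk Y}$ and ${\seqlk Y \vee \bigvee_{j \in J} \neg X_j}$ is exactly the bookkeeping the paper also relies on, so nothing is missing.
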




\begin{theorem}[Soundness and completeness of \iocalc{1} and \iocalc{3}]
\label{th:sound_comp_1_3}
$\ioseq{G}{B}{Y}$ is derivable in \iocalc{1} (\iocalc{3}) iff $(B, Y)$ is derivable from the pairs in $G$ in \ioslogic{1} (\iologic{3}).
\end{theorem}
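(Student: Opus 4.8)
The plan is to follow the template of \theoremword~\ref{th:sound_comp_2} and \theoremword~\ref{th:sound_comp_4}, using the characterization lemmas \lemmaword~\ref{lem:char_1} and \lemmaword~\ref{lem:char_3} as the central tool; I describe \iocalc{3}, the case of \iocalc{1} being the analogous simplification. The first step is to establish the admissibility of the structural I/O rules \rulename{\weaklabel}, \rulename{\contractlabel}, and \rulename{\cutlabel} of \figureword~\ref{fig:IO_structural_rules} in \iocalc{3}, exactly as in \lemmaword~\ref{lem:struct_admissibility_2}. By \lemmaword~\ref{lem:char_3}, derivability of an I/O sequent is equivalent to LK-derivability of a family of sequents indexed by the partitions in $\partset(\{1,\dots,n\})$, so admissibility of each structural I/O rule reduces to weakening, contraction, and cut in LK. For \rulename{\cutlabel} one writes out, for the conclusion $\ioseq{G, G'}{B}{Y}$, the three-case condition ranging over all pairs of partitions of the two multisets of pairs and checks it follows from the conditions supplied by \lemmaword~\ref{lem:char_3} applied to the two premises, using the LK structural rules.

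For soundness I would show that \rulename{\pairelimlabel{3}} is derivable in \ioslogic{3} and invoke \lemmaword~\ref{lemma:concluding} for the concluding rules. The key point is that the LK premise $B \seqlk A$ makes \rulename{OR} --- absent from \ioslogic{3} and \ioslogic{1} --- unnecessary: since $B \seqcl A$, the deriving pair $(A,X)$ yields $(B,X)$ by \rulename{SI}; combining $(B,X)$ with the pair $(B \wedge X, Y \vee \neg X)$ obtained from the second premise via \rulename{CT} gives $(B, Y \vee \neg X)$; and \rulename{AND} with $(B,X)$ followed by \rulename{WO} (noting $(Y \vee \neg X) \wedge X \seqcl Y$) yields $(B,Y)$. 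For \rulename{\pairelimlabel{1}} the same derivation works with the \rulename{CT} step removed, applying \rulename{AND} directly to $(B,X)$ and $(B, Y \vee \neg X)$.

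For completeness I would induct on the I/O-logic derivation tree, showing that for every pair $(A,X)$ appearing in it the sequent $\ioseq{G}{A}{X}$ is derivable in \iocalc{3}, and then close under the admissible structural rules. A deriving pair $(A,X) \in G$ is handled by one application of \rulename{\pairelimlabel{3}} (first premise $A \seqlk A$, second premise closed by \rulename{\outaxlabel} since $\seqlk X \vee \neg X$); \rulename{TOP} and \rulename{BOT} by the concluding rules; \rulename{WO} and \rulename{SI} by eliminating a single fresh pair and applying \rulename{\cutlabel}; and \rulename{AND} and \rulename{CT} by first deriving $\ioseq{(B, X_1) \scomma (B, X_2)}{B}{X_1 \wedge X_2}$, respectively $\ioseq{(A, X) \scomma (A \wedge X, Y)}{A}{Y}$, through repeated applications of \rulename{\pairelimlabel{3}} each closed by \rulename{\outaxlabel}, and then applying \rulename{\cutlabel} and \rulename{\contractlabel} as in the corresponding cases of the earlier theorems.

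I expect the main obstacle to be the cut-admissibility step: the three-way disjunction in \lemmaword~\ref{lem:char_3} --- in particular the extra ``input-inconsistent'' case $B, \{X_j\}_{j \in J} \seqlk$ --- makes the bookkeeping over pairs of partitions heavier than in the two-case situation of \lemmaword~\ref{lem:char_2}, and one has to check that this case is propagated correctly when the cut formula's partition is combined with those of the two premises. Everything else is routine once the structural rules are available.
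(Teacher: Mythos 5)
Your proposal is correct and follows essentially the same route as the paper: completeness via the characterization lemmas and admissibility of the structural I/O rules (reduced, partition by partition, to LK weakening, contraction, and cut, with exactly the extra bookkeeping you flag for the sequent ${B, \{X_j\}_{j \in J} \seqlk}$), and soundness via explicit \ioslogic{1}/\ioslogic{3} derivations of \rulename{\pairelimlabel{1}} and \rulename{\pairelimlabel{3}} that use the LK premise $B \seqlk A$ in place of the missing \rulename{OR}, together with \lemmaword~\ref{lemma:concluding} for the concluding rules. The only deviation is cosmetic: your soundness derivation applies \rulename{SI} at the outset to obtain $(B,X)$ directly, whereas the paper works with $B \wedge A$ throughout and applies \rulename{SI} only at the end (noting $B$ and $B \wedge A$ are classically equivalent under the side condition); the two variants are interchangeable.
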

\begin{proof}
({\em Completeness}) uses the admissibility proof of the structural rules in \figureword~\ref{fig:IO_structural_rules}
The case $\ioseq{(A,X)}{A}{X}$ is now handled as follows:
\[ \scalebox{0.90}{\begin{prooftree}
    \hypo{A \seqlk A}
    \hypo{\seqlk X \vee \neg X}
    \infer1[\rulename{\outaxlabel}]{\ioseq{}{A}{X \vee \neg X}}
    \infer2[\rulename{\pairelimlabel{1}}]{\ioseq{(A,X)}{A}{X}}
\end{prooftree}} \]
The derivation of the rule \rulename{SI} directly uses the premise $B \seqlk A$, instead of the rule \rulename{\inaxlabel}.
(Full proof
in \appendixword~\ref{sec:ap_proofs}).

{\em (Soundness)}: For \iocalc{1}, the derivation of  \rulename{\pairelimlabel{1}} is obtained by
 juxtaposing to the subderivation of $(B \wedge A, Y)$ 
 from $(A,X)$ and $(B,Y \vee \neg X)$ in Fig.~\ref{fig:pair_elim_2_soundness_derivation} the following  
 \vspace{-8pt}
 \[ \scalebox{0.90}{\begin{prooftree}
    \hypo{\vdots}
    \infer1{(B \wedge A, Y)}
    \infer1[\rulename{SI}]{(B, Y)}
\end{prooftree}}  \]
noticing that the side condition $B \seqlk A$ of the rule 
\rulename{\pairelimlabel{1}} implies that $B$ and $B \wedge A$ are classically equivalent.
The soundness of the rule \rulename{\pairelimlabel{3}} is very similar since the translation of the rule \rulename{\pairelimlabel{4}} from the proof of Theorem~\ref{th:sound_comp_4} also contains a sub-derivation of the pair $(B \wedge A, Y)$ from $(A,X)$ and $(B \wedge X, Y \vee \neg X)$ which uses only the rules of \iologic{3} (\rulename{WO}, \rulename{SI}, \rulename{AND} and \rulename{CT}).
\end{proof}

\section{Causal I/O Logics vs. Original I/O Logics}
\label{sec:io}
We establish a syntactic correspondence between derivability in the original I/O logics and in their causal version. This correspondence obtained utilizing the sequent calculi \iocalc{1}-\iocalc{4}, will enable to use them for \iologic{1}-\iologic{4}, and to transfer all results arising from the calculi for the causal I/O logics to the original I/O logics.

Note that  \iocalc{1}-\iocalc{4} rely on the axiom \rulename{BOT}, which is absent in the original I/O logics. 
An inspection of the soundness proofs for our calculi shows that \rulename{BOT} is solely employed in the translation of the rule \rulename{\inaxlabel} (Lemma~\ref{lemma:concluding}).
 Can we simply remove this rule and hence get rid of axiom \rulename{BOT}? Yes, but only for
 \iocalc{1} and \iocalc{3}, where, as evidenced by the completeness proof, \rulename{\inaxlabel} is used to derive \rulename{BOT} and not utilized elsewhere.
Hence, by removing the rule
\rulename{\inaxlabel} from \iocalc{1} and \iocalc{3} we get sequent calculi for \iologic{1} and
\iologic{3}. These calculi are close to
the sequent calculi inspired by conditional logics 
introduced in~\cite{IO_sequents}. The same does not hold for \iocalc{2} and \iocalc{4}, where \rulename{\inaxlabel} is needed, e.g., to derive \rulename{SI}.
Instead of developing ad hoc calculi to handle the original I/O logics, we leverage \iocalc{1}-\iocalc{4} 
using the following result: 

\begin{theorem}
\label{lem:original_causal_connection}
$(A_1, X_1), \dots, (A_n, X_n) \vdash_{\iologic{k}} (B, Y)$ 
iff $(A_1, X_1), \dots, (A_n, X_n) \vdash_{\ioslogic{k}} (B, Y)$ 
and ${X_1, \dots, X_n \seqcl Y}$ in classical logic, for each $k= 1, \dots , 4$.
\end{theorem}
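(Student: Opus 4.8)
The plan is to prove the two directions separately, writing $W := X_1 \wedge \dots \wedge X_n$ for the conjunction of all outputs of the deriving pairs (with $W = \top$ when $n=0$), and using throughout that the rule set of $\iologic{k}$ is contained in that of $\ioslogic{k}$, the only extra rule of the latter being \rulename{BOT}. For the forward direction I would first observe that every $\iologic{k}$-derivation is already an $\ioslogic{k}$-derivation, so $(A_1,X_1),\dots,(A_n,X_n)\vdash_{\iologic{k}}(B,Y)$ immediately gives derivability in $\ioslogic{k}$; it then remains to establish $X_1,\dots,X_n\seqcl Y$, i.e. $W\seqcl Y$. I would get this from a stronger invariant, proved by induction on the $\iologic{k}$-derivation: $W\seqcl Z$ holds for the output $Z$ of \emph{every} pair occurring in it. This is routine for each rule (leaves contribute some $X_i$ with $W\seqcl X_i$, or $\top$; \rulename{WO} uses transitivity of $\seqcl$; \rulename{SI} and \rulename{OR} leave the output unchanged; \rulename{AND} conjoins; \rulename{CT} inherits the invariant from its second premise). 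The essential point is that \rulename{BOT}, which alone could introduce an output $\bot$ not entailed by $W$, is absent from $\iologic{k}$.

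For the backward direction, assume $(A_1,X_1),\dots,(A_n,X_n)\vdash_{\ioslogic{k}}(B,Y)$ together with $W\seqcl Y$. The heart of the argument is a lemma $(\dagger)$, to be proved by induction on the $\ioslogic{k}$-derivation: for every pair $(C,Z)$ derivable from $G$ in $\ioslogic{k}$, the pair $(C, Z\vee W)$ is derivable from $G$ in $\iologic{k}$. Granting $(\dagger)$, I would apply it to $(B,Y)$ to get $G\vdash_{\iologic{k}}(B, Y\vee W)$; since $W\seqcl Y$ forces $Y\vee W\equiv Y$, a single use of \rulename{WO} yields the desired $(B,Y)$. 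Thus $(\dagger)$ is a uniform mechanism that ``spends'' the extra hypothesis $W\seqcl Y$ exactly once, at the root.

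In proving $(\dagger)$ I expect most cases to be mechanical. The \rulename{BOT} leaf $(\bot,\bot)$ is precisely where the extra rule must be eliminated, and here the target $(\bot,\bot\vee W)=(\bot,W)$ is derivable inside $\iologic{k}$ directly from $G$: apply \rulename{SI} to each $(A_i,X_i)$ (legitimate since $\bot\seqcl A_i$) to get $(\bot,X_i)$, then combine with \rulename{AND} (falling back on \rulename{TOP} and \rulename{SI} when $n=0$). The \rulename{TOP} leaf and the members of $G$ are immediate via \rulename{WO}; \rulename{WO} and \rulename{SI} transfer directly; \rulename{OR} works because both premises carry the same disjoined output $Z\vee W$; and \rulename{AND} works by the distributive equivalence $(Z_1\vee W)\wedge(Z_2\vee W)\equiv(Z_1\wedge Z_2)\vee W$.

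The main obstacle is the \rulename{CT} case, relevant for $k=3,4$. From premises $(A,Z)$ and $(A\wedge Z,Z_2)$ the induction supplies $(A,Z\vee W)$ and $(A\wedge Z,Z_2\vee W)$ in $\iologic{k}$, but applying \rulename{CT} to $(A,Z\vee W)$ demands a second premise with input $A\wedge(Z\vee W)$, whereas I only have input $A\wedge Z$; as $Z\seqcl Z\vee W$, this mismatch cannot be bridged by \rulename{SI} naively. The resolution is to exploit that $(A,Z)$ is itself $\ioslogic{k}$-derivable from $G$, so the characterization lemma (\lemmaword~\ref{lem:char_3}, resp.\ \lemmaword~\ref{lem:char_4}) applies: taking the partition $(I,J)=(\emptyset,\{1,\dots,n\})$ gives $A\wedge W\seqcl\bot$ or $W\seqcl Z$, and \emph{either} disjunct yields $A\wedge W\seqcl Z$, hence $A\wedge(Z\vee W)\equiv A\wedge Z$. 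Consequently \rulename{SI} does convert $(A\wedge Z,Z_2\vee W)$ into a pair with input $A\wedge(Z\vee W)$, and \rulename{CT} applied to $(A,Z\vee W)$ then delivers $(A,Z_2\vee W)$, closing the case. This step, feeding the characterization lemmas back into the correspondence, is the only genuinely non-routine point of the whole argument; for $k=1,2$ there is no \rulename{CT} and the induction goes through with the routine cases alone.
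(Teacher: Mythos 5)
Your proof is correct, but the backward direction takes a genuinely different route from the paper's. The paper never eliminates \rulename{BOT} from an arbitrary $\ioslogic{k}$-derivation: it first applies completeness of \iocalc{k} to obtain a sequent derivation, then runs the soundness translation back into I/O form, and exploits the fact that in the resulting \emph{normal-form} derivation \rulename{BOT} occurs only inside the translations of \rulename{\inaxlabel}, where the goal input $B'$ satisfies $B' \seqcl \bot$ and the goal output $Y'$ satisfies $Y \seqcl Y'$; each such subderivation is then replaced by the \rulename{SI}/\rulename{AND}/\rulename{WO} patch (essentially your \rulename{BOT}-leaf construction). You instead work directly on arbitrary I/O derivations via the strengthened invariant $(\dagger)$ (derive $(C, Z\vee W)$ in \iologic{k}), spending $W \seqcl Y$ once at the root --- exactly the task the paper's remark after the theorem flags as challenging for unrestricted derivations. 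Your diagnosis of the \rulename{CT} case as the sole obstruction is right, and your fix via \lemmaword~\ref{lem:char_3}/\lemmaword~\ref{lem:char_4} with the partition $(\emptyset,\{1,\dots,n\})$ is sound (it does presuppose the completeness theorems, since the characterization lemmas speak about \iocalc{k} while your induction hypothesis concerns \ioslogic{k}-derivability of the subderivation's root). What each approach buys: the paper's route is nearly effortless given the machinery already built, because the normal form localizes \rulename{BOT} syntactically; yours shows the normal form is dispensable and yields a standalone transformation on I/O derivations. In fact you could make it entirely independent of the sequent calculi: the only fact you extract from the characterization lemmas is $A \wedge W \seqcl Z$ for every pair $(A,Z)$ derivable from $G$ in \ioslogic{k}, and this admits a direct elementary induction on the I/O derivation (leaves and \rulename{BOT} are immediate, \rulename{SI}, \rulename{WO}, \rulename{AND}, \rulename{OR} are routine, and for \rulename{CT} one chains $A \wedge W \seqcl A \wedge Z \wedge W \seqcl Z_2$), which would make your argument a fully self-contained, purely I/O-theoretic proof of the equivalence.
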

\begin{proof}
$(\Rightarrow)$
Derivability in \iologic{k} implies derivability in the stronger logic \ioslogic{k}. The additional condition $X_1, \dots, X_n \seqcl Y$ can be proved by an easy induction on the length of the derivation in the original I/O logics: it is enough to check that for every rule if the outputs of all (pairs-)premises follow from ${X_1 \wedge \dots \wedge X_n}$, then so is the output of the (pair-)conclusion.

$(\Leftarrow)$
Consider our  sequent calculi \iocalc{1} - \iocalc{4}.
The translation constructed in their soundness theorem shows how to map a derivation of the I/O sequent $\ioseq{(A_1, X_1), \dots, (A_n, X_n)}{B}{Y}$ in \iocalc{k} 
into a I/O derivation of $(B,Y)$ from the pairs $\{(A_1, X_1), \dots, (A_n, X_n)\}$ using the rules of the logic $\ioslogic{k}$. As observed before, \rulename{BOT} appears in this transformed derivation only inside sub-derivations of the pairs $(B', Y')$ derived in \iocalc{k} by the rule $\rulename{\inaxlabel}$. We can replace every such sub-derivation with a derivation that does not contain the axiom \rulename{BOT} and uses only the rules of the weakest logic \iologic{1}. This latter derivation relies on the premise ${B' \seqlk}$ of the rule \rulename{\inaxlabel} (which implies $B' \seqcl \bot$), on the condition ${X_1 \land \dots \land X_n \seqcl Y}$ from the statement of the lemma and the fact that ${Y \seqcl Y'}$ since in all our calculi  the goal output in the premises of the elimination rules is the same or weaker than the goal output in the conclusion. The required derivation 
is 
the following:
\[
\centering
    \scalebox{0.90}{\begin{prooftree}
    \hypo{(A_1, X_1)}
    \infer1[\rulename{SI}]{(\bot, X_1)}
    \hypo{\dots}
    \hypo{(A_n, X_n)}
    \infer1[\rulename{SI}]{(\bot, X_n)}
    \infer3[\rulename{AND} $\times(n-1)$]{(\bot, X_1 \wedge \dots \wedge X_n)}
    \infer1[\rulename{WO}]{(\bot, Y')}
    \infer1[\rulename{SI}]{(B', Y')}
    \end{prooftree}}
\]
After the replacement of all indicated sub-derivation of $(B', Y')$ with the ones obove, we will get a derivation of $(B, Y)$ that does not use the axiom \rulename{BOT} and thus $(B, Y)$ is derivable in the original I/O calculus in~\cite{IO_original} for \iologic{k}.
\end{proof}
\begin{remark}
The constructive proof above heavily relies
on the restricted form of the I/O derivations resulting from translating our sequent derivations.
If at all possible, finding ways to eliminate the use of the \rulename{BOT} axiom in arbitrary I/O derivations within \ioslogic{k} would be a challenging task. 
The power of structural proof theory lies in its capacity to solely examine 
well-behaved derivations.
\end{remark}

\section{Applications}
Our proof-theoretic investigation is used here
to establish the following results for the original and the causal I/O logics:
uniform possible worlds semantics (Sec.~\ref{sec:semantics}), co-NP-completeness and automated deduction methods (Sec.~\ref{sec:AD}), and new embeddings into normal modal logics (Sec.~\ref{sec:modal_embeddings}).

\subsection{Possible Worlds Semantics}
\label{sec:semantics}
We provide possible worlds semantics for both the original and the causal I/O logics. Our semantics is a generalization of the bimodels semantics in~\cite{causality_for_nonmonotonic} for \ioslogic{2}; it turns out to be
simpler than them for the remaining causal logics, and than
the procedural semantics for the original I/O logics.
As we will see, this semantics facilitates clean and uniform solutions to various unresolved inquiries regarding I/O logics that were only partially addressable.

First, notice that a contrapositive reading of the characterization lemmas leads to countermodels for non-derivable statements in all considered causal I/O logics. These countermodels consist of (a partition and) several boolean interpretations (two for \ioslogic{2}, \ioslogic{4} and their causal versions, and $(n+2)$ for \ioslogic{1}, \ioslogic{3} and their causal versions) that falsify the $LK$ sequents from the respective lemma statement. 
We show below that a suitable generalization of these countermodels provides alternative semantic characterizations for both the original and the causal I/O logics.


A possible worlds semantics for the causal I/O logics was introduced 
by~\cite{causality_for_nonmonotonic} using \emph{bimodels}. For the simplest case of \ioslogic{2}, bimodel is a pair of worlds (here `world' can be seen as a synonym for boolean interpretation) corresponding to input and output states. 
\begin{definition}\cite{causality_for_nonmonotonic}
A pair $(A,X)$ is said to be \emph{valid} in a bimodel $(\inputworld, \outputworld)$ if $\inputworld \vDash A$ implies $\outputworld \vDash X$.
\end{definition}
The adequacy of this semantics implies, in particular, that $G \vdash_{\ioslogic{2}}(B, Y)$ 
if and only if the validity of all pairs from $G$ implies validity of $(B,Y)$ for all bimodels. 
The notion of bimodels for \ioslogic{1}, \ioslogic{3} and \ioslogic{4} is more complex, 
with
input and output states consisting of arbitrary deductively closed sets of formulae, instead of worlds.

To construct our semantics, 
we look at the countermodels provided by the characterization lemma from the point of view of the simplest bimodels of \ioslogic{2}. Lemma~\ref{lem:char_2} says indeed that if $\ioseq{(A_1, X_1), \dots, (A_n, X_n)}{B}{Y}$ is not derivable in \iocalc{2} there is a partition $(I, J)$, and two boolean interpretations $\inputworld$ and $\outputworld$ such that: $\inputworld$  falsifies the $LK$-sequent ${B \seqlk \{A_i\}_{i \in I}}$ (meaning that $\inputworld \vDash B$ and $\inputworld \nvDash A_i$ for all $i \in I$) and $\outputworld$  falsifies the $LK$-sequent ${\{X_j\}_{j \in J} \seqlk Y}$ (meaning that $\outputworld \vDash X_j$ for all $j \in J$ and $
\outputworld \nvDash Y$). These two interpretations lead to a bimodel that falsifies $\ioseq{(A_1, X_1), \dots, (A_n, X_n)}{B}{Y}$; indeed all pairs $(A_i, X_i)$ for $i \in I$ are valid in $(\inputworld, \outputworld)$ as $\inputworld \nvDash A_i$, all pairs $(A_j, X_j)$ for $j \in J$ are valid in $(\inputworld, \outputworld)$ as $\outputworld \vDash X_j$, but $(B, Y)$ is not valid in $(\inputworld, \outputworld)$ because $\inputworld \vDash B$ and $\outputworld \nvDash Y$.

Reasoning in a similar way about the
countermodels for \ioslogic{1} given by \lemmaword~\ref{lem:char_1}, 
we observe there are now multiple input worlds, each falsifying in $B \seqlk A_i$ 
the input $A_i$ (plus one additional input world that arises from the sequent $B \seqlk$). This leads to the following generalization of bimodels with multiple input worlds.

\begin{definition}
An \emph{I/O model} is a pair $(\inputworldset, \outputworld)$ where $\outputworld$ is the output world, and $\inputworldset$ is a set of input worlds.
\end{definition}

The definition of validity in an I/O model will be modified to require that the input formula is true in all input worlds, rather than just in the unique input world. This update ensures that the existence of a single input world falsifying $A$ is enough to establish the validity of the pair $(A, X)$. Moreover, the additional ability to reuse outputs as inputs in the logics \ioslogic{3} and \ioslogic{4} can be expressed in these models by the requirement that a triggered output $X$ should hold in the input worlds too. This leads to the following two definitions of validity in I/O models -- one for the logics \ioslogic{1} and \ioslogic{2}, the other for \ioslogic{3} and \ioslogic{4}.

\begin{definition}
\label{def:model_validity} $ $ 
\begin{itemize}
    \item An I/O pair $(A,X)$ is \emph{\otvalid} in an I/O model $(\inputworldset, \outputworld)$ if $(\forall \inputworld \in \inputworldset.\; \inputworld \vDash A)$ implies $\outputworld \vDash X$.
    \item An I/O pair $(A,X)$ is \emph{\tfvalid} in an I/O model $(\inputworldset, \outputworld)$ if $(\forall \inputworld \in \inputworldset.\; \inputworld \vDash A)$ implies ${(\forall w \in \{\outputworld\} \cup \inputworldset.\; w \vDash X)}$.
\end{itemize}
\end{definition}

When clear from the context, henceforth we will use the term {\em validity} to mean either \otvalidity (hence referring to the logics \iologic{1}, \iologic{2}, \ioslogic{1}, and \ioslogic{2}) or \tfvalidity (hence referring to \iologic{3}, \iologic{4}, \ioslogic{3}, and \ioslogic{4}). 


\begin{table}
\centering

\begin{tabular}{|c|c|c|}
\hline
Logic & Frame condition & Notion of validity \\
\hline
\ioslogic{1} & $|\inputworldset| \ge 1$ & \otvalidity \\ 
\ioslogic{2} & $|\inputworldset|$ = 1 & \otvalidity \\  
\ioslogic{3} & $|\inputworldset| \ge 1 $ & \tfvalidity \\ 
\ioslogic{4} & $|\inputworldset|$ = 1 & \tfvalidity \\  
\hline
\iologic{1} & no conditions & \otvalidity \\ 
\iologic{2} & $|\inputworldset| \le 1$ & \otvalidity \\  
\iologic{3} & no conditions & \tfvalidity \\ 
\iologic{4} & $|\inputworldset| \le 1$ & \tfvalidity \\  
\hline
\end{tabular}
\setlength{\belowcaptionskip}{-10pt}
\caption{Conditions on I/O models (size of the set $\inputworldset$ of input worlds) and corresponding notions of validity for I/O models.} 
\label{tab:IO_models_conditions}
\end{table}

\begin{proposition}[Semantics for \ioslogic{k}]
\label{prop:IO_models_sound_complete_causal}
$G \vdash_{\ioslogic{k}} (B, Y)$ iff for all I/O models (satisfying the conditions in~\tableword~\ref{tab:IO_models_conditions}) validity of all pairs from $G$ implies validity of $(B,Y)$. 
\end{proposition}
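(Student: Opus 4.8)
The plan is to reduce the claim to the characterization lemmas via soundness and completeness of the calculi, and then to match partitions with models. By \theoremword~\ref{th:sound_comp_2}, \ref{th:sound_comp_4} and~\ref{th:sound_comp_1_3}, $G \vdash_{\ioslogic{k}} (B,Y)$ holds iff the I/O sequent $\ioseq{G}{B}{Y}$ (with $G = (A_1,X_1) \scomma \dots \scomma (A_n,X_n)$) is derivable in \iocalc{k}; and by the characterization lemma for \iocalc{k} (\lemmaword~\ref{lem:char_2}, \ref{lem:char_4}, \ref{lem:char_3} or~\ref{lem:char_1}) this is equivalent to the assertion that, for every partition $(I,J) \in \partset(\{1,\dots,n\})$, at least one of the listed $LK$-sequents is derivable. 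Recalling that an $LK$-sequent is derivable iff the associated implication is classically valid, it remains to show that this partition condition is equivalent to semantic consequence over the I/O models prescribed by \tableword~\ref{tab:IO_models_conditions}. I would establish this equivalence contrapositively, by converting a \emph{failing} partition --- one for which every listed $LK$-sequent is falsifiable --- into a \emph{countermodel} validating all pairs of $G$ but not $(B,Y)$, and vice versa.

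For the direction from a failing partition $(I,J)$ to a countermodel, I would read off the witnessing interpretations. Non-derivability of $\{X_j\}_{j \in J} \seqlk Y$ yields an interpretation with $\outputworld \vDash X_j$ for all $j \in J$ and $\outputworld \nvDash Y$; this is the output world. For the input worlds, non-derivability of the left-hand clauses supplies interpretations falsifying the $A_i$: in the single-world logics \ioslogic{2} and \ioslogic{4} one witness of the single sequent (with right-hand side $\{A_i\}_{i\in I}$) gives the unique $\inputworld$, while in \ioslogic{1} and \ioslogic{3} each $A_i$ is refuted by its own input world and the additional clause with empty right-hand side provides one more input world, guaranteeing $|\inputworldset| \ge 1$. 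Setting $\inputworldset$ to this set of witnesses, I would check that every pair indexed by $I$ is valid because its input fails in some input world, every pair indexed by $J$ is valid because $X_j$ holds in every world where it is needed, and $(B,Y)$ fails because $B$ holds in all input worlds while $Y$ fails in $\outputworld$. In the reusable logics \ioslogic{3} and \ioslogic{4} (\tfvalidity) the left-hand clauses additionally carry $\{X_j\}_{j\in J}$, so the input witnesses also satisfy every such $X_j$; this is exactly what is required for reuse, namely that a triggered output hold in the input worlds as well.

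Conversely, given a countermodel $(\inputworldset, \outputworld)$ I would recover a failing partition by letting $J$ consist of the \emph{triggered} pairs, those whose input holds in all input worlds, and $I$ of the rest. Validity of a triggered pair forces $X_j$ to hold in $\outputworld$ (and, under \tfvalidity, in every input world), while for each $i \in I$ some input world refutes $A_i$; failure of $(B,Y)$ means $B$ holds in all input worlds whereas $Y$ fails in at least one relevant world. Substituting these facts into the clauses of the appropriate characterization lemma, I would verify that each clause is falsified: the $B\seqlk\{A_i\}_{i\in I}$-type clause by an input world, the empty-right-hand-side clause by any input world (available since $|\inputworldset|\ge 1$), and the $\{X_j\}_{j\in J}\seqlk Y$ clause by whichever world refutes $Y$.

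The main obstacle lies in carrying out the \tfvalidity cases uniformly. One must confirm that the world refuting $Y$ may be taken to be either $\outputworld$ or an input world --- both carrying all $X_j$ with $j \in J$ --- so that the output clause is genuinely falsified in either situation, and that the triggered outputs are jointly satisfiable with $B$ in each input world, so that no interaction between reuse and the chosen partition disrupts the construction. Finally, the difference in shape between \lemmaword~\ref{lem:char_2}/\ref{lem:char_4} and \lemmaword~\ref{lem:char_3}/\ref{lem:char_1} --- a single disjunctive right-hand side versus a per-$A_i$ clause plus an empty-right-hand-side clause --- corresponds precisely to the frame conditions $|\inputworldset| = 1$ and $|\inputworldset| \ge 1$, and verifying that this correspondence is exact in both directions is the delicate point.
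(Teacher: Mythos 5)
Your proposal is correct and follows essentially the same route as the paper's own proof: both reduce the claim to the characterization lemmas via the soundness/completeness theorems and then argue contrapositively, extracting the output world and one input world per clause (including the extra world from the empty-right-hand-side sequent for \ioslogic{1}/\ioslogic{3}) from a failing partition, and conversely defining $J$ as the set of triggered pairs to falsify each clause of the lemma. The subtleties you flag --- that under \tfvalidity the $Y$-refuting world may be either $\outputworld$ or an input world since all share $\{X_j\}_{j\in J}$, and that the frame conditions in \tableword~\ref{tab:IO_models_conditions} mirror the shape of the lemmas --- are exactly the points the paper's proof for \ioslogic{3} handles.
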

\begin{proof}
The proofs rely on the characterization lemmas for each of the four logics. We detail the proof for  \ioslogic{3}, the others being similar. The equivalence is proved by negating both statements.

$(\Leftarrow)$ Suppose $\ioseq{(A_1, X_1), \dots, (A_n, X_n)}{B}{Y}$ is not derivable in \iocalc{3} (and hence in \ioslogic{3}). We show the existence of a countermodel. By \lemmaword~\ref{lem:char_3} there exists a partition $(I, J)$ such that the sequents 
${\{X_j\}_{j \in J} \seqlk Y}$, ${B, \{X_j\}_{j \in J} \seqlk}$ and ${B, \{X_j\}_{j \in J} \seqlk A_i}$ for all $i \in I$ are not derivable in LK. By the soundness and completeness of $LK$ w.r.t. the classical truth tables semantics there exist:
\begin{itemize}
\item an interpretation $\outputworld$, s.t.~$\outputworld \vDash X_j$ $\; \forall j \in J$ and $\outputworld \nvDash Y$,
\item an interpretation $\inputworld_0$, s.t.~$\inputworld_0 \vDash B$ and $\inputworld_0 \vDash X_j$ $\; \forall j \in J$,
\item for every $i \in I$ an interpretation $\inputworld_i$, s.t. $\inputworld_i \vDash B$, $\inputworld_i \vDash X_j$ for all $j \in J$ and $\inputworld_i \nvDash A_i$.
\end{itemize}
Then the I/O model $M = {(\{\inputworld_0\} \cup \{\inputworld_i \mid i \in I\}, \outputworld)}$ (which has at least one input world) is a countermodel for the derivability of the I/O sequent $\ioseq{(A_1, X_1), \dots, (A_n, X_n)}{B}{Y}$. Indeed, every pair $(A_i, X_i)$ for $i \in I$ is 3-4-valid\ in $M$ since $A_i$ is not true in the input world $\inputworld_i$; every pair $(A_j, X_j)$ for $j \in J$ is 3-4-valid\ in $M$ since $X_j$ is true in all worlds of $M$; but the pair $(B, Y)$ is not 3-4-valid\, since $B$ is true in all input worlds, while $Y$ is not true in the output world $\outputworld$.

$(\Rightarrow)$ Let $M = (\inputworldset, \outputworld)$ (with $\inputworldset \neq \emptyset$) be a countermodel. I.e. $M$ 3-4-validates all pairs $\{(A_1, X_1),\dots,(A_n, X_n)\}$, but does not 3-4-validate the pair $(B, Y)$. We prove that the sequent $\ioseq{(A_1, X_1) \dots, (A_n, X_n)}{B}{Y}$ is not derivable in \ioslogic{3}. 
We show the existence of a partition $(I, J)$, such that none of the $LK$ sequents ${\{X_j\}_{j \in J} \seqlk Y}$, ${\{X_j\}_{j \in J} \seqlk Y}$, and ${B, \{X_j\}_{j \in J} \seqlk}$ is derivable. The claim follows then by \lemmaword~\ref{lem:char_3}.
For such partition, we chose $J$ as $\{ j \colon \forall \inputworld \in \inputworldset.\; \inputworld \vDash A_j \}$, and $I$ as the rest of the indices. Notice that $X_j$ is true in all worlds of $M$ for every $j \in J$ by definition of 3-4-validity of $(A_j, X_j)$. Also, the fact that $(B, Y)$ is not \tfvalid\ in $M$ means that $B$ holds in all input worlds, but there exists a world $w^{\ast} \in  \{\outputworld\} \cup \inputworldset$, s.t. $w^{\ast} \nvDash Y$. Then:
\vspace{-2mm}
\begin{itemize}
\item ${\{X_j\}_{j \in J} \seqlk Y}$ is not derivable in LK, because this sequent does not hold in the world $w^{\ast}$.
\item ${B, \{X_j\}_{j \in J} \seqlk}$ is not derivable in LK, because this sequent does not hold in any input world of $M$ (and there is at least one by the condition $\inputworldset \neq \emptyset$).
\item For any $i \in I$ there exists an input world $\inputworld_i$, s.t. ${\inputworld_i \nvDash A_i}$ (by the choice of $I$). Hence ${B, \{X_j\}_{j \in J} \seqlk A_i}$ is not derivable in $LK$, as this sequent does not hold in $\inputworld_i$. \qedhere
\end{itemize}
\end{proof}

Dropping the condition of having at least one input world leads to models for the original I/O logics.

\begin{proposition}[Semantics for \iologic{k}]
\label{prop:IO_models_sound_complete_original}
$G \vdash_{\iologic{k}} (B, Y)$
iff for all I/O models (satisfying the conditions in~\tableword~\ref{tab:IO_models_conditions})  validity of all pairs from $G$ implies validity of $(B,Y)$.
\end{proposition}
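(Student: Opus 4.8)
The plan is to establish Proposition~\ref{prop:IO_models_sound_complete_original} by mirroring the structure of the proof of Proposition~\ref{prop:IO_models_sound_complete_causal}, but leveraging the syntactic bridge provided by Theorem~\ref{lem:original_causal_connection}. The key observation is that derivability in $\iologic{k}$ differs from derivability in $\ioslogic{k}$ only by the extra classical condition $X_1, \dots, X_n \seqcl Y$, and on the semantic side the logics differ only by dropping the frame condition $|\inputworldset| \ge 1$ (so $\inputworldset = \emptyset$ becomes admissible). My aim is to show that allowing the empty-input-world model exactly captures the removal of the \rulename{BOT} axiom, i.e. exactly captures the additional constraint $X_1 \wedge \dots \wedge X_n \seqcl Y$.

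First I would fix $k$ and spell out what the empty model $(\emptyset, \outputworld)$ contributes. Under either notion of validity, the premise $\forall \inputworld \in \emptyset.\, \inputworld \vDash A$ is vacuously true, so a pair $(A,X)$ is valid in $(\emptyset, \outputworld)$ precisely when $\outputworld \vDash X$ (for \otvalidity) or when $X$ holds in all of $\{\outputworld\} \cup \emptyset = \{\outputworld\}$ (for \tfvalidity); in both cases the condition collapses to $\outputworld \vDash X$. Hence, in an empty-input model, all deriving pairs $(A_i, X_i)$ are valid iff $\outputworld \vDash X_i$ for every $i$, and the goal $(B,Y)$ is valid iff $\outputworld \vDash Y$. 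Therefore the empty model makes all premises valid while falsifying the conclusion exactly when there is a world satisfying $X_1, \dots, X_n$ but not $Y$ — that is, exactly when $X_1, \dots, X_n \seqcl Y$ \emph{fails}. This is the crux that ties the semantic side to the syntactic condition.

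With this in hand I would run the equivalence by contraposition, splitting into the nonempty and empty cases. For the $(\Leftarrow)$ direction, suppose $G \not\vdash_{\iologic{k}} (B,Y)$. By Theorem~\ref{lem:original_causal_connection}, either $G \not\vdash_{\ioslogic{k}} (B,Y)$, in which case the countermodel supplied by Proposition~\ref{prop:IO_models_sound_complete_causal} is already an I/O model (with $|\inputworldset| \ge 1$) satisfying the $\iologic{k}$ frame condition, since the relevant constraints in Table~\ref{tab:IO_models_conditions} differ only by also permitting $\inputworldset = \emptyset$; or $X_1, \dots, X_n \not\seqcl Y$, in which case the empty-input model of the previous paragraph is the required countermodel. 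For $(\Rightarrow)$, suppose some admissible I/O model $M$ validates all of $G$ but not $(B,Y)$. If $M$ has a nonempty input set, then $M$ is also an admissible countermodel for $\ioslogic{k}$, so Proposition~\ref{prop:IO_models_sound_complete_causal} gives $G \not\vdash_{\ioslogic{k}} (B,Y)$, whence $G \not\vdash_{\iologic{k}} (B,Y)$ by Theorem~\ref{lem:original_causal_connection}. If instead $M = (\emptyset, \outputworld)$, the paragraph above shows $X_1, \dots, X_n \not\seqcl Y$, so again Theorem~\ref{lem:original_causal_connection} yields $G \not\vdash_{\iologic{k}} (B,Y)$.

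The step I expect to require the most care is checking that the frame conditions of Table~\ref{tab:IO_models_conditions} line up correctly under the passage from $\ioslogic{k}$ to $\iologic{k}$, and in particular the bounded cases $\iologic{2}$ and $\iologic{4}$ where the condition is $|\inputworldset| \le 1$ rather than unrestricted. There I would verify that the nonempty subcase forces $|\inputworldset| = 1$, matching the $\ioslogic{2}$/$\ioslogic{4}$ condition exactly, while the empty subcase $|\inputworldset| = 0$ is precisely the newly admitted model handled by the $X_1 \wedge \dots \wedge X_n \seqcl Y$ analysis. Everything else is a routine transcription of the argument for the causal logics, so I would state the $\ioslogic{3}$/$\iologic{3}$ instance in detail and remark that the other three cases follow by the same reasoning with the appropriate notion of validity.
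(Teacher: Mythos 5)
Your proposal is correct and follows essentially the same route as the paper: reduce to Theorem~\ref{lem:original_causal_connection}, observe that the $\iologic{k}$-admissible models are exactly the $\ioslogic{k}$-admissible ones plus the zero-input-world models, and show that in $(\emptyset, \outputworld)$ both notions of validity collapse to $\outputworld \vDash X$, so the empty-model condition is precisely $X_1, \dots, X_n \seqcl Y$. Your explicit contrapositive case split and the check that $|\inputworldset| \le 1$ decomposes into the $|\inputworldset| = 1$ and $|\inputworldset| = 0$ cases merely spell out details the paper leaves implicit.
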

\begin{proof}
By 
\theoremword~\ref{lem:original_causal_connection}, $(B,Y)$ is derivable from  $(A_1, X_1), \dots,$ $  (A_n, X_n) $ in \iologic{k} iff it is derivable in \ioslogic{k} together with the additional condition  $X_1, \dots, X_n \seqcl Y$. We prove that this additional condition is equivalent to the fact that every model with zero input worlds that validates all pairs from $G$ also validates $(B,Y)$. Notice that this will prove the proposition, as the only difference between the proposed semantics for a causal I/O logic and the corresponding original one is that the latter additionally considers models with zero input worlds (see \tableword~\ref{tab:IO_models_conditions}).


For both notions of validity, the validity of a pair $(A,X)$ in $(\emptyset, \outputworld)$ is equivalent to $\outputworld \vDash X$.
Now, $X_1, \dots, X_n \seqcl Y$ means that every interpretation that satisfies every $X_i$ also satisfies $Y$, which is equivalent to the fact that every model $(\emptyset, \outputworld)$ (with arbitrary interpretation $\outputworld$) that validates every $(A_i, X_i)$ also validates $(B,Y)$. \end{proof}

\begin{remark}
A natural interpretation for the I/O models in the deontic context regards input world(s) as (different possible instances of) the real world, and the output world as the ideal world, where all triggered obligations are fulfilled. 
\end{remark}


\subsection{Complexity and Automated Deduction}
\label{sec:AD}
We investigate the computational properties of the four original I/O logics and their causal versions. One corollary of our previous results is
co-NP-completeness for all of them. Moreover, we can explicitly reduce the entailment problem in all
these logics to the (un-)satisfiability of one classical propositional formula of polynomial size, a thoroughly studied problem with a huge variety of efficient tools available.


\begin{corollary}
The entailment problem is a co-NP-complete problem for all eight considered I/O logics.
\end{corollary}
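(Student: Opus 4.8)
The plan is to establish co-NP-completeness in two directions: membership in co-NP and co-NP-hardness, treating all eight logics uniformly by exploiting the characterization lemmas (\lemmaword~\ref{lem:char_1}, \lemmaword~\ref{lem:char_2}, \lemmaword~\ref{lem:char_4}, \lemmaword~\ref{lem:char_3}) together with \theoremword~\ref{lem:original_causal_connection} that links the original logics to the causal ones.

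For \textbf{membership in co-NP}, I would argue that non-derivability admits a short, efficiently checkable certificate. Fix one of the logics and the corresponding characterization lemma. Each lemma states that $\ioseq{(A_1, X_1), \dots, (A_n, X_n)}{B}{Y}$ is derivable iff \emph{for all} partitions $(I, J) \in \partset(\{1, \dots, n\})$ a certain disjunction of $LK$-derivability conditions holds. Hence non-derivability is witnessed by a single partition $(I, J)$ for which \emph{none} of the listed $LK$-sequents is derivable. By soundness and completeness of $LK$ with respect to classical truth-table semantics, each such $LK$ non-derivability is itself witnessed by a falsifying boolean interpretation; equivalently, the whole certificate is exactly the countermodel (an I/O model with one output world and up to $n+2$ input worlds) exhibited in the proof of \propword~\ref{prop:IO_models_sound_complete_causal}. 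A nondeterministic verifier guesses the partition together with these finitely many interpretations (polynomially many, each of polynomial size) and checks in polynomial time that they falsify the relevant sequents. This places the complement of the entailment problem in NP, i.e.\ entailment is in co-NP. The original logics \iologic{k} are handled by \theoremword~\ref{lem:original_causal_connection}: entailment in \iologic{k} is the conjunction of entailment in \ioslogic{k} and the single classical entailment $X_1, \dots, X_n \seqcl Y$; the complement is the disjunction of the two complements, each in NP, so still in co-NP.

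For \textbf{co-NP-hardness}, it suffices to reduce propositional unsatisfiability (equivalently, validity, which is co-NP-complete) to the entailment problem. The natural reduction uses the concluding rules directly: for a formula $\varphi$, consider whether the trivial goal is derivable using only \rulename{\outaxlabel} or \rulename{\inaxlabel}. Concretely, taking $G = \emptyset$ (or a fixed trivial $G$), the base case of each characterization lemma shows that $\ioseq{}{\top}{\varphi}$ is derivable iff $\seqlk \varphi$ is derivable in $LK$, i.e.\ iff $\varphi$ is valid; dually $\ioseq{}{\varphi}{\bot}$ tracks unsatisfiability of $\varphi$ via $\varphi \seqlk$. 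Since propositional validity is co-NP-complete and this reduction is computable in polynomial time and applies uniformly across all eight logics (the empty-premise case is identical for every calculus), hardness follows.

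The main obstacle is not any single step but ensuring genuine \emph{uniformity}: the certificate size and verification procedure must be argued once in a way that covers all four causal characterization lemmas despite their differing shapes (two-interpretation countermodels for \iocalc{2}, \iocalc{4} versus up to $(n+2)$-interpretation countermodels for \iocalc{1}, \iocalc{3}), and then extended to the original logics through \theoremword~\ref{lem:original_causal_connection}. The key observation making this clean is that in every case the number of interpretations in the countermodel is bounded by $n+2$ and each interpretation has polynomial size, so the guess-and-check verifier stays in NP regardless of which logic is considered. I expect the write-up to be short precisely because the characterization lemmas have already reduced derivability to finitely many $LK$-derivability checks, converting the complexity question into elementary bookkeeping about certificate sizes.
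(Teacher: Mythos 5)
Your proof follows essentially the same route as the paper's: co-NP membership by guessing the falsifying partition together with falsifying boolean assignments for the (at most $n+2$) LK-sequents of the characterization lemmas, extension to $\iologic{k}$ via \theoremword~\ref{lem:original_causal_connection} (the complement being a union of two NP sets), and hardness via the observation that $(\top,\varphi)$ is derivable from no pairs iff $\varphi$ is classically valid, where — as the paper also notes — the extra condition of \theoremword~\ref{lem:original_causal_connection} likewise boils down to validity of $\varphi$, so the reduction is uniform across all eight logics. One minor caveat: your ``dual'' encoding $\ioseq{}{\varphi}{\bot}$ captures unsatisfiability only for the causal logics, since for $\iologic{k}$ \theoremword~\ref{lem:original_causal_connection} adds the condition $\seqcl \bot$, which never holds, so $(\varphi,\bot)$ is never derivable there; this does not affect correctness because your primary reduction via $(\top,\varphi)$ does all the work.
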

\begin{proof}
The characterization lemmas for the logics \ioslogic{1}-\ioslogic{4} imply that the non-derivability of a pair from $n$ pairs can be non-deterministically verified in polynomial time by guessing the non-fulfilling partition (consisting of $n$ bits) and then non-deterministically checking the non-derivability of all sequents (at most $(n + 2)$) for this partition; the latter task can be done in linear time.
For the original I/O logics, by \theoremword~\ref{lem:original_causal_connection} we also need to verify 
that the additional condition does not hold (guessing a falsifying boolean assignment). Thus, the entailment problem belongs to co-NP for all considered I/O logics.
The co-NP-completeness follows by the fact that any arbitrary propositional formula $Y$ is classically valid iff $(\top,Y)$ can be derived from no pairs in any calculus for the considered logics (notice that the additional condition of \theoremword~\ref{lem:original_causal_connection} also boils down to the classical validity of $Y$).
\end{proof}


We provide an explicit reduction the derivability in I/O logics to the classical validity. 
For \iologic{2} and \iologic{4} this is already contained in~\cite{IO_original}. 
Using the semantics introduced in Sec.~\ref{sec:semantics}, we obtain this result for Bochmann's causal I/O logics and their original version in a uniform way.

\propword~\ref{prop:IO_models_sound_complete_causal} shows that the underivability of $\ioseq{G}{B}{Y}$ in the causal I/O logics is equivalent to the existence of an I/O model that validates all pairs in $G$, but does not validate $(B,Y)$. For \ioslogic{2} and \ioslogic{4} a countermodel should have exactly one input world, while for \ioslogic{1} and \ioslogic{3} 
there is always one with at most $(|G| + 1)$ input worlds. 

We will encode existence of a countermodel to $\ioseq{G}{B}{Y}$
with exactly $\numofinputworlds{k}$ input worlds (with $\numofinputworlds{k} = 1$ for $k=2,4$, and $\numofinputworlds{k} = |G| + 1$ for $k=1,3$) in classical logic. For the encoding, we assign to the input worlds the numbers from $1$ to $\numofinputworlds{k}$, and $0$ to the output world. Let $\varset$ be the finite set of all propositional variables that occur in the formulae of $G$ or $(B,Y)$. For every variable $x \in \varset$, our encoding will use $(\numofinputworlds{k} + 1)$ copies of this variable $\{ x^0, \dots, x^{\numofinputworlds{k}} \}$ with the intuitive interpretation that $x^l$ is true iff $x$ is true in the world number $l$. For an arbitrary formula $A$ with variables from $\varset$, let us denote by $A^l$ the copy of $A$ in which every variable $x \in \varset$ is replaced by its labeled version $x^l$. 
We read the formula $A^l$ as ``$A$ is true in the world number $l$''. The exact connection with $\ioseq{G}{B}{Y}$ is stated below.

\begin{lemma}
\label{lem:prop_reduction_causal}
${(A_1, X_1), \dots , (A_n, X_n) \vdash_{\ioslogic{k}} (B, Y)}$
iff the classical propositional formula ${\neg \propencoding{k}{n}{(B,Y)} \wedge \bigwedge\limits_{(A,X) \in G} \propencoding{k}{n}{(A,X)}}$ is unsatisfiable, where
\begin{itemize}
\item $\propencoding{k}{n}{(A,X)} = (\bigwedge\limits_{l = 1}^{\numofinputworlds{k}} A^l) \to X^0$\;\; for $k = 1,2$ 
\item $\propencoding{k}{n}{(A,X)} = (\bigwedge\limits_{l = 1}^{\numofinputworlds{k}} A^l) \to (\bigwedge\limits_{l = 0}^{\numofinputworlds{k}} X^l)$\;\; for $k = 3,4$ 
\end{itemize}
\end{lemma}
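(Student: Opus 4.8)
The plan is to read an I/O model off a propositional truth assignment and thereby turn the semantic criterion of Proposition~\ref{prop:IO_models_sound_complete_causal} into a satisfiability statement; I will prove the contrapositive, namely that $\ioseq{G}{B}{Y}$ is \emph{not} derivable in \ioslogic{k} iff the displayed conjunction is satisfiable. First I would fix the correspondence between assignments and models. A truth assignment $\sigma$ to the labelled variables $\{x^l \mid x \in \varset,\; 0 \le l \le \numofinputworlds{k}\}$ determines an I/O model $M_\sigma = (\inputworldset, \outputworld)$ with exactly $\numofinputworlds{k}$ input worlds, the world numbered $l$ making $x$ true iff $\sigma(x^l) = 1$; a routine induction on the structure of a formula $A$ shows that in $M_\sigma$ the formula $A$ is true in world $l$ iff $\sigma \vDash A^l$. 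Conversely, every I/O model with exactly $\numofinputworlds{k}$ input worlds arises this way.

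Next I would check that $\propencoding{k}{n}{(A,X)}$ is exactly the propositional transcription of the validity of $(A,X)$ in $M_\sigma$. For $k = 1,2$ the formula $(\bigwedge_{l=1}^{\numofinputworlds{k}} A^l) \to X^0$ holds under $\sigma$ iff ``$A$ true in every input world implies $X$ true in the output world'', which is precisely \otvalidity\ of $(A,X)$ (Definition~\ref{def:model_validity}); for $k = 3,4$ the consequent ranges over $l = 0, \dots, \numofinputworlds{k}$ and thus transcribes \tfvalidity. Hence $\neg\propencoding{k}{n}{(B,Y)}$ transcribes the \emph{failure} of validity of $(B,Y)$, and the conjunction $\neg\propencoding{k}{n}{(B,Y)} \wedge \bigwedge_{(A,X)\in G}\propencoding{k}{n}{(A,X)}$ is satisfied by $\sigma$ iff $M_\sigma$ validates all pairs of $G$ while falsifying $(B,Y)$, i.e.\ iff $M_\sigma$ is a countermodel with exactly $\numofinputworlds{k}$ input worlds.

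It then remains to reconcile ``countermodel with exactly $\numofinputworlds{k}$ input worlds'' with the frame conditions of Proposition~\ref{prop:IO_models_sound_complete_causal}. For $k = 2,4$ this is immediate, since $\numofinputworlds{k} = 1$ is the condition $|\inputworldset| = 1$. For $k = 1,3$ both inclusions are needed: a model with exactly $n+1$ input worlds trivially satisfies $|\inputworldset| \ge 1$, and conversely, by the construction in the proof of Proposition~\ref{prop:IO_models_sound_complete_causal} (using Lemmas~\ref{lem:char_1} and~\ref{lem:char_3}), non-derivability always yields a countermodel with at most $n+1$ input worlds. The main obstacle is therefore the \emph{padding} step: raising the number of input worlds up to exactly $n+1$ without destroying the countermodel. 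The key observation is that in any countermodel for an invalid goal every input world satisfies $B$, so duplicating an existing input world (a fresh copy with identical truth values) leaves the set of formulae true in all input worlds unchanged, keeps the output world untouched, and hence preserves both the validity of every deriving pair and the invalidity of $(B,Y)$. Since a countermodel always has at least one input world, such a duplication is always available, and iterating it produces a countermodel with exactly $n+1$ input worlds. Combining the two directions with Proposition~\ref{prop:IO_models_sound_complete_causal} shows that non-derivability of $(B,Y)$ is equivalent to satisfiability of the encoding, which is the contrapositive of the claim.
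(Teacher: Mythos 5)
Your proof is correct and takes essentially the same route as the paper's: both argue the contrapositive, translate a satisfying assignment of the labelled variables into the $(\numofinputworlds{k}+1)$ valuations of an I/O model (and conversely), verify by induction that $\propencoding{k}{n}{(A,X)}$ transcribes \otvalidity{} resp.\ \tfvalidity{} in that model, and conclude via Proposition~\ref{prop:IO_models_sound_complete_causal}. Your explicit padding argument for $k=1,3$ (duplicating an input world to reach exactly $\numofinputworlds{k}$ input worlds, harmless since it changes neither the formulae true in all input worlds nor the output world) spells out a detail the paper leaves implicit, as its countermodel construction only guarantees at most $|G|+1$ input worlds.
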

\begin{proof}
We prove the contrapositive version.

($\Rightarrow$) Let $\varset^L$ the set of all labeled copies of variables in $\varset$ ($\varset^L = {\{ x^l \,\mid\, x \in \varset, l \in \{0,\dots,\numofinputworlds{k}\}\}}$). Suppose there is a valuation ${v \colon \varset^L \to \{0,1\}}$, that satisfies the formula in the statement (i.e., $v \vDash \propencoding{k}{n}{(A,X)}$ for all $(A,X) \in G$ and $v \nvDash \propencoding{k}{n}{(B,Y)}$). $v$ can be decomposed into ${(\numofinputworlds{k} + 1)}$ valuations $v_l \colon \varset \to \{0,1\}$, one for each label ($v_l(x) = v(x^l)$).
It is easy to see that $(\ast)$: For every formula $A$ with variables in $\varset$, $v \vDash A^l$ iff $v_l \vDash A$ (it can be proven by trivial induction).
The valuations $\{v_l\}$ can then be turned into an I/O model $M = (\{v_1, \dots, v_{\numofinputworlds{k}}\}, v_0)$. Then using the reading of $v \vDash A^l$ given by $(\ast)$ we can see that $v \vDash \propencoding{k}{n}{(A,X)}$ (for $k=1,2$) iff $(A,X)$ is 1-2-valid in $M$, and $v \vDash \propencoding{k}{n}{(A,X)}$ (for $k=3,4$) iff $(A,X)$ is 3-4-valid in $M$.
Therefore, since $v$ satisfies the formula in the statement, $M$ validates all pairs from $G$ and does not validate $(B,Y)$, which implies that $\ioseq{G}{B}{Y}$ is not derivable in \ioslogic{k}.

($\Leftarrow$) Here instead of decomposing a valuation of labeled variables into $(\numofinputworlds{k} + 1)$ worlds, we use a countermodel $(\{\inputworld_1, \dots, \inputworld_{\numofinputworlds{k}}\}, \outputworld)$ to define a valuation ${v \colon \varset^L \to \{0,1\}}$ of labeled variables (with $v(x^0) = out(x)$ and $v(x^l) = in_l(x)$). The proof proceeds as in the other direction. 
\end{proof}

The result is extended to the original I/O logics via
Th.~\ref{lem:original_causal_connection}.

\begin{lemma}
\label{lem:prop_reduction_original}
${(A_1, X_1), \dots , (A_n, X_n) \vdash_{\iologic{k}} (B, Y)}$ iff the classical propositional formula 
$\mathcal{F}^k_n \vee {(\neg Y \wedge \bigwedge\limits_{i=1}^{n} X_i)}$ is  unsatisfiable, where $\mathcal{F}^k_n$ is the formula encoding derivability of $\ioseq{(A_1, X_1), \dots, (A_n, X_n)}{B}{Y}$ in \ioslogic{k} from \lemmaword~\ref{lem:prop_reduction_causal}.
\end{lemma}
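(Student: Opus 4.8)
The plan is to assemble the statement from two results already established, namely the characterization of \iologic{k}-derivability in \theoremword~\ref{lem:original_causal_connection} and the propositional reduction for the causal logics in \lemmaword~\ref{lem:prop_reduction_causal}, and then to observe that the remaining classical side condition is itself an unsatisfiability statement. Concretely, by \theoremword~\ref{lem:original_causal_connection} the judgement $(A_1,X_1),\dots,(A_n,X_n)\vdash_{\iologic{k}}(B,Y)$ holds precisely when two conditions hold simultaneously: (i) $(A_1,X_1),\dots,(A_n,X_n)\vdash_{\ioslogic{k}}(B,Y)$, and (ii) $X_1,\dots,X_n\seqcl Y$ in classical logic.

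First I would translate condition (i): by \lemmaword~\ref{lem:prop_reduction_causal}, (i) holds iff the formula $\mathcal{F}^k_n$ is unsatisfiable (recall $\mathcal{F}^k_n$ is exactly the encoding of \ioslogic{k}-derivability from that lemma). Next I would translate condition (ii): unfolding the definition of classical entailment, $X_1,\dots,X_n\seqcl Y$ holds iff no valuation makes every $X_i$ true while falsifying $Y$, i.e.\ iff the formula $\neg Y\wedge\bigwedge_{i=1}^{n}X_i$ is unsatisfiable.

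It then remains to merge the two unsatisfiability claims into one. The key elementary fact is that, for arbitrary propositional formulae $F$ and $H$, the disjunction $F\vee H$ is unsatisfiable iff both $F$ and $H$ are unsatisfiable: any valuation satisfying one disjunct satisfies the disjunction, and conversely a valuation satisfying $F\vee H$ must satisfy at least one disjunct. Instantiating this with $F=\mathcal{F}^k_n$ and $H=\neg Y\wedge\bigwedge_{i=1}^{n}X_i$ gives that (i) and (ii) hold together iff $\mathcal{F}^k_n\vee(\neg Y\wedge\bigwedge_{i=1}^{n}X_i)$ is unsatisfiable, which is exactly the claimed equivalence.

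I expect no genuine obstacle, as the statement is a direct composition of \theoremword~\ref{lem:original_causal_connection} and \lemmaword~\ref{lem:prop_reduction_causal} together with the disjunction fact above. The only point worth flagging is that the labeled variables $x^l$ occurring in $\mathcal{F}^k_n$ are syntactically distinct from the unlabeled variables appearing in $\neg Y\wedge\bigwedge_{i=1}^{n}X_i$, so the two disjuncts share no propositional letters and cannot interfere; however, since the disjunction equivalence holds for arbitrary formulae irrespective of shared variables, this disjointness is reassuring but not strictly necessary for the argument.
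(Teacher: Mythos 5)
Your proposal is correct and takes essentially the same route as the paper: decompose $\vdash_{\iologic{k}}$ via \theoremword~\ref{lem:original_causal_connection} into \ioslogic{k}-derivability plus the classical entailment $X_1, \dots, X_n \seqcl Y$, reduce the former to unsatisfiability of $\mathcal{F}^k_n$ by \lemmaword~\ref{lem:prop_reduction_causal}, and combine the two conditions using the fact that a disjunction is unsatisfiable iff both disjuncts are. You merely spell out explicitly the disjunction fact and the (correctly flagged as unnecessary) variable-disjointness point, both of which the paper's proof leaves implicit.
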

\begin{proof}
The disjunct $(\neg Y \wedge (X_1 \wedge \dots \wedge X_n))$ arises
from \theoremword~\ref{lem:original_causal_connection} (derivability in \iologic{k} is equivalent to derivability in \ioslogic{k} {\em and} the classical entailment of $Y$ from $\{X_i\}_{i=1}^n$). The claim follows by
\lemmaword~\ref{lem:prop_reduction_causal}.
\end{proof}


%

\subsection{Embeddings into Normal Modal Logics}
\label{sec:modal_embeddings}

We provide uniform embeddings into normal modal logics.
The embeddings are a corollary of soundness and completeness of I/O logics w.r.t. I/O models.

More precisely we show that $\ioseq{G}{B}{Y}$ in I/O logics iff a certain sequent consisting of shallow formulae only (meaning that the formulae do not contain nested modalities) is valid in suitable normal modal logics. To do that we establish a correspondence between pairs and shallow formulae.

\begin{figure}
    \centering
    \scalebox{0.65}{
    \begin{tikzpicture}[modal]
	\node[world] (out) {$\outputworld$};
	\node[world] (in1) [above =3mm of out, xshift=-18mm] {$\inputworld_1$};
	\node[world] (inN) [above =3mm of out, xshift=18mm] {$\inputworld_N$};;
        \node (dots) [above=7mm of out] {$\dots$};
	\path[->] (out) edge (in1);
	\path[->] (out) edge (inN);
	\path[->] (in1) edge[reflexive left=7] (in1);
	\path[->] (inN) edge[reflexive right=7] (inN);

    \end{tikzpicture} }
    \setlength{\belowcaptionskip}{-10pt}
    \caption{Kripke countermodel construction for I/O model $(\{\inputworld_1,\dots,\inputworld_N\}, \outputworld)$. Arrows represent the accessibility relation.}
    \label{fig:Kripke_countermodel_construction}
\end{figure}
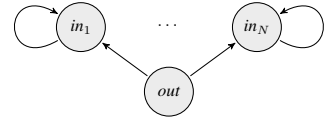
The I/O models already use the terminology of Kripke semantics that define normal modal logic. 
To establish a precise link between the two semantics we need only to define the accessibility relation on worlds. We will treat the set of input worlds $\inputworldset$ as the set of worlds accessible from the output world $\outputworld$ (see \figureword~\ref{fig:Kripke_countermodel_construction}). Under this view on input worlds, \otvalidity\ (resp. \tfvalidity\ ) of the pair $(A,X)$ is equivalent to the truth of the modal formula ${\square A \to X}$ (resp. ${\square A \to X \wedge \square X}$) in the world $\outputworld$.

Also, the conditions on the number of input worlds that are used in Prop.~\ref{prop:IO_models_sound_complete_causal} and Prop~\ref{prop:IO_models_sound_complete_original} to distinguish different I/O logics can be expressed in normal modal logics by standard Hilbert axioms. Specifically, axiom $\mathbf{D} \colon \square A \to \Diamond A$ forces Kripke models to have at least one accessible world, while ${\mathbf F} \colon \Diamond A \to \square A$ forces them to have at most one accessible world.
As proved below, the embedding works for
the basic modal logic $\mathbf{K}$ extended with $\mathbf{D}$ (which results in the well-known standard deontic logic~\cite{SDL} $\mathbf{KD}$), with $\mathbf{F}$, or both axioms.

Henceforth we abbreviate, e.g., validity in the logics $\mathbf{K}$ (respectively $\mathbf{K+F}$) with 
$\models_{\mathbf{K}/\mathbf{K+F}}$.
\begin{theorem}
$(B,Y)$ is derivable from pairs $G$ in

\begin{itemize}
   \item \iologic{1} and \iologic{2} iff $\; G^\square_{1/2} \models_{\mathbf{K}/\mathbf{K+F}} \square B \to Y$ 
    \item \iologic{3} and \iologic{4} iff  $\; G_{3/4}^\square \models_{\mathbf{K}/\mathbf{K+F}} \square B \to Y \! \wedge \! \square Y$ 
    
    \item \ioslogic{1} and \ioslogic{2} iff $\; G^\square_{1/2} 
    \models_{\mathbf{KD}/\mathbf{KD+F}} \square B \to Y$ 

    \item \ioslogic{3} and \ioslogic{4} iff $\; G_{3/4}^\square \models_{\mathbf{KD}/\mathbf{KD+F}}  \square B \to Y \! \wedge \! \square Y$ 
\end{itemize}
where $G_{1/2}^\square = \{ \square A_i \to X_i \mid (A_i,X_i) \in G \}$, \\ and 
$G_{3/4}^\square = \{ \square A_i \to X_i \wedge \square X_i \mid (A_i,X_i) \in G \}$.
\end{theorem}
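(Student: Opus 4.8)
The plan is to derive the theorem as a corollary of the I/O-model semantics established in \propword~\ref{prop:IO_models_sound_complete_causal} and \propword~\ref{prop:IO_models_sound_complete_original}, by exhibiting a two-way translation between I/O models and (one-step) Kripke models that preserves the relevant validities. I would argue contrapositively in each case: by those propositions, $(B,Y)$ is \emph{not} derivable from $G$ in the logic at hand iff there is an I/O model $(\inputworldset,\outputworld)$ satisfying the corresponding frame condition of \tableword~\ref{tab:IO_models_conditions} that validates every pair of $G$ but not $(B,Y)$; I then show this is equivalent to the existence of a Kripke countermodel for the matching modal entailment. Two equivalences already recorded in the text preceding the statement do the bulk of the work: (i) reading $\inputworldset$ as the set of worlds accessible from $\outputworld$, \otvalidity of $(A,X)$ is the truth of $\square A \to X$ at $\outputworld$, and \tfvalidity of $(A,X)$ is the truth of $\square A \to X \wedge \square X$ at $\outputworld$; (ii) the frame conditions $|\inputworldset|\ge 1$ and $|\inputworldset|\le 1$ correspond exactly to the Hilbert axioms $\mathbf{D}$ (seriality) and $\mathbf{F}$ (at most one successor). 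Throughout, the modal consequence $\models$ must be read \emph{locally}, i.e.\ as truth-preservation at a single world, which is what lets the output world serve as the evaluation point.

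For the direction from an I/O countermodel to a modal one, given $(\inputworldset,\outputworld)$ I build the Kripke model depicted in \figureword~\ref{fig:Kripke_countermodel_construction}: the worlds are $\{\outputworld\}\cup\inputworldset$ (taken as disjoint nodes carrying their boolean valuations), $\outputworld$ accesses every $\inputworld\in\inputworldset$, and each input world carries a reflexive loop. Equivalence (i) gives that every $\square A_i \to X_i$ (resp.\ $\square A_i \to X_i \wedge \square X_i$) of $G^\square_{1/2}$ (resp.\ $G^\square_{3/4}$) holds at $\outputworld$ while $\square B \to Y$ (resp.\ $\square B \to Y \wedge \square Y$) fails there, so $\outputworld$ witnesses the failure of the local entailment. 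It remains to check the frame is a model of the intended logic: the loops make every input world serial with a unique successor, so seriality of the whole frame reduces to $\outputworld$ having a successor ($|\inputworldset|\ge 1$, giving $\mathbf{D}$) and the ``at most one successor'' property reduces to $|\inputworldset|\le 1$ (giving $\mathbf{F}$); matching these against \tableword~\ref{tab:IO_models_conditions} places the model in $\mathbf{K}$, $\mathbf{K+F}$, $\mathbf{KD}$, or $\mathbf{KD+F}$ as required. Because the displayed formulae are shallow, the depth-two structure introduced by the loops never affects their truth at $\outputworld$.

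For the converse, from a Kripke countermodel in the appropriate class — a world $w$ at which all of $G^\square$ hold but the conclusion formula fails — I read off the I/O model directly: set $\outputworld := w$ and let $\inputworldset := \{v : wRv\}$ be its successors. Equivalence (i), applied in reverse, turns truth of $\square A_i \to X_i$ (resp.\ $\square A_i \to X_i \wedge \square X_i$) at $w$ into \otvalidity (resp.\ \tfvalidity) of $(A_i,X_i)$ in $(\inputworldset,\outputworld)$, and the failure of $\square B \to Y$ (resp.\ $\square B \to Y \wedge \square Y$) at $w$ into non-validity of $(B,Y)$; no modification is needed since the conclusion is shallow. The frame class transfers to the size of $\inputworldset$: seriality forces $|\inputworldset|\ge 1$, the $\mathbf{F}$-condition forces $|\inputworldset|\le 1$, and their conjunction forces $|\inputworldset|=1$, exactly the entries of \tableword~\ref{tab:IO_models_conditions} for the matching logic. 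Combining the two directions with the cited propositions yields all eight equivalences.

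The main point requiring care is the interplay between the consequence relation and the frame class. The argument only goes through for \emph{local} modal consequence: under a global reading the premises $G^\square$ would have to hold at the input worlds too, which I/O validity does not guarantee, since at a looped input world $\square A_i$ collapses to $A_i$. The reflexive loops are therefore not cosmetic — they are precisely the device that makes the constructed frame serial for the logics containing $\mathbf{D}$, and shallowness is what guarantees they do not disturb the validities being transferred. Verifying that the single loop construction simultaneously respects all four frame classes, and that the two directions are genuinely inverse at the level of the relevant one-step validities, is the only delicate part of the proof.
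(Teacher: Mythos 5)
Your proposal is correct and follows essentially the same route as the paper's own proof: a contrapositive argument via \propword~\ref{prop:IO_models_sound_complete_causal} and \propword~\ref{prop:IO_models_sound_complete_original}, translating I/O countermodels into Kripke countermodels using exactly the construction of \figureword~\ref{fig:Kripke_countermodel_construction} (output world accessing the input worlds, reflexive loops on input worlds to secure seriality and the $\mathbf{F}$-condition) and reading off the converse by taking $\inputworldset$ to be the successor set of the refuting world. Your explicit observation that the modal consequence must be read locally --- and that the loops would break a global reading since $\square A_i$ collapses to $A_i$ at a looped input world --- is a point the paper leaves implicit, but it does not change the argument.
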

\begin{proof}

We show these equivalences by transforming the I/O countermodels given by \propositionword~\ref{prop:IO_models_sound_complete_causal} and \propositionword~\ref{prop:IO_models_sound_complete_original} into Kripke countermodels for the corresponding modal logic and vice versa. The transformations will be the same for all the considered logics. We detail the case of \ioslogic{3}. 

($\Leftarrow$) Assume ${G_{3}^\square \models_{\mathbf{KD}} \square B \to Y \wedge \square Y}$ does not hold. Then there should exist a Kripke model $M$ in which every world has at least one world accessible from it, and a world $w$ in $M$, such that $w \vDash {\square A \to X \wedge \square X}$ for every ${(A,X) \in G}$ and $w \nvDash {\square B \to Y \wedge \square Y}$. Let $N(w)$ be the set of all worlds accessible from $w$ in $M$. Then the I/O model $(N(w), w)$ will be a countermodel for $\ioseq{G}{B}{Y}$; notice indeed that ${w \vDash \square A \to X \wedge \square X}$ means exactly \tfvalidity of $(A,X)$ in $(N(w), w)$, so all pairs in $G$ are \tfvalid in $(N(w), w)$, but $(B,Y)$ is not \tfvalid, and $|N(w)| \ge 1$. Hence $\ioseq{G}{B}{Y}$ is not derivable in \ioslogic{3}.

($\Rightarrow$) Assume $\ioseq{G}{B}{Y}$ is not derivable in \ioslogic{3}. Then there is some I/O model $(\inputworldset, \outputworld)$ with $|\inputworldset| \ge 1$, s.t. all pairs from $G$ are \tfvalid in $(\inputworldset, \outputworld)$ and $(B,Y)$ is not \tfvalid in $(\inputworldset, \outputworld)$. Consider the Kripke model $M$ that consists of worlds ${\inputworldset \cup \{\outputworld\}}$ with accessibility relation defined as shown in \figureword~\ref{fig:Kripke_countermodel_construction} (all input worlds are accessible from the output world and every input world is accessible from itself). $M$ satisfies the frame condition for $\mathbf{KD}$
as there is at least one accessible world from $out$ because of $|\inputworldset| \ge 1$,
and exactly one accessible world for every input world (itself). $\outputworld$ satisfies the modal formula ${A \to X \wedge \square X}$ (with $A$ and $X$ being propositional formulae) in the Kripke model $M$ iff the I/O pair $(A,X)$ is ${\text{\tfvalid}}$ in $(\inputworld, \outputworld)$. So in $M$, ${\outputworld \vDash A \to X \wedge \square X}$ for every pair $(A,X) \in G$ and ${\outputworld \nvDash B \to Y \wedge \square Y}$. Therefore ${G_{3}^\square \models_{\mathbf{KD}} \square B \to Y \wedge \square Y}$ does not hold.
\end{proof}

\begin{remark}
\label{rem:known_modal_embeddings}
Modal embeddings were already known for the causal logics \ioslogic{2} and \ioslogic{4}. 
The embedding for \ioslogic{2} was translating the pair $(A,X)$ into the {\bf K} 
formula $A \to \square X$. In \cite{IO_original} this embedding was stated
for \iologic{2} and \iologic{4} together with the additional condition $X_1, \dots , X_n \seqcl Y$
(appearing in our Th.~\ref{lem:original_causal_connection}).
Note that moving the modality to inputs allows for a more refined embedding. 
The validity of the statement ${(A \to \square X), \dots, (A \to \square X) \models (B \to \square Y)}$ 
is indeed the same in all four target logics we use ({\bf K}, {\bf KD}, {\bf K} + {\bf F} and {\bf KD} + {\bf F}), while the validity of $(\square A \to X),$ $ \dots, (\square A \to X) \models (\square B \to Y)$ is different.


\end{remark}

\section{Conclusions}

We have introduced 
sequent calculi for I/O logics. Our calculi provide a natural syntactic connection between derivability in the four original I/O logic~\cite{IO_original} and in their causal version~\cite{causality_for_nonmonotonic}.
Moreover, the calculi yield natural possible worlds semantics,
complexity bounds, embeddings into normal modal logics, as well as efficient deduction methods. 
It is worth noticing that
 our methods for the entailment problem offer derivability certificates (i.e., derivations) or counter-models as solutions. The efficient discovery of the latter can be accomplished using SAT solvers, along the line of~\cite{SAT_from_sequents}.
The newly introduced possible worlds semantics might be used to import in I/O logics tools and results from standard modal theory.

Our work encompasses many scattered results and presents uniform solutions to various unresolved problems; among them, it contains first proof-search oriented calculi for \ioslogic{2} and \ioslogic{4}; it provides a missing\footnote{
From ~\cite{IOhandbook}: "\textit{As a matter of facts, there is no direct (formal) connection between the semantics Bochman proposes and the operational semantics for I/O logic. 
The linkage between the two is established through the axiomatic characterization: both the possible-worlds semantics and the operational semantics give rise to almost the same axiom system}"} direct formal connection between the semantics of the original and the causal I/O logics; 
it introduces a uniform embedding into normal modal logics, that also applies to \iologic{1} and \iologic{3}, despite the absence in these logics of the \rulename{OR} rule\footnote{
\textit{From~\cite{IO_original}: ``As far as the authors are aware, it is not possible to characterise the system of simple-minded output (with or without reusability) by relabeling or modal logic in a straightforward way. The \rulename{OR} rule appears to be needed, so that we can work with complete sets}.''}; moreover, it
settles the complexity of the logics \iologic{3} and \ioslogic{3}. 
The latter logic has been used in~\cite{actual_causality} as the base for actual causality and in~\cite{causality_for_nonmonotonic}, together with \ioslogic{4}, to characterize strong equivalence of causal theories w.r.t. two different semantics: general and causal non-monotonic semantics. Strong equivalence is an important notion as theories satisfying it are ‘equivalent forever’, that is, they are interchangeable in any larger causal theory without changing the general/causal non-monotonic semantics. 
Furthermore \iologic{4} has been used as a base for formalizing legal concepts~\cite{kelsenian_logic}.
The automated deduction tools we have provided might be used also in these contexts. 

In this paper, we have focused on {\em monotonic} I/O logics. However, due to their limitations in addressing different aspects of causal reasoning~\cite{causality_book} and of normative reasoning, several non-monotonic extensions have been introduced. For example~\cite{IO_original_constrained,IO_minimal_aggregative} have proposed non-monotonic extensions that have also been applied to represent and reason about legal knowledge bases, as demonstrated in the work by Robaldo et al.~\cite{RobaldoBPRML20}.
%
Our new perspective on the monotonic I/O logics contributes to increase their understanding and can provide a solid foundation for exploring non-monotonic extensions.

\section*{Acknowledgements}
%
Work partially supported by the European Union's Horizon 2020 research and innovation programme under grant agreement No 101034440, and the WWTF project ICT22-023. 

%

\appendix

\bibliographystyle{kr}
\bibliography{biblio}

\newpage
\onecolumn

\section{LK sequent calculus}
\label{sec:ap_LK}

The sequent calculus LK is a deductive system for (first-order) classical logic, introduced by~\citeap{Gentzen}. Its basic objects are sequents, that are derivability assertions of the form $${A_1, \dots, A_n \seqlk B_1, \dots, B_m},$$ where ${\{A_1, \dots, A_n\}}$ and ${\{B_1, \dots, B_m\}}$ are multisets of formulae (assumptions, and  conclusions, respectively).

Axioms and rules of (the propositional fragment of) LK are as follows ($\Gamma$ and $\Delta$ denote arbitrary multisets of formulae):



\vspace{0.5cm}

\begin{center}

    \begin{tabular}{ccc}
\begin{prooftree}
    \hypo{}
    \infer1[\rulename{AX}]{\Gamma, x \seqlk x, \Delta}
\end{prooftree} & \hspace{10mm}
\begin{prooftree}
    \hypo{}
    \infer1[\rulename{$\bot$ L}]{\Gamma, \bot \seqlk \Delta}
\end{prooftree} & \hspace{10mm}
\begin{prooftree}
    \hypo{}
    \infer1[\rulename{$\top$ R}]{\Gamma \seqlk \top, \Delta}
\end{prooftree} \\[4mm]
    \end{tabular}
    \begin{tabular}{cc}
\begin{prooftree}
    \hypo{\Gamma, A, B \seqlk \Delta}
    \infer1[\rulename{$\wedge$ L}]{\Gamma, A \wedge B \seqlk \Delta}
\end{prooftree} &
\begin{prooftree}
    \hypo{\Gamma \seqlk A, \Delta}
    \hypo{\Gamma \seqlk B, \Delta}
    \infer2[\rulename{$\wedge$ R}]{\Gamma \seqlk A \wedge B, \Delta}
\end{prooftree} \\[4mm]
\begin{prooftree}
    \hypo{\Gamma, A \seqlk \Delta}
    \hypo{\Gamma, B \seqlk \Delta}
    \infer2[\rulename{$\vee$ L}]{\Gamma, A \vee B \seqlk \Delta}
\end{prooftree} &
\begin{prooftree}
    \hypo{\Gamma \seqlk A, B, \Delta}
    \infer1[\rulename{$\vee$ R}]{\Gamma \seqlk A \vee B, \Delta}
\end{prooftree} \\[4mm]
\begin{prooftree}
    \hypo{\Gamma \seqlk A, \Delta}
    \hypo{\Gamma, B \seqlk \Delta}
    \infer2[\rulename{$\to$ L}]{\Gamma, A \to B \seqlk \Delta}
\end{prooftree} &
\begin{prooftree}
    \hypo{\Gamma, A \seqlk B, \Delta}
    \infer1[\rulename{$\to$ R}]{\Gamma \seqlk A \to B, \Delta}
\end{prooftree} \\[4mm]
    \end{tabular}
    
\end{center}

\begin{definition}
A {\em derivation} in LK is a finite labelled tree whose internal nodes are labelled with sequents s.t. the label of each non-leaf node follows from the labels of its children using the calculus rules and leafs are axioms of the calculus. We say that a sequent ${A_1, \dots, A_n \seqlk B_1, \dots, B_m}$ is derivable in LK, if there is a derivation in LK with this sequent as the root label.
\end{definition}

 There are several known equivalent formulation of $LK$. The version we have presented is the one known as G3p. In this version, the standard structural rules in
 \figureword~\ref{fig:LK_structural_rules} are admissible (i.e. addition of these rules to the calculus does not change the set of sequents that can be derived). This makes the resulting derivations particularly well behaved as: (1) they can be constructed by safely applying the calculus rules bottom up, and (2) they satisfy the subformula property: all formulae occurring in a derivation are subformulae of the sequent to be proved. 
 

\begin{figure*}
    \centering
    \begin{tabular}{ccc}
\begin{prooftree}
    \hypo{\Gamma \seqlk \Delta}
    \infer1[\rulename{\weaklabelLK}]{\Gamma, A \seqlk \Delta}
\end{prooftree} &
\begin{prooftree}
    \hypo{\Gamma, A, A \seqlk \Delta}
    \infer1[\rulename{\contractlabelLK}]{\Gamma, A \seqlk \Delta}
\end{prooftree} &
\begin{prooftree}
    \hypo{\Gamma \seqlk A, \Delta}
    \hypo{\Gamma', A \seqlk \Delta'}
    \infer2[\rulename{\cutlabelLK}]{\Gamma, \Gamma' \seqlk \Delta, \Delta'}
\end{prooftree}
    \end{tabular}
    \caption{Structural rules of LK and cut (all admissible in LK). $\Gamma$, $\Gamma'$, $\Delta$, and $\Delta'$ denote an arbitrary multiset of formulae}
    \label{fig:LK_structural_rules}
\end{figure*}



\begin{theorem}[Gentzen]
A sequent ${A_1, \dots, A_n \seqlk B_1, \dots, B_m}$ is derivable in LK iff $A_1, \dots A_n \seqcl B_1 \vee \dots \vee B_m$ in classical logic.
\end{theorem}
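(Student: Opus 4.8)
The plan is to read a sequent $\Gamma \seqlk \Delta$ as the semantic assertion that every valuation satisfying all formulae of $\Gamma$ satisfies at least one formula of $\Delta$ (call such a sequent \emph{valid}, which for $\Gamma = A_1, \dots, A_n$ and $\Delta = B_1, \dots, B_m$ is exactly $A_1, \dots, A_n \seqcl B_1 \vee \dots \vee B_m$), and then to show that, for the propositional rules presented above, derivability and validity coincide. The technical heart, which I would establish first, is a single \emph{invertibility lemma}: every axiom (\rulename{AX}, \rulename{$\bot$ L}, \rulename{$\top$ R}) is valid, and for every rule the conclusion is valid if and only if all of its premises are valid. Each case is a routine truth-functional check on the connective being introduced; for instance, for \rulename{$\to$ L} one verifies that $\bigwedge\Gamma \wedge (A \to B) \seqcl \bigvee\Delta$ holds exactly when both $\bigwedge\Gamma \seqcl A \vee \bigvee\Delta$ and $\bigwedge\Gamma \wedge B \seqcl \bigvee\Delta$ hold, which is immediate from the semantics of $\to$.

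For the soundness direction (derivable $\Rightarrow$ valid) I would argue by induction on the height of the derivation: the leaves are axioms, hence valid by the lemma, and the ``premises valid $\Rightarrow$ conclusion valid'' half of the invertibility lemma propagates validity down to the root.

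For the completeness direction (valid $\Rightarrow$ derivable) I would run backward proof search. In this (G3p) formulation every rule replaces a compound principal formula by its immediate subformulae without reintroducing it, so the total number of connective occurrences in a sequent strictly decreases with each rule applied bottom-up; hence proof search terminates and yields a finite tree whose leaves are \emph{atomic} sequents, built only from propositional variables and the constants $\bot, \top$. By the ``conclusion valid $\Rightarrow$ premises valid'' half of the lemma, validity of the given sequent propagates upward, so every leaf is a valid atomic sequent. The closing observation is that a valid atomic sequent must be an axiom: if an atomic sequent $\Gamma \seqlk \Delta$ has no $\bot$ on the left, no $\top$ on the right, and no variable common to both sides, then the valuation sending every variable of $\Gamma$ to true and every variable of $\Delta$ to false satisfies all of $\Gamma$ while falsifying all of $\Delta$, contradicting validity. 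Thus each leaf is an axiom and the search tree is a genuine LK derivation.

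I expect the invertibility lemma of the first step to be the main obstacle, not because any individual case is hard but because it must be verified uniformly, in \emph{both} directions, for every connective rule so that one lemma can power both the soundness induction and the completeness proof-search argument; the termination measure and the atomic-leaf counter-valuation are comparatively straightforward once invertibility is in place.
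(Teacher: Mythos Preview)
The paper does not actually prove this theorem: it is stated in the appendix as a classical result attributed to Gentzen, without proof, serving only as background for the LK calculus used elsewhere. Your argument is correct and is precisely the standard invertibility-based proof for G3p (as in Troelstra--Schwichtenberg or Negri--von Plato): soundness by induction on derivations, completeness by terminating backward search exploiting that every logical rule is semantically invertible, together with the observation that a valid atomic sequent must match one of the axiom schemes. There is nothing to compare against on the paper's side.
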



\bibliographystyleap{kr}
\bibliographyap{biblio.bib}

\newpage

\section{Detailed Proofs}
\label{sec:ap_proofs}

\begin{lemman}[\ref{lem:struct_admissibility_2}]
\label{proof:struct_admissibility_2}
Rules \rulename{\weaklabel}, \rulename{\contractlabel} and \rulename{\cutlabel} are admissible for the calculus $\iocalc{2}$.
\end{lemman}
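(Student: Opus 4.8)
The plan is to prove all three admissibilities \emph{uniformly} by passing through the Characterization Lemma (Lem.~\ref{lem:char_2}), which reduces the derivability of an I/O sequent to the derivability in $LK$ of a family of classical sequents indexed by partitions. In this way each structural rule on I/O sequents is matched against the corresponding structural rule of $LK$ (weakening, contraction and cut), all of which are admissible in the $G3p$ presentation of $LK$. First I would fix the notation, writing the deriving pairs as $(A_1,X_1),\dots,(A_n,X_n)$, and then translate both the premise(s) and the conclusion of the rule under scrutiny into partition-indexed $LK$ conditions via Lem.~\ref{lem:char_2}.

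For \rulename{\weaklabel}, given that $\ioseq{G}{B}{Y}$ is derivable, I take an arbitrary partition of the enlarged index set and restrict it to $G$; the hypothesis supplies one of the two $LK$ sequents for the restricted partition, and re-attaching the new pair's input on the right (when it lands in the input part) or its output on the left (when it lands in the output part) is obtained by $LK$-weakening. For \rulename{\contractlabel}, given a partition of the single-copy index set, I extend it to the double-copy index set by sending both copies of the duplicated pair to the same side; the hypothesis then yields an $LK$ sequent with two occurrences of the duplicated formula, which $LK$-contraction collapses to one. Both cases are routine.

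The main obstacle is \rulename{\cutlabel}. Writing $G=\{(D_1,W_1),\dots,(D_m,W_m)\}$, $G'=\{(A_1,X_1),\dots,(A_n,X_n)\}$ and treating the cut pair $(C,Z)$ as index $n+1$ of the second premise, I fix an arbitrary pair of partitions $(I_1,J_1)\in\partset(\{1,\dots,n\})$ and $(I_2,J_2)\in\partset(\{1,\dots,m\})$ and must produce one of the two target $LK$ sequents of Lem.~\ref{lem:char_2} for $\ioseq{G, G'}{B}{Y}$. The idea is to probe the second premise on \emph{both} extensions of $(I_1,J_1)$ that place $n+1$ on either side: putting $n+1$ in the input part yields either $B \seqlk \{A_i\}_{i\in I_1}, C$ or $\{X_j\}_{j\in J_1}\seqlk Y$, while putting it in the output part yields either $B \seqlk \{A_i\}_{i\in I_1}$ or $\{X_j\}_{j\in J_1}, Z \seqlk Y$. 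If either disjunct not mentioning $C$ or $Z$ fires, $LK$-weakening by $\{D_i\}_{i\in I_2}$ on the right, or by $\{W_j\}_{j\in J_2}$ on the left, immediately gives a target sequent. Otherwise I am left with $B \seqlk \{A_i\}_{i\in I_1}, C$ and $\{X_j\}_{j\in J_1}, Z \seqlk Y$.

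At this point I invoke the first premise on the partition $(I_2,J_2)$, which gives either $C \seqlk \{D_i\}_{i\in I_2}$ or $\{W_j\}_{j\in J_2}\seqlk Z$. In the first case an $LK$-cut on $C$ with $B \seqlk \{A_i\}_{i\in I_1}, C$ produces $B \seqlk \{A_i\}_{i\in I_1},\{D_i\}_{i\in I_2}$; in the second case an $LK$-cut on $Z$ with $\{X_j\}_{j\in J_1}, Z \seqlk Y$ produces $\{X_j\}_{j\in J_1},\{W_j\}_{j\in J_2}\seqlk Y$. Either way the appropriate target sequent for the chosen partitions is derivable, so by Lem.~\ref{lem:char_2} the conclusion $\ioseq{G, G'}{B}{Y}$ is derivable in \iocalc{2}. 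The only real work is the bookkeeping of this four-way case split and checking that every branch terminates in an application of $LK$-weakening or $LK$-cut; no proof-theoretic machinery beyond the admissibility of the $LK$ structural rules is required.
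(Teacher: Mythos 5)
Your proposal is correct and follows essentially the same route as the paper's own proof: both reduce all three rules to the admissibility of the corresponding $LK$ structural rules via the Characterization Lemma (Lem.~\ref{lem:char_2}), handle \rulename{\weaklabel} and \rulename{\contractlabel} by restricting/extending partitions and applying $LK$-weakening or $LK$-contraction, and handle \rulename{\cutlabel} by probing both extensions of $(I_1,J_1)$ with the cut pair plus the partition $(I_2,J_2)$ of the first premise, closing each branch with an $LK$-weakening or an $LK$-cut on $C$ or $Z$. Your four-way case split is exactly the paper's case analysis, merely organized so that the two ``easy'' disjuncts are discharged first.
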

\begin{proof}We will use the characterization from \lemmaword~\ref{lem:char_2} to prove the admissibility for all three rules by reducing them to the admissibility of the corresponding rules in LK.

\vskip5mm
\textbf{Weakening.} Let ${G = \{(A_1, X_1), \dots, (A_n, X_n)\}}$. Denote $(A_{n+1}, X_{n+1}) = (A, X)$. Suppose $\ioseq{G}{B}{Y}$ is derivable. Then for every partition $(I,J) \in \partset(\{1,\dots,n\})$, \[ \textit{either } {B \seqlk \{A_i\}_{i \in I}} \textit{ or } {\{X_j\}_{j \in J} \seqlk Y} \textit{ is derivable in LK.} \] We can weaken the left sequent by adding the formula $A_{n+1}$ on the right:
\[ \textit{either } {B \seqlk \{A_i\}_{i \in I}, A_{n+1}} \textit{ or } {\{X_j\}_{j \in J} \seqlk Y} \textit{ is derivable in LK.} \]
Analogously, we can weaken the right sequent by adding the formula $X_{n+1}$ on the left:
\[ \textit{either } {B \seqlk \{A_i\}_{i \in I}} \textit{ or } {\{X_j\}_{j \in J}, X_{n+1} \seqlk Y} \textit{ is derivable in LK.} \]
Notice that these two statements give exactly the characterization of derivability of $\ioseq{(A, B), G}{B}{Y}$ by \lemmaword~\ref{lem:char_2} (first one for the partitions that have index $(n+1)$ in $I$, and second for the partitions that have index $(n+1)$ in $J$).

\vskip5mm
\textbf{Contraction.} Let ${G = \{(A_1, X_1), \dots, (A_n, X_n)\}}$. Suppose $\ioseq{(A, X), (A, X), G}{B}{Y}$ is derivable. Let's denote $(A_{n+1}, X_{n+1}) = (A_{n+2}, X_{n+2}) = (A,X)$ to have all pairs indexed for characterisation. 

Consider the partitions from ${\partset(\{1,\dots,n+2\})}$ that have the form ${(I \cup \{n+1,n+2\},J)}$. For any such partition, we have
\[ \textit{either } {B \seqlk \{A_i\}_{i \in I}, A, A} \textit{ or } {\{X_j\}_{j \in J} \seqlk Y} \textit{ is derivable in LK.} \]

After rewriting the first sequent into ${B \seqlk \{A_i\}_{i \in I}, A}$ using the classical contraction this condition turns exactly into the condition of the characterization for $\ioseq{(A,X), G}{B}{Y}$ (which has $(n+1)$ pairs on the left) for the partition $(I \cup \{n+1\}, J)$.

Now consider the partitions from ${\partset(\{1,\dots,n+2\})}$ that have the form ${(I,J \cup \{n+1,n+2\})}$. For any such partition, we have
\[ \textit{either } {B \seqlk \{A_i\}_{i \in I}} \textit{ or } {\{X_j\}_{j \in J}, X, X \seqlk Y} \textit{ is derivable in LK.} \]

After rewriting the second sequent into ${\{X_j\}_{j \in J}, X \seqlk Y}$ using classical contraction this condition turns exactly into the condition of the characterization for $\ioseq{(A,X), G}{B}{Y}$ for partition $(I, J \cup \{n+1\})$.

Notice that these two cases cover all partitions from  ${\partset(\{1,\dots,n+1\})}$, so $\ioseq{(A,X), G}{B}{Y}$ is derivable.

\vskip5mm
\textbf{Cut.} Let $G = {\{(D_1, W_1) \scomma \dots \scomma (D_m, W_m) \}}$ and $G' = {\{(A_1, X_1) \scomma \dots \scomma (A_n, X_n) \}}$ and denote $(A_{n+1}, X_{n+1}) = (C, Z)$. For derivability of $\ioseq{G, G'}{B}{Y}$ we need to show that for any partition $(I_1, J_1) \in {\partset(\{1,\dots,n\})}$ and any partition $(I_2, J_2) \in {\partset(\{1,\dots,m\})}$ the following holds:
\[ \textcolor{red}{(\ast)}\quad \textit{either } {B \seqlk \{A_i\}_{i \in I_1}, \{D_i\}_{i \in I_2}} \textit{ or } {\{X_j\}_{j \in J_1}, \{W_j\}_{j \in J_2} \seqlk Y} \textit{ is derivable in LK.} \]

We first use that $\ioseq{(C,Z), G}{B}{Y}$ is derivable.
First, we use characterization lemma for the partition ${(I_1 \cup \{n+1\}, J_1)}$ (so from the additional pair $(C,Z)$, $C$ goes to the inputs):
\[ \textcolor{blue}{(1)}\quad \textit{either } {B \seqlk \{A_i\}_{i \in I_1}, C} \textit{ or } {\{X_j\}_{j \in J_1} \seqlk Y} \textit{ is derivable in LK.} \]

Second, we use the characterization lemma for the partition ${(I_1, J_1 \cup \{n+1\})}$ (so from the additional pair $(C,Z)$, $Z$ goes to the outputs):
\[ \textcolor{blue}{(2)}\quad \textit{either } {B \seqlk \{A_i\}_{i \in I_1}} \textit{ or } {\{X_j\}_{j \in J_1}, Z \seqlk Y} \textit{ is derivable in LK.} \]

Now we use the derivability of $\ioseq{G'}{C}{Z}$. We use the characterization lemma for the partition $(I_2, J_2)$:
\[ \textcolor{blue}{(3)}\quad \textit{either } {C \seqlk \{D_i\}_{i \in I_2}} \textit{ or } {\{W_j\}_{j \in J_2} \seqlk Z} \textit{ is derivable in LK.} \]

Now we show that for any possible choice of derivable sequent (left or right) in $\textcolor{blue}{(1)}$, $\textcolor{blue}{(2)}$ and $\textcolor{blue}{(3)}$, we can always derive one of the sequents of $\textcolor{red}{(\ast)}$ (either by weakening or cut in LK):
\begin{itemize}
\item If the right sequent of $\textcolor{blue}{(1)}$ is derivable, then the right sequent of $\textcolor{red}{(\ast)}$ is derivable by weakening in LK.
\item If the left sequent of $\textcolor{blue}{(2)}$ is derivable, then the left sequent of $\textcolor{red}{(\ast)}$ is derivable by weakening in LK.
\item If the left sequent of $\textcolor{blue}{(3)}$ is derivable and the left sequent of $\textcolor{blue}{(1)}$ is derivable, we can use the LK cut on them and get exactly the left sequent of $\textcolor{red}{(\ast)}$.
\item If the right sequent of $\textcolor{blue}{(3)}$ is derivable and the right sequent of $\textcolor{blue}{(2)}$ is derivable, we can cut them using the LK cut and get exactly the right sequent of $\textcolor{red}{(\ast)}$.
\end{itemize}

Thus, we have proved $\textcolor{red}{(\ast)}$ for all possible cases.
\end{proof}

\begin{thn}[\ref{th:sound_comp_2} (Soundness and completeness of \iocalc{2})]
\label{proof:sound_comp_2}
$\ioseq{G}{B}{Y}$ is derivable in \iocalc{2} iff the pair $(B, Y)$ is derivable from the pairs in $G$ in \ioslogic{2}.
\end{thn}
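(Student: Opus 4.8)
The plan is to prove the two implications separately, reusing the two structural results already available for this calculus: the characterization Lemma~\ref{lem:char_2} and the admissibility of the I/O structural rules (Lemma~\ref{lem:struct_admissibility_2}), together with Lemma~\ref{lemma:concluding}. Neither direction should require re-analysing LK proofs from scratch, since all the genuine proof-theoretic work is concentrated in those lemmas; the theorem is then essentially a matter of assembling small fixed derivations.

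For completeness I would argue by induction on the given $\ioslogic{2}$ derivation of $(B,Y)$ from $G$, proving the stronger statement that for \emph{every} pair $(A,X)$ labelling a node of that derivation the I/O sequent $\ioseq{G}{A}{X}$ is derivable in $\iocalc{2}$. The leaves split into two cases: a pair $(A,X)\in G$ is obtained by the explicit two-premise application of $\rulename{\pairelimlabel{2}}$ displayed in the statement (whose premises close by $\rulename{\inaxlabel}$ and $\rulename{\outaxlabel}$), while the axioms $\rulename{TOP}$ and $\rulename{BOT}$ follow from $\seqlk \top$ and $\bot \seqlk$ via $\rulename{\outaxlabel}$ and $\rulename{\inaxlabel}$. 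For each inference rule I would first build a small fixed $\iocalc{2}$ derivation witnessing the rule on its own pairs---e.g. $\ioseq{(A,Y)}{B}{Y}$ for $\rulename{SI}$ (using $B\wedge\neg A\seqlk$), $\ioseq{(B,X_1)\scomma(B,X_2)}{B}{X_1\wedge X_2}$ for $\rulename{AND}$, and analogous ones for $\rulename{WO}$ and $\rulename{OR}$---and then glue it to the derivations supplied by the induction hypothesis using $\rulename{\cutlabel}$ (once per premise), discharging duplicated contexts with $\rulename{\contractlabel}$. The admissibility of cut and contraction (Lemma~\ref{lem:struct_admissibility_2}) is exactly what licenses these gluing steps inside $\iocalc{2}$.

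For soundness I would dualise: induct on the $\iocalc{2}$ derivation of $\ioseq{G}{B}{Y}$, reading each node as the claim that $(B,Y)$ is $\ioslogic{2}$-derivable from $G$. The two concluding rules are handled by Lemma~\ref{lemma:concluding}, which produces $(B,Y)$ from no premises (hence from $G$) whenever $B\seqlk$ or $\seqlk Y$ holds in LK---sound because LK proves these sequents only when $B\seqcl\bot$ or $\seqcl Y$. The single pair-elimination rule $\rulename{\pairelimlabel{2}}$ is simulated by the derivation in Figure~\ref{fig:pair_elim_2_soundness_derivation}, which takes the pair $(A,X)$ together with the induction hypotheses $(B\wedge\neg A,Y)$ and $(B,Y\vee\neg X)$ and yields $(B,Y)$ using $\rulename{WO}$, $\rulename{SI}$, $\rulename{AND}$ and $\rulename{OR}$, all valid in $\ioslogic{2}$. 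Composing these per-rule translations bottom-up turns the whole $\iocalc{2}$ derivation into an $\ioslogic{2}$ derivation.

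The routine calculations are the explicit small derivations for each rule; the only place demanding care is the $\rulename{AND}$ (and symmetrically $\rulename{OR}$) case of completeness, where a single conclusion depends on two induction hypotheses and the naive composition introduces two copies of the context $G$---so the argument must invoke $\rulename{\cutlabel}$ twice and then collapse the duplication by repeated $\rulename{\contractlabel}$, relying essentially on Lemma~\ref{lem:struct_admissibility_2}. The main conceptual obstacle has, in effect, already been paid for in proving that lemma; the remaining risk here is purely bookkeeping, namely keeping the contexts and the direction of each cut straight so that the simulated rule acts on exactly the intended pairs.
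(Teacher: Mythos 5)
Your proposal is correct and coincides with the paper's own proof essentially step for step: the same induction on the $\ioslogic{2}$ derivation for completeness (with the identical fixed derivations for the case $(A,X)\in G$, for \rulename{TOP}/\rulename{BOT} via \rulename{\outaxlabel}/\rulename{\inaxlabel}, and for \rulename{WO}, \rulename{SI}, \rulename{AND}, \rulename{OR}, glued by \rulename{\cutlabel} and \rulename{\contractlabel} as licensed by Lemma~\ref{lem:struct_admissibility_2}), and the same induction on $\iocalc{2}$ derivations for soundness, handling the concluding rules by Lemma~\ref{lemma:concluding} and simulating \rulename{\pairelimlabel{2}} by the derivation of Figure~\ref{fig:pair_elim_2_soundness_derivation}. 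Nothing needs to be changed.
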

\begin{proof}$ $\newline
\noindent\textbf{Completeness.} Proven constructively by induction on I/O derivation in \ioslogic{2}.

Suppose we have a derivation of $(B,Y)$ from pairs $G$ in \ioslogic{2}. We prove by induction on this derivation that for each occurring pair $(A, X)$, the derivability statement $\ioseq{G}{A}{X}$ is derivable in \iocalc{2}.

For the base case when the pair $(A, X)$ belongs to $G$ (and so $G = \{(A, X)\} \cup G'$ for some $G'$) we have the following derivation.

\[ \begin{prooftree}
    \hypo{A \wedge \neg A \seqlk}
    \infer1[\rulename{\inaxlabel}]{\ioseq{G'}{A \wedge \neg A}{X}}
    \hypo{\seqlk X \vee \neg X}
    \infer1[\rulename{\outaxlabel}]{\ioseq{G'}{A}{X \vee \neg X}}
    \infer2[\rulename{\pairelimlabel{2}}]{\ioseq{(A,X), G'}{A}{X}}
\end{prooftree} \]

For I/O rules with zero premises (\rulename{TOP} and \rulename{BOT}), we have the following derivations.

\[ \begin{prooftree}
    \hypo{\seqlk \top}
    \infer1[\rulename{\outaxlabel}]{\ioseq{G}{\top}{\top}}
\end{prooftree} \textit{\quad and \quad} \begin{prooftree}
    \hypo{\bot \seqlk}
    \infer1[\rulename{\inaxlabel}]{\ioseq{G}{\bot}{\bot}}
\end{prooftree}  \]

For I/O rules with one premise and a side condition of classical entailment (the rules \rulename{WO} and \rulename{SI}) we take the following derivations that apply cut to the inductive hypothesis. Notice the LK sequents in the derivations marked by the blue color. These sequents are derivable because they are equivalent to the side condition of a considered I/O rule ($X \seqcl Y$ for \rulename{WO}, and $B \seqcl A$ for \rulename{SI}).

For \rulename{WO}:

\[ \begin{prooftree}
    \hypo{\substack{\textit{by I.H.} \\ \vdots}}
    \infer1{\ioseq{G}{B}{X}}
    \hypo{B \wedge \neg B \seqlk}
    \infer1[\rulename{\inaxlabel}]{\ioseq{}{B \wedge \neg B}{Y}}
    \hypo{\textcolor{blue}{\seqlk Y \vee \neg X}}
    \infer1[\rulename{\outaxlabel}]{\ioseq{}{B}{Y \vee \neg X}}
    \infer2[\rulename{\pairelimlabel{2}}]{\ioseq{(B,X)}{B}{Y}}
    \infer2[\rulename{Cut}]{\ioseq{G}{B}{Y}}
\end{prooftree} \]

For \rulename{SI}:
\[ \begin{prooftree}
    \hypo{\substack{\textit{by I.H.} \\ \vdots}}
    \infer1{\ioseq{G}{A}{Y}}
    \hypo{\textcolor{blue}{B \wedge \neg A \seqlk}}
    \infer1[\rulename{\inaxlabel}]{\ioseq{}{B \wedge \neg A}{Y}}
    \hypo{\seqlk Y \vee \neg Y}
    \infer1[\rulename{\outaxlabel}]{\ioseq{}{B}{Y \vee \neg Y}}
    \infer2[\rulename{\pairelimlabel{2}}]{\ioseq{(A,Y)}{B}{Y}}
    \infer2[\rulename{Cut}]{\ioseq{G}{B}{Y}}
\end{prooftree}  \]

For the rules with two premises (\rulename{AND} and \rulename{OR}) we use the following derivations.

For \rulename{AND}:

\[ \hspace{-10mm} \begin{prooftree}
    \hypo{B \wedge \neg B \seqlk}
    \infer1[\rulename{\inaxlabel}]{\ioseq{(B, X_2)}{B \wedge \neg B}{X_1 \wedge X_2}}
    \hypo{B \wedge \neg B \seqlk}
    \infer1[\rulename{\inaxlabel}]{\ioseq{}{B \wedge \neg B}{(X_1 \wedge X_2) \vee \neg X_1}}
    \hypo{\seqlk (X_1 \wedge X_2) \vee \neg X_1 \vee \neg X_2}
    \infer1[\rulename{\outaxlabel}]{\ioseq{}{B}{(X_1 \wedge X_2) \vee \neg X_1 \vee \neg X_2}}
    \infer2[\rulename{\pairelimlabel{2}}]{\ioseq{(B, X_2)}{B}{(X_1 \wedge X_2) \vee \neg X_1}}
    \infer2[\rulename{\pairelimlabel{2}}]{\ioseq{(B, X_1) \scomma (B, X_2)}{B}{X_1 \land X_2}}
\end{prooftree}  \]

For \rulename{OR}:

\[ \hspace{-10mm} \begin{prooftree}
    \hypo{(A_1 \vee A_2) \wedge \neg A_1 \wedge \neg A_2 \seqlk}
    \infer1[\rulename{\inaxlabel}]{\ioseq{}{(A_1 \vee A_2) \wedge \neg A_1 \wedge \neg A_2}{Y}}
    \hypo{\seqlk Y \vee \neg Y}
    \infer1[\rulename{\outaxlabel}]{\ioseq{}{(A_1 \vee A_2) \wedge \neg A_1}{Y \vee \neg Y}}
    \infer2[\rulename{\pairelimlabel{2}}]{\ioseq{(A_2, Y)}{(A_1 \vee A_2) \wedge \neg A_1}{Y}}
    \hypo{\seqlk Y \vee \neg Y}
    \infer1[\rulename{\outaxlabel}]{\ioseq{(A_2, Y)}{A_1 \vee A_2}{Y \vee \neg Y}}
    \infer2[\rulename{\pairelimlabel{2}}]{\ioseq{(A_1, Y) \scomma (A_2, Y)}{A_1 \vee A_2}{Y}}
\end{prooftree}  \]

We afterwards cut these derivations twice with both inductive hypotheses and use contraction to remove duplicate pairs from the result. We show how to do it for \rulename{AND} (for \rulename{OR} it is exactly the same):

\[ \begin{prooftree}
    \hypo{\substack{\textit{by I.H.} \\ \vdots}}
    \infer1{\ioseq{G}{B}{X_1}}
    \hypo{\substack{\textit{by I.H.} \\ \vdots}}
    \infer1{\ioseq{G}{B}{X_2}}
    \hypo{\ioseq{(B, X_1) \scomma (B, X_2)}{B}{X_1 \land X_2}}
    \infer2[\rulename{\cutlabel}]{\ioseq{G \scomma (B, X_1)}{B}{X_1 \land X_2}}
    \infer2[\rulename{\cutlabel}]{\ioseq{G, G}{B}{X_1 \land X_2}}
    \infer1[\rulename{\contractlabel} $\times n$]{\ioseq{G}{B}{X_1 \land X_2}}
\end{prooftree}  \]

\textbf{Soundness.}

Proven constructively by induction on derivations in \iocalc{2}.

Inductive base: the rules \rulename{\inaxlabel} and \rulename{\outaxlabel}.

If $B \seqlk$ (which means $B \seqcl \bot$ in classical logic) we have the following I/O derivation for the required pair $(B,Y)$ in \ioslogic{2}.

\[ \begin{prooftree}
    \hypo{}
    \infer1[\rulename{BOT}]{(\bot,\bot)}
    \infer1[\rulename{SI}]{(B, \bot)}
    \infer1[\rulename{WO}]{(B, Y)}
\end{prooftree}  \]

If $\seqlk Y$ (which means $\top \seqcl Y$ in classical logic) we have the following I/O derivation for the required pair $(B, Y)$ in \ioslogic{2}.

\[ \begin{prooftree}
    \hypo{}
    \infer1[\rulename{TOP}]{(\top,\top)}
    \infer1[\rulename{WO}]{(\top, Y)}
    \infer1[\rulename{SI}]{(B, Y)}
\end{prooftree}  \]

The inductive step is the rule $\rulename{\pairelimlabel{2}}$. By inductive hypotheses we have that both pairs ${(B \wedge \neg A, Y)}$ and ${(B, Y \vee \neg X)}$ are derivable from pairs $G$ in \ioslogic{2}. We now need to prove that ${(B,Y)}$ is derivable from the pairs in ${G \cup \{(A,X)\}}$. We do it with the following I/O derivation of the pair $(B,Y)$ from the pairs $(A,X)$, $(B \wedge \neg A,Y)$ and $(B, Y \vee \neg X)$ in the logic \ioslogic{2}. If you take it and plug in the derivations of $(B \wedge \neg A,Y)$ and $(B, Y \vee \neg X)$ from $G$ obtained by the inductive hypotheses you will get the required derivation from ${G \cup \{(A,X)\}}$.

\[ \begin{prooftree}
    \hypo{(A, X)}
    \infer1[\rulename{WO}]{(A, Y \vee X)}
    \infer1[\rulename{SI}]{(B \wedge A, Y \vee X)}
    \hypo{(B, Y \vee \neg X)}
    \infer1[\rulename{SI}]{(B \wedge A, Y \vee \neg X)}
    \infer2[\rulename{AND}]{(B \wedge A, (Y \vee X) \wedge (Y \vee \neg X))}
    \infer1[\rulename{WO}]{(B \wedge A, Y \vee (X \wedge \neg X))}
    \infer1[\rulename{WO}]{(B \wedge A, Y)}
    \hypo{(B \wedge \neg A, Y)}
    \infer2[\rulename{OR}]{((B \wedge A) \vee (B \wedge \neg A), Y)}
    \infer1[\rulename{SI}]{(B \wedge (A \vee \neg A), Y)}
    \infer1[\rulename{SI}]{(B, Y)}
\end{prooftree}  \]
\end{proof}

\begin{lemman}[\ref{lem:char_4}]
\label{proof:char_4}
$\ioseq{(A_1, X_1), \dots, (A_n, X_n)}{B}{Y}$ is derivable in \iocalc{4} iff for all partitions $(I, J) \in \partset(\{1, \dots , n\})$, either ${B, \{X_j\}_{j \in J} \seqlk
\{A_i\}_{i \in I}}$ or ${\{X_j\}_{j \in J} \seqlk Y}$ is derivable in LK.
\end{lemman}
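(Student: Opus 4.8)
The plan is to mirror the proof of the characterization lemma for \iocalc{2} (\lemmaword~\ref{lem:char_2}), proceeding by induction on $n$ and restricting attention to derivations in I/O normal form. First I would verify that the I/O normal form observation survives the change of elimination rule: if a concluding rule \rulename{\inaxlabel} or \rulename{\outaxlabel} applies to the conclusion of \rulename{\pairelimlabel{4}}, it also applies to both premises. This follows from LK weakening, since $B \seqlk$ yields both $B \wedge \neg A \seqlk$ and $B \wedge X \seqlk$, while $\seqlk Y$ yields both $\seqlk Y$ and $\seqlk Y \vee \neg X$. Hence every derivable I/O sequent admits a normal-form derivation in which concluding rules occur only after all deriving pairs are eliminated, so in the inductive step the last rule can only be \rulename{\pairelimlabel{4}}.

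The base case $n = 0$ is identical to that of \lemmaword~\ref{lem:char_2}. For the step from $n$ to $n+1$, I would fix an arbitrary index $k$ and characterize when there is a normal-form derivation whose last rule is \rulename{\pairelimlabel{4}} eliminating $(A_k, X_k)$. This produces the premises $\ioseq{G'}{B \wedge \neg A_k}{Y}$ and $\ioseq{G'}{B \wedge X_k}{Y \vee \neg X_k}$ with $G' = G \setminus \{(A_k, X_k)\}$, to which I apply the inductive hypothesis. For each partition $(I', J') \in \partset(\{1,\dots,n+1\} \setminus \{k\})$ this gives, from the first premise, either (a1) $B \wedge \neg A_k, \{X_j\}_{j \in J'} \seqlk \{A_i\}_{i \in I'}$ or (a2) $\{X_j\}_{j \in J'} \seqlk Y$; and from the second premise, either (b1) $B \wedge X_k, \{X_j\}_{j \in J'} \seqlk \{A_i\}_{i \in I'}$ or (b2) $\{X_j\}_{j \in J'} \seqlk Y \vee \neg X_k$.

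The crux is to rewrite these LK sequents into the target form. In LK, (a1) is equivalent to $B, \{X_j\}_{j \in J'} \seqlk \{A_i\}_{i \in I'}, A_k$ (moving $\neg A_k$ to the right); together with (a2) this is exactly the required condition for the partition $(I' \cup \{k\}, J')$, in which $k$ joins $I$. Dually, (b1) splits on the left into $B, X_k, \{X_j\}_{j \in J'} \seqlk \{A_i\}_{i \in I'}$, and (b2) rewrites (moving $\neg X_k$ to the left) into $X_k, \{X_j\}_{j \in J'} \seqlk Y$; together these are exactly the required condition for the partition $(I', J' \cup \{k\})$, in which $k$ joins $J$. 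Since every partition of $\{1,\dots,n+1\}$ places $k$ in either $I$ or $J$, the partitions $(I' \cup \{k\}, J')$ and $(I', J' \cup \{k\})$ range precisely over all of $\partset(\{1,\dots,n+1\})$, and the resulting condition is manifestly independent of the chosen $k$. Therefore existence of some eliminable $(A_k, X_k)$ coincides with the universal partition condition, which closes the induction.

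The main obstacle, and the only genuine departure from the \iocalc{2} argument, is tracking how the modified second-premise input $B \wedge X_k$ of \rulename{\pairelimlabel{4}} (rather than plain $B$ in \rulename{\pairelimlabel{2}}) forces $X_k$ onto the left-hand sides of the sequents (b1) and (b2). This is precisely what generates the extra hypotheses $\{X_j\}_{j \in J}$ on the left of the first LK sequent in the \iocalc{4} characterization, and I would take care that, in the rewriting of (b1) and (b2), the formula $X_k$ consistently lands on the antecedent side so that it contributes correctly to partitions placing $k$ in $J$.
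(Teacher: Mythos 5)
Your proposal is correct and takes essentially the same route as the paper's own proof of \lemmaword~\ref{lem:char_4}: the appendix argument likewise mirrors the induction for \lemmaword~\ref{lem:char_2}, applies the inductive hypothesis to the two premises of \rulename{\pairelimlabel{4}}, and performs exactly your rewriting --- moving $\neg A_k$ to the right in (a1) to obtain the partition $(I' \cup \{k\}, J')$, and splitting $B \wedge X_k$ and moving $\neg X_k$ left in (b1)/(b2) to obtain $(I', J' \cup \{k\})$ --- observing, as you do, that the modified second premise and the extra $\{X_j\}_{j \in J}$ on the left of the characterization complement each other. Your explicit verification that the concluding rules permute above \rulename{\pairelimlabel{4}} (so the I/O normal form is available) is a detail the paper leaves implicit in its appeal to the \iocalc{2} proof, not a departure from it.
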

\begin{proof}
Analogous to the proof of \lemmaword~\ref{lem:char_2}. The only difference is that the sequents that we have in the conditions for derivability obtained from the inductive hypothesis applied to premises in the inductive step are different (due to modifications in both pair elimination rule and characterization statement). Specifically, the existence of the derivation that starts with the application of the rule \rulename{\pairelimlabel{4}} that eliminates a pair $(A_k, X_k)$, is equivalent (by applying inductive hypothesis to the premises) to the derivability of the following sequents for all $(I', J') \in \partset(\{1, \dots, n+1\} \setminus \{k\})$:

\begin{itemize}
    \item  (a1) $B \wedge \neg A_k, \{X_j\}_{j \in J'} \seqlk \{A_i\}_{i \in I'}$ or (a2) $\{X_j\}_{j \in J'} \seqlk Y$
    
    \item (b1) $B \wedge X_k, \{X_j\}_{j \in J'} \seqlk \{A_i\}_{i \in I'}$ or (b2) $\{X_j\}_{j \in J'} \seqlk Y \vee \neg X_k$
\end{itemize}

But these two modifications complement each other: they both introduce deriving outputs on the left side of the first sequent. Therefore when moving $A_k$ and $X_k$ to the other deriving inputs/outputs, we are also able to move $X_k$ from $B \wedge X_k$ to $\{X_j\}_{j \in J'}$ on the same side. This way we again get two equivalent conditions for every $(I', J') \in \partset(\{1, \dots, n+1\} \setminus \{k\})$:

\begin{itemize}
    \item  (a1) $B, \{X_j\}_{j \in J'} \seqlk \{A_i\}_{i \in I' \cup \{k\}}$ or (a2) $\{X_j\}_{j \in J'} \seqlk Y$
    
    \item (b1) $B, \{X_j\}_{j \in J' \cup \{k\}} \seqlk \{A_i\}_{i \in I'}$ or (b2) $\{X_j\}_{j \in J'  \cup \{k\}} \seqlk Y$
\end{itemize}

Which gives exactly the statement of the lemma.
\end{proof}

\begin{lemma}
\label{lem:calc_subsums_4_2}
If $\ioseq{G}{B}{Y}$ is derivable in \iocalc{2} it is also derivable in \iocalc{4}.
\end{lemma}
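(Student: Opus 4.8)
The plan is to read off both derivability notions through the characterization lemmas and compare them partition by partition. By \lemmaword~\ref{lem:char_2}, derivability of $\ioseq{G}{B}{Y}$ in \iocalc{2} means that for every partition $(I,J) \in \partset(\{1,\dots,n\})$ at least one of the LK sequents $B \seqlk \{A_i\}_{i \in I}$ or $\{X_j\}_{j \in J} \seqlk Y$ is derivable. By \lemmaword~\ref{lem:char_4}, derivability in \iocalc{4} requires, for every partition, derivability of $B, \{X_j\}_{j \in J} \seqlk \{A_i\}_{i \in I}$ or $\{X_j\}_{j \in J} \seqlk Y$. The two conditions share the second disjunct verbatim and differ only in that the first disjunct of the \iocalc{4} condition carries the extra antecedents $\{X_j\}_{j \in J}$ on the left.

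First I would fix an arbitrary partition $(I,J) \in \partset(\{1,\dots,n\})$ and invoke the \iocalc{2} characterization to obtain a derivable LK sequent for it. If the derivable sequent is $\{X_j\}_{j \in J} \seqlk Y$, it is already exactly the second disjunct of the \iocalc{4} condition, so nothing more is needed. If instead $B \seqlk \{A_i\}_{i \in I}$ is derivable, I would apply left-weakening in LK (admissible in the G3p formulation, see \appendixword~\ref{sec:ap_LK}) to add the formulas $\{X_j\}_{j \in J}$ to the antecedent, yielding the first disjunct $B, \{X_j\}_{j \in J} \seqlk \{A_i\}_{i \in I}$ required by the \iocalc{4} characterization.

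Since the partition was arbitrary and both lemmas range over the same set $\partset(\{1,\dots,n\})$ of partitions, the \iocalc{4} condition holds for every partition, and \lemmaword~\ref{lem:char_4} then yields derivability of $\ioseq{G}{B}{Y}$ in \iocalc{4}. I expect essentially no obstacle here: the entire argument reduces to the observation that the \iocalc{2} sequent condition logically strengthens the \iocalc{4} one through a single admissible weakening step, so the only points to check are that the two characterizations are indexed by the same partitions and agree on the second disjunct. (One could alternatively argue syntactically, by transforming a \iocalc{2} derivation into a \iocalc{4} derivation rule-by-rule, but routing the proof through the already-established characterization lemmas avoids any case analysis on the pair-elimination rules and makes the weakening step explicit.)
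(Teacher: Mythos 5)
Your proof is correct, and it takes a genuinely different route from the paper's. The paper proves the lemma syntactically, by induction on the given \iocalc{2} derivation, converting it rule-by-rule into a \iocalc{4} derivation; because the second premise of \rulename{\pairelimlabel{2}} has goal input $B$ while that of \rulename{\pairelimlabel{4}} has $B \wedge X$, the paper must strengthen the induction hypothesis to: if $\ioseq{G}{B}{Y}$ is derivable in \iocalc{2}, then $\ioseq{G}{B'}{Y}$ is derivable in \iocalc{4} for every $B'$ with $B' \seqcl B$ classically. You instead compare the two calculi through \lemmaword~\ref{lem:char_2} and \lemmaword~\ref{lem:char_4}, which quantify over the same partition set $\partset(\{1,\dots,n\})$ and agree on the disjunct ${\{X_j\}_{j \in J} \seqlk Y}$, so the whole lemma reduces to turning ${B \seqlk \{A_i\}_{i \in I}}$ into ${B, \{X_j\}_{j \in J} \seqlk \{A_i\}_{i \in I}}$ by iterated left weakening, admissible in the paper's G3p presentation of LK. This is sound --- in particular there is no circularity, since both characterization lemmas are established by induction on $n$ independently of this lemma --- and it is shorter: it avoids both the case analysis on the last applied rule and the need to discover the right generalized induction hypothesis. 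What the paper's approach buys in exchange is a direct derivation-to-derivation transformation that does not presuppose the characterization machinery, together with the mildly stronger statement (closure of \iocalc{4} derivability under classical strengthening of the goal input), which is of independent use. Incidentally, the structural observation at the heart of your argument --- that the \iocalc{4} characterization differs from that of \iocalc{2} only in that the outputs $\{X_j\}_{j \in J}$ migrate to the left of the first sequent --- is exactly the observation the paper exploits when lifting the admissibility proofs to \iocalc{4} in \lemmaword~\ref{lem:struct_admissibility_4}.
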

\begin{proof}
We prove the more general statement: if $\ioseq{G}{B}{Y}$ is derivable in \iocalc{2} then for any $B'$ such that $B' \seqcl B$ in classical logic, $\ioseq{G}{B'}{Y}$ is also derivable in \iocalc{4}.

The proof proceeds by induction on the derivation of $\ioseq{G}{B}{Y}$ in \iocalc{2}.

\begin{itemize}
\item If $\ioseq{G}{B}{Y}$ is derived by the rule \rulename{\inaxlabel} then $B \seqlk$ is derivable in LK, so $B' \seqlk$ is also derivable in LK and $\ioseq{G}{B'}{Y}$ can be derived in \iologic{4} by \rulename{\inaxlabel}.
\item If $\ioseq{G}{B}{Y}$ is derived by the rule \rulename{\outaxlabel} then $\seqlk Y$ is derivable in LK, so $\ioseq{G}{B'}{Y}$ can be derived in \iocalc{4} by \rulename{\outaxlabel}.
\item Suppose $\ioseq{G}{B}{Y}$ is derived by the rule \rulename{\pairelimlabel{2}} from the premises $\ioseq{G'}{B \wedge \neg A}{Y}$ and $\ioseq{G'}{B}{Y \vee \neg X}$ for some pair $(A,X)$ such that $G = G' \cup (A,X)$. By the inductive hypotheses $\ioseq{G'}{B' \wedge \neg A}{Y}$ and $\ioseq{G'}{B' \wedge X}{Y \vee \neg X}$ are derivable in \iocalc{4} (since from $B' \seqcl B$ follows that $B' \wedge \neg A \seqcl B \wedge \neg A$ and $B' \wedge X \seqcl B$ classically). We then can derive $\ioseq{G}{B'}{Y}$ in \iocalc{4} from these premises by \rulename{\pairelimlabel{4}}.
\end{itemize}
\end{proof}

\begin{lemma}
\label{lem:struct_admissibility_4}
The rules \rulename{\weaklabel}, \rulename{\contractlabel} and \rulename{\cutlabel} are admissible in the calculus $\iocalc{4}$.
\end{lemma}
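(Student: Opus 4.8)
The plan is to mirror the appendix proof of Lemma~\ref{lem:struct_admissibility_2}, replacing the characterization of \iocalc{2} (Lemma~\ref{lem:char_2}) with the one for \iocalc{4} (Lemma~\ref{lem:char_4}), so that the admissibility of each I/O structural rule is again reduced to the admissibility of the corresponding structural rule in LK. The only feature distinguishing \iocalc{4} is that the ``input'' branch of its characterization carries the outputs $\{X_j\}_{j\in J}$ on the left of the turnstile; the LK manipulations must therefore track these extra antecedent formulae. This leaves weakening and contraction essentially unchanged but makes the cut case more delicate.

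For \textbf{weakening}, I would set $G=\{(A_1,X_1),\dots,(A_n,X_n)\}$ and $(A_{n+1},X_{n+1})=(A,X)$, and observe that a partition of $\{1,\dots,n+1\}$ is a partition of $\{1,\dots,n\}$ together with a choice of side for the new index $n+1$. If $n+1\in I$, the required sequent $B,\{X_j\}_{j\in J}\seqlk\{A_i\}_{i\in I}$ follows from the LK sequent for the smaller partition by weakening $A_{n+1}$ on the right; if $n+1\in J$, both branches follow by weakening $X_{n+1}$ on the left. For \textbf{contraction} of a duplicated pair $(A,X)$, I would index the two copies by $n+1,n+2$ and consider those partitions of $\{1,\dots,n+2\}$ placing both copies on the same side: both in $I$ produces $A,A$ on the right, removed by LK contraction, and both in $J$ produces $X,X$ on the left, again removed by LK contraction. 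These two families exactly cover the partitions needed for the contracted sequent.

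The main obstacle is \textbf{cut}. I would take $G=\{(D_1,W_1),\dots,(D_m,W_m)\}$, $G'=\{(A_1,X_1),\dots,(A_n,X_n)\}$, cut pair $(C,Z)$, and fix arbitrary partitions $(I_1,J_1)$ of $\{1,\dots,n\}$ and $(I_2,J_2)$ of $\{1,\dots,m\}$. The goal $(\ast)$ is that either $B,\{X_j\}_{j\in J_1},\{W_j\}_{j\in J_2}\seqlk\{A_i\}_{i\in I_1},\{D_i\}_{i\in I_2}$ or $\{X_j\}_{j\in J_1},\{W_j\}_{j\in J_2}\seqlk Y$ is LK-derivable. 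Applying Lemma~\ref{lem:char_4} to $\ioseq{(C,Z),G'}{B}{Y}$ for the two partitions sending the cut pair to the input side and to the output side gives (1) $B,\{X_j\}_{j\in J_1}\seqlk\{A_i\}_{i\in I_1},C$ or $\{X_j\}_{j\in J_1}\seqlk Y$, and (2) $B,\{X_j\}_{j\in J_1},Z\seqlk\{A_i\}_{i\in I_1}$ or $\{X_j\}_{j\in J_1},Z\seqlk Y$. Applying it to $\ioseq{G}{C}{Z}$ for $(I_2,J_2)$ gives (3) $C,\{W_j\}_{j\in J_2}\seqlk\{D_i\}_{i\in I_2}$ or $\{W_j\}_{j\in J_2}\seqlk Z$.

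The final step is a case split over which disjunct holds in (1), (2), (3). If the output branch of (1) holds, $(\ast)$ follows by weakening $\{W_j\}_{j\in J_2}$ into it. Otherwise the input branch of (1) holds, and I would branch on (3): if its left disjunct holds, I cut it against (1) on the formula $C$, yielding the input branch of $(\ast)$ up to weakening $\{D_i\}_{i\in I_2}$; if its right disjunct $\{W_j\}_{j\in J_2}\seqlk Z$ holds, I cut it against (2) on $Z$, which removes $Z$ from the left and delivers either the input or the output branch of $(\ast)$ depending on which disjunct of (2) was used. A routine check shows these subcases exhaust all combinations, so $(\ast)$ holds for every pair of partitions, and Lemma~\ref{lem:char_4} then gives derivability of $\ioseq{G,G'}{B}{Y}$ in \iocalc{4}. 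The subtlety to watch, relative to \iocalc{2}, is exactly that $Z$ (and the $\{X_j\}$) now appear on the left in branch (2), so the interaction with (3) must be realized by an LK cut on $Z$ rather than being absorbed by a simple weakening.
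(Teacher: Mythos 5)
Your proposal is correct and matches the paper's own proof essentially step for step: it reduces all three rules to LK structural rules via Lemma~\ref{lem:char_4}, uses the same two partition instantiations $(I_1\cup\{n+1\},J_1)$ and $(I_1,J_1\cup\{n+1\})$ on $\ioseq{(C,Z),G'}{B}{Y}$ plus $(I_2,J_2)$ on the cut premise, and performs the same exhaustive case analysis including the extra LK cut on $Z$ (needed because $Z$ now sits on the left) that distinguishes \iocalc{4} from \iocalc{2}. The only nitpick is that cutting the left disjuncts of (1) and (3) on $C$ yields the input branch of $(\ast)$ \emph{exactly}, with no further weakening of $\{D_i\}_{i\in I_2}$; the weakening is instead required in the case combining the right disjunct of (3) with the left disjunct of (2).
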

\begin{proof}
Analogous to the proof \ref{proof:struct_admissibility_2} of \lemmaword~\ref{lem:struct_admissibility_2}. 
By the characterization lemma (\lemmaword~\ref{lem:char_4}) the difference is that the output part $\{X_j\}_{j \in J}$ appear on the left of the first sequent. This does not affect the proof of admissibility of \rulename{\weaklabel} and \rulename{\contractlabel} --- weakening and contraction from LK can be applied on the left where the additional formulae appear the same way as in other parts to reduce the sequents to the required form.

For \rulename{\cutlabel} rule admissibility there is one additional LK cut that may be required in one of the cases. Specifically to prove admissibility of \rulename{\cutlabel} following the proof \ref{proof:struct_admissibility_2} of Lemma~\ref{lem:struct_admissibility_2} we have to prove the following characterization statement for any partition $(I_1, J_1) \in \partset(\{1,\dots,n\})$ and partition $(I_2, J_2) \in \partset(\{1,\dots,m\})$. 

\[ \textcolor{red}{(\ast)}\quad \textit{either } {B, \{X_j\}_{j \in J_1}, \{W_j\}_{j \in J_2} \seqlk \{A_i\}_{i \in I_1}, \{D_i\}_{i \in I_2}} \textit{ or } {\{X_j\}_{j \in J_1}, \{W_j\}_{j \in J_2} \seqlk Y} \textit{ is derivable in LK.} \]

Again, we first get the following statement by applying characterization lemma to the derivation of $\ioseq{(C,Z), G}{B}{Y}$ for the partition $(I_1 \cup \{n+1\}, J_1)$:
\[ \textcolor{blue}{(1)}\quad \textit{either } {B, \{X_j\}_{j \in J_1} \seqlk \{A_i\}_{i \in I_1}, C} \textit{ or } {\{X_j\}_{j \in J_1} \seqlk Y} \textit{ is derivable in LK.} \]


Second, we get the following statement by applying the characterization lemma to the derivation of $\ioseq{(C,Z), G}{B}{Y}$ for the partition $(I_1, J_1 \cup \{n+1\})$:
\[ \textcolor{blue}{(2)}\quad \textit{either } {B, Z, \{X_j\}_{j \in J_1} \seqlk \{A_i\}_{i \in I_1}} \textit{ or } {Z, \{X_j\}_{j \in J_1} \seqlk Y} \textit{ is derivable in LK.} \]


Third, we get the following statement by applying the characterization lemma to the derivation of $\ioseq{G'}{C}{Z}$ for the partition $(I_2, J_2)$:
\[ \textcolor{blue}{(3)}\quad \textit{either } {C, \{W_j\}_{j \in J_2} \seqlk \{D_i\}_{i \in I_2}} \textit{ or } {\{W_j\}_{j \in J_2} \seqlk Z} \textit{ is derivable in LK.} \]

Now we show that for any possible choice of derivable sequent (left or right) in $\textcolor{blue}{(1)}$, $\textcolor{blue}{(2)}$ and $\textcolor{blue}{(3)}$, we can always derive one of the sequents of $\textcolor{red}{(\ast)}$ (either by weakening or by cut in LK):
\begin{itemize}
\item If the right sequent of $\textcolor{blue}{(1)}$ is derivable, then the right sequent of $\textcolor{red}{(\ast)}$ is derivable by weakening in LK.
\item If the left sequent of $\textcolor{blue}{(1)}$ is derivable and the left sequent of $\textcolor{blue}{(3)}$ is derivable, we can cut them using the LK cut and get exactly the left sequent of $\textcolor{red}{(\ast)}$.
\item If the right sequent of $\textcolor{blue}{(3)}$ is derivable and the left sequent of $\textcolor{blue}{(2)}$ is derivable, we can cut them using the LK cut and weaken the result to get the left sequent of $\textcolor{red}{(\ast)}$.
\item If the right sequent of $\textcolor{blue}{(3)}$ is derivable and the right sequent of $\textcolor{blue}{(2)}$ is derivable, we can cut them using the LK cut and get exactly the right sequent of $\textcolor{red}{(\ast)}$.
\end{itemize}

Thus, we have proved $\textcolor{red}{(\ast)}$ for all possible cases.
\end{proof}

\begin{thn}[\ref{th:sound_comp_4}]
\label{proof:sound_comp_4}
$\ioseq{G}{B}{Y}$ is derivable in \iocalc{4} iff  $(B, Y)$ is derivable from the pairs in $G$ in \ioslogic{4}.
\end{thn}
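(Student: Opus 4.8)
The plan is to mirror the two-directional induction used for Theorem~\ref{th:sound_comp_2}, adapting each step to the modified elimination rule \rulename{\pairelimlabel{4}} and to the extra rule \rulename{CT} of \ioslogic{4}. Throughout I would rely on the characterization Lemma~\ref{lem:char_4}, the admissibility of the structural rules (Lemma~\ref{lem:struct_admissibility_4}), and the subsumption fact that everything derivable in \iocalc{2} is derivable in \iocalc{4} (Lemma~\ref{lem:calc_subsums_4_2}).

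\emph{Completeness.} I would induct on the I/O derivation in \ioslogic{4}, proving that for each pair $(A,X)$ occurring in it the sequent $\ioseq{G}{A}{X}$ is derivable in \iocalc{4}. The leaf cases (pairs of $G$, \rulename{TOP}, \rulename{BOT}) and the cases \rulename{WO}, \rulename{SI}, \rulename{AND}, \rulename{OR} transfer directly from the \iocalc{2} argument: those derivations combine concluding rules with \rulename{\cutlabel} and \rulename{\contractlabel}, and the small schema sequents they use are \iocalc{2}-derivable, hence \iocalc{4}-derivable by Lemma~\ref{lem:calc_subsums_4_2}. The one genuinely new case is \rulename{CT}. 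Given inductive derivations of $\ioseq{G}{A}{X}$ and $\ioseq{G}{A\wedge X}{Y}$, I would first derive the schema sequent $\ioseq{(A,X),(A\wedge X,Y)}{A}{Y}$ directly in \iocalc{4} by two applications of \rulename{\pairelimlabel{4}} closed off with concluding rules, then apply \rulename{\cutlabel} twice against the two hypotheses and finish with \rulename{\contractlabel} to merge the duplicated copies of $G$.

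\emph{Soundness.} I would induct on the \iocalc{4}-derivation. The base cases \rulename{\inaxlabel} and \rulename{\outaxlabel} are sound already in \ioslogic{1}, hence in \ioslogic{4}, by Lemma~\ref{lemma:concluding}. For the inductive step \rulename{\pairelimlabel{4}} I would reuse the \ioslogic{2} derivation of Figure~\ref{fig:pair_elim_2_soundness_derivation}, replacing only its sub-derivation of $(B\wedge A,\,Y\vee\neg X)$: since the second premise of \rulename{\pairelimlabel{4}} supplies the pair $(B\wedge X,\,Y\vee\neg X)$ rather than $(B,\,Y\vee\neg X)$, I obtain $(B\wedge A,\,Y\vee\neg X)$ from $(A,X)$ and $(B\wedge X,\,Y\vee\neg X)$ by two uses of \rulename{SI} and one of \rulename{CT}, exactly the small derivation shown in the main text. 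Everything else in the figure, and every rule used, belongs to \ioslogic{4}.

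The main obstacle is not the theorem proof itself but the admissibility Lemma~\ref{lem:struct_admissibility_4} it depends on. Under Lemma~\ref{lem:char_4} the eliminated outputs $\{X_j\}_{j\in J}$ now also occur on the \emph{left} of the input sequent, so the \rulename{\cutlabel} case needs one additional LK cut---pairing the premise carrying $Z$ on the left with the premise carrying $C$ on the right---beyond what the \iocalc{2} argument required; weakening and contraction go through unchanged since the new left-hand formulae behave like the existing ones.
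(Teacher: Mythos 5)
Your proposal is correct and takes essentially the same route as the paper's proof: completeness via the \iocalc{2}-to-\iocalc{4} subsumption (Lemma~\ref{lem:calc_subsums_4_2}) together with a direct \iocalc{4}-derivation of the \rulename{CT} schema $\ioseq{(A,X),(A \wedge X,Y)}{A}{Y}$ by two applications of \rulename{\pairelimlabel{4}} followed by \rulename{\cutlabel} and \rulename{\contractlabel}, and soundness by splicing the \rulename{SI}/\rulename{CT} sub-derivation of $(B \wedge A,\, Y \vee \neg X)$ from $(A,X)$ and $(B \wedge X,\, Y \vee \neg X)$ into Figure~\ref{fig:pair_elim_2_soundness_derivation}, exactly as the paper does. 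One peripheral slip: in your closing remark on Lemma~\ref{lem:struct_admissibility_4}, the additional LK cut is on the formula $Z$, pairing $B, Z, \{X_j\}_{j \in J_1} \seqlk \{A_i\}_{i \in I_1}$ with $\{W_j\}_{j \in J_2} \seqlk Z$ (the two sequents you name, one carrying $Z$ on the left and one carrying $C$ on the right, share no cut formula), but since you invoke that lemma as given, this does not affect the theorem's proof.
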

\begin{proof}
Analogous to the proof~\ref{proof:sound_comp_2} of the \theoremword~\ref{th:sound_comp_2}.

\noindent\textbf{Completeness.}

We need to check the derivability of all rules of \ioslogic{4} in \iocalc{4}. While this can be done directly, we also can use the derivability of all axioms \ioslogic{2} in \iocalc{2}, that was proven in the proof~\ref{proof:sound_comp_2} of \lemmaword~\ref{lem:sound_comp_2}, and the additional simple fact that anything derivable in \iocalc{2} is also derivable in \iocalc{4} (lemma~\ref{lem:calc_subsums_4_2}).

Then the only rule of \ioslogic{4} left to check is the peculiar rule \rulename{CT}. Its derivation in \iocalc{4} is given below.
%
%

\[ \begin{prooftree}
    \hypo{A \wedge \neg (A \wedge X) \wedge \neg A \seqlk}
    \infer1[\rulename{\inaxlabel}]{\ioseq{}{A \wedge \neg (A \wedge X) \wedge \neg A}{Y}}
    \hypo{A \wedge \neg (A \wedge X) \wedge X \seqlk}
    \infer1[\rulename{\inaxlabel}]{\ioseq{}{A \wedge \neg (A \wedge X) \wedge X}{Y \vee \neg X}}
    \infer2[\rulename{\pairelimlabel{4}}]{\ioseq{(A,X)}{A \wedge \neg (A \wedge X)}{Y}}
    \hypo{\seqlk Y \vee \neg Y}
    \infer1[\rulename{\outaxlabel}]{\ioseq{(A, X)}{A \wedge Y}{Y \vee \neg Y}}
    \infer2[\rulename{\pairelimlabel{4}}]{\ioseq{(A, X) \scomma (A \wedge X, Y)}{A}{Y}}
    \end{prooftree} \]

Then this derivations of \ioslogic{4} rules are combined using structural rules to build a derivation in \iocalc{4} analogously to the proof~\ref{proof:sound_comp_2} of \theoremword~\ref{th:sound_comp_2}. Admissibility of structural rules in \iocalc{4} is given by \lemmaword~\ref{lem:struct_admissibility_4}.

\textbf{Soundness.}

Analogous to the proof~\ref{proof:sound_comp_2} of \theoremword~\ref{th:sound_comp_2}. The soundness of the rule \rulename{\pairelimlabel{4}} in \iocalc{4} is given by the following derivation.

\[ \begin{prooftree}
    \hypo{(A, X)}
    \infer1[\rulename{WO}]{(A, Y \vee X)}
    \infer1[\rulename{SI}]{(B \wedge A, Y \vee X)}
        \hypo{(A, X)}
    \infer1[\rulename{SI}]{(B \wedge A, X)}
    \hypo{(B \wedge X, Y \vee \neg X)}
    \infer1[\rulename{SI}]{(B \wedge A \wedge X, Y \vee \neg X)}
    \infer2[\rulename{CT}]{(B \wedge A, Y \vee \neg X)}
    \infer2[\rulename{AND}]{(B \wedge A, (Y \vee X) \wedge (Y \vee \neg X))}
    \infer1[\rulename{WO}]{(B \wedge A, Y \vee (X \wedge \neg X))}
    \infer1[\rulename{WO}]{(B \wedge A, Y)}
    \hypo{(B \wedge \neg A, Y)}
    \infer2[\rulename{OR}]{((B \wedge A) \vee (B \wedge \neg A), Y)}
    \infer1[\rulename{SI}]{(B \wedge (A \vee \neg A), Y)}
    \infer1[\rulename{SI}]{(B, Y)}
\end{prooftree}  \]
\end{proof}

\begin{lemman}[\ref{lem:semi_invert_3}]
\label{proof:semi_invert_3}
If $\ioseq{(A,X), G}{B}{Y}$ is derivable in \iocalc{3}, then $\ioseq{G}{B \wedge X}{Y \vee \neg X}$ is also derivable \iocalc{3}.
\end{lemman}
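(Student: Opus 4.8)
The plan is to proceed by induction on the length of the derivation of $\ioseq{(A,X), G}{B}{Y}$ in $\iocalc{3}$, with a case distinction on the last rule applied. Since $\iocalc{3}$ consists of $LK$ together with the concluding rules $\rulename{\inaxlabel}$, $\rulename{\outaxlabel}$ and the single pair-elimination rule $\rulename{\pairelimlabel{3}}$, there are only three cases to treat, and the reference to the characterization lemma is unavailable here since that lemma is proved using this one.

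First the two concluding rules, which are immediate. If the last rule is $\rulename{\inaxlabel}$, then $B \seqlk$ is derivable in $LK$; weakening on the left (adding $X$) followed by $\wedge$-left yields $B \wedge X \seqlk$, so $\ioseq{G}{B \wedge X}{Y \vee \neg X}$ follows by $\rulename{\inaxlabel}$. Dually, if the last rule is $\rulename{\outaxlabel}$, then $\seqlk Y$ is derivable; weakening on the right (adding $\neg X$) followed by $\vee$-right yields $\seqlk Y \vee \neg X$, and $\ioseq{G}{B \wedge X}{Y \vee \neg X}$ follows by $\rulename{\outaxlabel}$.

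The interesting case is when the last rule is $\rulename{\pairelimlabel{3}}$, where the pair it eliminates matters. If that pair is $(A,X)$ itself, then the right premise is precisely $\ioseq{G}{B \wedge X}{Y \vee \neg X}$, so there is nothing to prove. If instead it eliminates a different pair $(A', X') \in G$, say $G = (A',X'), G'$, the premises are $B \seqlk A'$ and $\ioseq{(A,X), G'}{B \wedge X'}{Y \vee \neg X'}$. I would apply the induction hypothesis to this second (shorter) derivation — which still contains $(A,X)$ — obtaining $\ioseq{G'}{(B \wedge X') \wedge X}{(Y \vee \neg X') \vee \neg X}$. To reach the goal $\ioseq{(A',X'), G'}{B \wedge X}{Y \vee \neg X}$ I would then re-apply $\rulename{\pairelimlabel{3}}$ to eliminate $(A',X')$: its $LK$ premise $B \wedge X \seqlk A'$ follows from $B \seqlk A'$ by weakening, and its right premise should be $\ioseq{G'}{(B \wedge X) \wedge X'}{(Y \vee \neg X) \vee \neg X'}$.

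The main obstacle is that the sequent produced by the induction hypothesis and the right premise required for this final application differ: the input $(B \wedge X) \wedge X'$ and output $(Y \vee \neg X) \vee \neg X'$ are only classically equivalent to $(B \wedge X') \wedge X$ and $(Y \vee \neg X') \vee \neg X$, not syntactically identical. To bridge this gap I would first establish, by a short separate induction, that derivability in $\iocalc{3}$ is invariant under replacing the goal input and the goal output by classically equivalent formulae; this holds because every rule inspects $B$ and $Y$ only through $LK$-derivability of $B \seqlk$, $\seqlk Y$ and $B \seqlk A$, and through the equivalence-respecting maps $B \mapsto B \wedge X$ and $Y \mapsto Y \vee \neg X$ inside $\rulename{\pairelimlabel{3}}$. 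With this invariance in hand, the classical equivalence of the two forms closes the inductive step.
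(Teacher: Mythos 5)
Your proof follows essentially the same route as the paper's: induction on the derivation with a case split on the last applied rule, where the two concluding-rule cases, the case in which \rulename{\pairelimlabel{3}} eliminates $(A,X)$ itself, and the re-application of \rulename{\pairelimlabel{3}} after invoking the induction hypothesis on the shorter derivation all match the paper's argument step for step. The only difference is that the paper silently identifies the sequent $\ioseq{G'}{(B \wedge X') \wedge X}{(Y \vee \neg X') \vee \neg X}$ delivered by the induction hypothesis with the premise $\ioseq{G'}{(B \wedge X) \wedge X'}{(Y \vee \neg X) \vee \neg X'}$ required by the final rule application (writing both without parentheses, in effect treating $\wedge$ and $\vee$ as flattened), whereas you make this identification explicit via a correct auxiliary lemma stating that derivability in \iocalc{3} is invariant under replacing the goal input and goal output by classically equivalent formulae --- a legitimate tightening of the same proof rather than a different approach.
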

\begin{proof}
By induction on the derivation of $\ioseq{(A,X), G}{B}{Y}$ in \iocalc{3}.
\begin{itemize}
\item Suppose $\ioseq{(A,X), G}{B}{Y}$ is derived by the rule \rulename{\inaxlabel}. Then $B \seqlk$ is derivable in LK, so $B \wedge X \seqlk$ is also derivable in LK and $\ioseq{G}{B \wedge X}{Y \vee \neg X}$ can be also derived by the rule \rulename{\inaxlabel}.
\item Suppose $\ioseq{(A,X), G}{B}{Y}$ is derived by the rule \rulename{\outaxlabel}. Then $\seqlk Y$ is derivable in LK, so $\seqlk Y \vee \neg X$ is also derivable in LK and $\ioseq{G}{B \wedge X}{Y \vee \neg X}$ can be also derived by the rule \rulename{\outaxlabel}.
\item If $\ioseq{(A,X), G}{B}{Y}$ is derived by elimination of the pair $(A,X)$ with the rule \rulename{\pairelimlabel{3}}, than there is a derivation of the premise $\ioseq{G}{B \wedge X}{Y \vee \neg X}$.
\item Suppose $\ioseq{(A,X), G}{B}{Y}$ is derived by elimination of some other pair $(A',X')$ with the rule \rulename{\pairelimlabel{3}} ($G = \{(A',X')\} \cup G'$ for some $G'$). The first premise is then $B \seqlk A'$, which infers $B \wedge X \seqcl A'$. The second premise is $\ioseq{(A,X), G'}{B \wedge X'}{Y \vee \neg X'}$ to which we can apply the inductive hypothesis to get a derivation of $\ioseq{G'}{B \wedge X' \wedge X}{Y \vee \neg X' \vee \neg X}$. The required sequent can be derived as follows:

\[ \begin{prooftree}
    \hypo{B \wedge X \seqlk A'}
    \hypo{\ioseq{G'}{B \wedge X' \wedge X}{Y \vee \neg X' \vee \neg X}}
    \infer2[\rulename{\pairelimlabel{3}}]{\ioseq{(A', X'), G'}{B \wedge X}{Y \vee \neg X}}
\end{prooftree}  \]
\end{itemize}
\end{proof}

\begin{lemma}
\label{lem:struct_admissibility_1_3}
The rules \rulename{\weaklabel}, \rulename{\contractlabel} and \rulename{\cutlabel} are admissible for the calculi $\iocalc{1}$ and $\iocalc{3}$.
\end{lemma}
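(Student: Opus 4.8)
The plan is to follow the template of the proofs of \lemmaword~\ref{lem:struct_admissibility_2} and \lemmaword~\ref{lem:struct_admissibility_4}: using the characterization lemmas \lemmaword~\ref{lem:char_1} and \lemmaword~\ref{lem:char_3}, I reduce the admissibility of each I/O structural rule to the admissibility of the corresponding structural rule (weakening, contraction, cut) in LK. The only new feature w.r.t.\ \iocalc{2} and \iocalc{4} is that the characterization now offers a \emph{three-way} disjunction for each partition: an input condition ${B,\{X_j\}_{j\in J}\seqlk A_i}$, a ``closed-left'' condition ${B,\{X_j\}_{j\in J}\seqlk}$, and an output condition ${\{X_j\}_{j\in J}\seqlk Y}$ (for \iocalc{1} the outputs $\{X_j\}_{j\in J}$ simply disappear from the input conditions). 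I treat \iocalc{3} in detail; \iocalc{1} is the simplified case obtained by deleting the output formulae from the left of the input conditions.

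For \rulename{\weaklabel}, adding a pair $(A,X)$ to $\ioseq{G}{B}{Y}$: every partition $(I',J')$ of the enlarged index set restricts to a partition $(I,J)$ of the original one by dropping the new index. If the new index lies in $I'$, the condition holding for $(I,J)$ transfers verbatim (the new input only enlarges $I$); if it lies in $J'$, the extra output $X$ appears on the left of the relevant LK sequent and is supplied by LK weakening. For \rulename{\contractlabel}, two identical pairs $(A,X)$ are merged: a partition of the contracted sequent is lifted to one of the doubled sequent by duplicating the surviving index into the same side, and the resulting LK sequent carries two copies of $A$ (if both go to $I$) or of $X$ (if both go to $J$), which LK contraction collapses to the required form. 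These two cases are routine.

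The main work is \rulename{\cutlabel}. Writing $G=\{(D_i,W_i)\}_{i\le m}$, $G'=\{(A_i,X_i)\}_{i\le n}$ and $(C,Z)$ for the cut pair, I must verify, for every pair of partitions $(I_1,J_1)\in\partset(\{1,\dots,n\})$ and $(I_2,J_2)\in\partset(\{1,\dots,m\})$, that one of the three \iocalc{3}-conditions holds for $\ioseq{G,G'}{B}{Y}$ --- call this goal $(\ast)$. I extract three instances from the hypotheses: applying \lemmaword~\ref{lem:char_3} to $\ioseq{(C,Z),G'}{B}{Y}$ at the partition $(I_1\cup\{n+1\},J_1)$ (sending $C$ to the inputs) and at $(I_1,J_1\cup\{n+1\})$ (sending $Z$ to the outputs), and to $\ioseq{G}{C}{Z}$ at $(I_2,J_2)$. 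The first instance either already yields a disjunct of $(\ast)$ --- up to LK weakening by $\{W_j\}_{j\in J_2}$ --- or leaves the single residual sequent ${B,\{X_j\}_{j\in J_1}\seqlk C}$. In the residual case I branch on the third instance: if it gives ${C,\{W_j\}_{j\in J_2}\seqlk D_i}$ or ${C,\{W_j\}_{j\in J_2}\seqlk}$, an LK cut on $C$ delivers the input or closed-left disjunct of $(\ast)$; if it gives ${\{W_j\}_{j\in J_2}\seqlk Z}$, I combine it with the second instance, each of whose three outcomes carries $Z$ on the left, by an LK cut on $Z$, producing exactly the matching disjunct of $(\ast)$.

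I expect the cut case to be the only real obstacle, precisely because of the extra closed-left disjunct absent in \iocalc{2} and \iocalc{4}: the verification must confirm that this disjunct is stable under the LK weakenings and cuts used above. It is, since cutting a closed-left sequent against ${\{W_j\}_{j\in J_2}\seqlk Z}$ or against ${B,\{X_j\}_{j\in J_1}\seqlk C}$ again yields a closed-left sequent, and weakening preserves it as well. Once the bookkeeping of which partition feeds which LK inference is laid out as above, every branch lands on a disjunct of $(\ast)$, and the analogous (strictly simpler) argument, with the outputs erased from the input conditions, settles \iocalc{1}.
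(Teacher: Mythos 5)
Your proposal is correct and follows the paper's own proof essentially step for step: both reduce the three I/O structural rules to LK weakening, contraction, and cut via the characterization lemmas (Lemmas~\ref{lem:char_1} and~\ref{lem:char_3}), and your cut analysis extracts exactly the paper's three instances --- the partitions $(I_1\cup\{n+1\},J_1)$ and $(I_1,J_1\cup\{n+1\})$ applied to the hypothesis containing $(C,Z)$ and $(I_2,J_2)$ applied to $\ioseq{G}{C}{Z}$ --- and combines them by the same case split (weakening when the first instance already yields a disjunct of $(\ast)$, an LK cut on $C$ against the residual $B,\{X_j\}_{j\in J_1}\seqlk C$, and an LK cut on $Z$ against the second instance, whose three outcomes all carry $Z$ on the left). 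The one cosmetic slip is in contraction: since the input conditions for \iocalc{1}/\iocalc{3} carry a \emph{single} $A_i$ on the right (``for some $i\in I$''), sending both duplicate indices to $I$ produces two identical disjuncts rather than one sequent with $A,A$ needing LK contraction --- which, as the paper's handling (duplicating the whole sequent in the disjunction) makes clear, renders that case even more immediate than you state.
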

\begin{proof}
Analogous to the proof of Lemma~\ref{lem:struct_admissibility_2} for the calculus \iocalc{2} and its analogue for the calculus \iocalc{4}. The difference is that the characterization lemma in the case of \iocalc{1} and \iocalc{3} does not have only two, but $(|I| + 2)$ LK-sequents one of which should be derivable for each partition $(I,J)$. However the form of these sequents is the same, so the structure of the proof does not change.

For admissibility of weakening and contraction the modification is minimal: instead of applying the LK rules of weakening and contraction to add/duplicate the formula $A$ (for weakened/contracted pair $(A,X)$) we add/duplicate the whole separate sequent that has $A$ on the right. Since the characterization statement requires at least one of the sequents to be derivable, addition of a new sequent or duplication of the existing sequent results in a statement that is weaker (or equivalent), as needed.

We will now show how LK cuts (and weakenings) should be applied to establish cut admissibility in \iocalc{3} (case of \iocalc{1} is exactly the same, but the sequents are simpler, since they do not have additional outputs on the left).

Using the characterization lemma~\ref{lem:char_3}, for arbitrary partitions $(I_1 ,J_1) \in \partset(\{1,\dots,n\})$ and $(I_2 ,J_2) \in \partset(\{1,\dots,m\})$ we need to prove the following characterization statement

\textcolor{red}{$(\ast)$}\quad \textit{at least one of the following $(|I_1| + |I_2| + 2)$ sequents should be derivable:
\begin{itemize}
\item[\textcolor{red}{$(\ast1)$}] $B, \{X_j\}_{j \in J_1}, \{W_j\}_{j \in J_2} \seqlk A_i$ for some $i \in I_1$
\item[\textcolor{red}{$(\ast2)$}] $B, \{X_j\}_{j \in J_1}, \{W_j\}_{j \in J_2} \seqlk D_l$ for some $l \in I_2$
\item[\textcolor{red}{$(\ast3)$}] $B, \{X_j\}_{j \in J_1}, \{W_j\}_{j \in J_2} \seqlk$
\item[\textcolor{red}{$(\ast4)$}] $\{X_j\}_{j \in J_1}, \{W_j\}_{j \in J_2} \seqlk Y$
\end{itemize}}

Again, we first apply the characterization lemma to derivability of $\ioseq{(C,Z), G}{B}{Y}$ for the partition $(I_1 \cup \{n+1\}, J_1)$ and get the following statement

\textcolor{blue}{$(a)$}\quad \textit{at least one of the following $(|I_1| + 3)$ sequents should be derivable:
\begin{itemize}
\item[\textcolor{blue}{$(a1)$}] $B, \{X_j\}_{j \in J_1} \seqlk A_i$ for some $i \in I_1$
\item[\textcolor{blue}{$(a2)$}] $B, \{X_j\}_{j \in J_1} \seqlk C$
\item[\textcolor{blue}{$(a3)$}] $B, \{X_j\}_{j \in J_1} \seqlk$
\item[\textcolor{blue}{$(a4)$}] $\{X_j\}_{j \in J_1} \seqlk Y$
\end{itemize}}

Second, we get the following statement by applying the characterization lemma to derivability $\ioseq{(C,Z), G}{B}{Y}$ for the partition $(I_1, J_1 \cup \{n+1\})$:

\textcolor{blue}{$(b)$}\quad \textit{at least one of the following $(|I_1| + 2)$ sequents should be derivable:
\begin{itemize}
\item[\textcolor{blue}{$(b1)$}] $B, \{X_j\}_{j \in J_1}, Z \seqlk A_i$ for some $i \in I_1$
\item[\textcolor{blue}{$(b2)$}] $B, \{X_j\}_{j \in J_1}, Z \seqlk$
\item[\textcolor{blue}{$(b3)$}] $\{X_j\}_{j \in J_1}, Z \seqlk Y$
\end{itemize}}

Third, we get the following statement by applying the characterization lemma to derivability $\ioseq{G'}{C}{Z}$ for the partition $(I_2, J_2)$:

\textcolor{blue}{$(c)$}\quad \textit{at least one of the following $(|I_2| + 2)$ sequents should be derivable:
\begin{itemize}
\item[\textcolor{blue}{$(c1)$}] $C, \{W_j\}_{j \in J_2} \seqlk D_l$ for some $l \in I_2$
\item[\textcolor{blue}{$(c2)$}] $C, \{W_j\}_{j \in J_2} \seqlk$
\item[\textcolor{blue}{$(c3)$}] $\{W_j\}_{j \in J_2} \seqlk Z$
\end{itemize}}

Now we show that for any possible choice of derivable sequent in $\textcolor{blue}{(1)}$, $\textcolor{blue}{(2)}$ and $\textcolor{blue}{(3)}$, we can always derive one of the sequents of $\textcolor{red}{(\ast)}$ (either by weakening or cut in LK):
\begin{itemize}
\item If one of the sequents in $\textcolor{blue}{(a1)}$ is derivable, one of sequents $\textcolor{red}{(\ast1)}$ is derivable by weakening.

\item If the sequent $\textcolor{blue}{(a3)}$ is derivable, the sequent $\textcolor{red}{(\ast3)}$ is derivable by weakening.

\item If the sequent $\textcolor{blue}{(a4)}$ is derivable, the sequent $\textcolor{red}{(\ast4)}$ is derivable by weakening.

\item If the sequent $\textcolor{blue}{(a2)}$ and one of the sequents $\textcolor{blue}{(c1)}$ are derivable we can use the LK cut on them and get one of the sequents $\textcolor{red}{(\ast2)}$.

\item If the sequent $\textcolor{blue}{(a2)}$ and the sequent $\textcolor{blue}{(c2)}$ are derivable we apply the LK cut to them and get the sequent $\textcolor{red}{(\ast3)}$.

\item If the sequent $\textcolor{blue}{(c3)}$ and one of the sequents $\textcolor{blue}{(b1)}$ are derivable we use the LK cut and get one of the sequents $\textcolor{red}{(\ast1)}$.

\item If the sequent $\textcolor{blue}{(c3)}$ and the sequent $\textcolor{blue}{(b2)}$ are derivable we can use the LK cut  and get the sequent $\textcolor{red}{(\ast3)}$.

\item If the sequent $\textcolor{blue}{(c3)}$ and the sequent $\textcolor{blue}{(b3)}$ are derivable using the LK cut we get the sequent $\textcolor{red}{(\ast4)}$.
\end{itemize}

Thus, we have proved that one of the sequents $\textcolor{red}{(\ast)}$ are derivable for all possible cases.
\end{proof}

\begin{thn}[\ref{th:sound_comp_1_3}]
\label{proof:sound_comp_1_3}
For both $k \in \{1,3\}$, $\ioseq{G}{B}{Y}$ is derivable in \iocalc{k} iff the pair $(B, Y)$ is derivable from the pairs in $G$ in \ioslogic{k}.
\end{thn}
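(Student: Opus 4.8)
The plan is to replicate the two–phase argument already used for \iocalc{2} and \iocalc{4}: prove \emph{completeness} by simulating each rule of \ioslogic{k} inside \iocalc{k}, and \emph{soundness} by translating each rule of \iocalc{k} into an I/O derivation, with the characterization lemmas \lemmaword~\ref{lem:char_1} and \lemmaword~\ref{lem:char_3} and the admissibility of the structural rules of \figureword~\ref{fig:IO_structural_rules} carrying the technical weight. I would treat $k\in\{1,3\}$ uniformly, the two cases differing only in the shape of the partition sequents (those for \iocalc{1} omit the outputs $\{X_j\}_{j\in J}$ on the left of the input sequents).

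For completeness I would first establish that \rulename{\weaklabel}, \rulename{\contractlabel} and \rulename{\cutlabel} are admissible in \iocalc{1} and \iocalc{3}, reducing---exactly as for \iocalc{2}---to weakening, contraction and cut in LK via the characterization lemmas. I would then induct on the I/O derivation in \ioslogic{k}, showing that every pair $(A,X)$ occurring in it yields a derivable sequent $\ioseq{G}{A}{X}$. The cases \rulename{TOP}, \rulename{BOT}, \rulename{WO}, \rulename{AND} go through as for \iocalc{2}; the leaf case $(A,X)\in G$ now uses \rulename{\pairelimlabel{1}} (resp. \rulename{\pairelimlabel{3}}) with the trivial left premise $A \seqlk A$ and a right premise obtained by \rulename{\outaxlabel} from $\seqlk X\vee\neg X$; and \rulename{SI} is simulated by feeding its side condition $B\seqlk A$ directly as the LK left premise of the pair elimination rule instead of going through \rulename{\inaxlabel}. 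For $k=3$ the extra rule \rulename{CT} is handled by first deriving $\ioseq{(A,X),(A\wedge X,Y)}{A}{Y}$ in \iocalc{3} and then closing with \rulename{\cutlabel} and \rulename{\contractlabel}.

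For soundness I would induct on the \iocalc{k} derivation. The concluding rules \rulename{\inaxlabel} and \rulename{\outaxlabel} are sound by \lemmaword~\ref{lemma:concluding}, so only the pair elimination rules remain. For \rulename{\pairelimlabel{1}} I would reuse the subderivation of $(B\wedge A,Y)$ from $(A,X)$ and $(B,Y\vee\neg X)$ in \figureword~\ref{fig:pair_elim_2_soundness_derivation}, then note that its side condition $B\seqlk A$ makes $B$ and $B\wedge A$ classically equivalent, so a single \rulename{SI} delivers $(B,Y)$. For \rulename{\pairelimlabel{3}} the argument is the same, reusing instead the \rulename{CT}-based subderivation of $(B\wedge A,Y\vee\neg X)$ from the soundness proof of \iocalc{4}, which uses only the rules \rulename{WO}, \rulename{SI}, \rulename{AND}, \rulename{CT} of \ioslogic{3}.

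The hard part will be the cut admissibility step. Unlike for \iocalc{2}, each partition in \lemmaword~\ref{lem:char_3} offers several alternative LK sequents (one input sequent per surviving pair, an emptiness sequent, and an output sequent), so after applying the characterization lemma three times---twice to the hypothesis with the cut pair in the inputs and then in the outputs, and once to the other hypothesis---I must verify that every combination of ``which alternative holds'' recombines, by LK weakening or a single LK cut, into an alternative required for the conclusion. This bookkeeping, rather than any new idea, is the real obstacle; the remainder is a routine adaptation of the \iocalc{2} and \iocalc{4} proofs.
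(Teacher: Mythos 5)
Your proposal is correct and follows essentially the same route as the paper: completeness via simulation of each \ioslogic{k} rule (with the leaf case and \rulename{SI} feeding the LK premise $B \seqlk A$ directly into \rulename{\pairelimlabel{1}}/\rulename{\pairelimlabel{3}}, and \rulename{CT} closed by \rulename{\cutlabel} and \rulename{\contractlabel}) together with structural-rule admissibility reduced through the characterization lemmas, and soundness by reusing the \figureword~\ref{fig:pair_elim_2_soundness_derivation} subderivation (resp.\ its \rulename{CT}-based variant from \iocalc{4}) plus a final \rulename{SI} justified by the side condition. Even your assessment of where the work lies is accurate: the paper's Lemma~\ref{lem:struct_admissibility_1_3} performs exactly the $(|I|+2)$-alternative case analysis for \rulename{\cutlabel} that you identify as the main bookkeeping obstacle.
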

\begin{proof}
Analogous to the proofs of Theoremes \ref{th:sound_comp_2} and \ref{th:sound_comp_4} for the calculi \iocalc{2} and \iocalc{4}.

\textbf{Completeness}\\
We need to check that all I/O rules of \ioslogic{1} (plus adequacy of derivability $\ioseq{(A,X)}{A}{X}$) are derivable in \iocalc{1} and all I/O rules of \ioslogic{3} are derivable in \iocalc{3}; combining instances of these rules via (admissible) structural rules is the same as in the proof of completeness of \iocalc{2}.

Below are derivation of I/O rules of \ioslogic{1} in \iocalc{1}.

\begin{enumerate}
\item Base case ($(A,X)$ belongs to $G$)

\[ \begin{prooftree}
    \hypo{A \seqlk A}
    \hypo{\seqlk X \vee \neg X}
    \infer1[\rulename{\outaxlabel}]{\ioseq{G'}{A}{X \vee \neg X}}
    \infer2[\rulename{\pairelimlabel{1}}]{\ioseq{(A,X), G'}{A}{X}}
\end{prooftree} \]

\item \rulename{TOP}

\[ \begin{prooftree}
    \hypo{\seqlk \top}
    \infer1[\rulename{\outaxlabel}]{\ioseq{}{\top}{\top}}
\end{prooftree} \]

\item \rulename{BOT}

\[ \begin{prooftree}
    \hypo{\bot \seqlk}
    \infer1[\rulename{\inaxlabel}]{\ioseq{}{\bot}{\bot}}
\end{prooftree} \]

\item \rulename{WO} (the blue LK-sequent is derivable due to the side condition that $X \seqcl Y$ classically)

\[ \begin{prooftree}
    \hypo{B \seqlk B}
    \hypo{\textcolor{blue}{\seqlk Y \vee \neg X}}
    \infer1[\rulename{\outaxlabel}]{\ioseq{}{B}{Y \vee \neg X}}
    \infer2[\rulename{\pairelimlabel{1}}]{\ioseq{(B,X)}{B}{Y}}
\end{prooftree} \]

\item \rulename{SI} (the blue LK-sequent is derivable due to the side condition that $B \seqcl A$ classically)

\[ \begin{prooftree}
    \hypo{\textcolor{blue}{B \seqlk A}}
    \hypo{\seqlk Y \vee \neg Y}
    \infer1[\rulename{\outaxlabel}]{\ioseq{}{B}{Y \vee \neg Y}}
    \infer2[\rulename{\pairelimlabel{1}}]{\ioseq{(A,Y)}{B}{Y}}
\end{prooftree} \]

\item \rulename{AND}

\[ \begin{prooftree}
    \hypo{B \seqlk B}
    \hypo{B \seqlk B}
    \hypo{\seqlk (X_1 \wedge X_2) \vee \neg X_1 \vee \neg X_2}
    \infer1[\rulename{\outaxlabel}]{\ioseq{}{B}{(X_1 \wedge X_2) \vee \neg X_1 \vee \neg X_2}}
    \infer2[\rulename{\pairelimlabel{1}}]{\ioseq{(B, X_2)}{B}{(X_1 \wedge X_2) \vee \neg X_1}}
    \infer2[\rulename{\pairelimlabel{1}}]{\ioseq{(B, X_1) \scomma (B, X_2)}{B}{X_1 \land X_2}}
\end{prooftree}  \]

\end{enumerate}

These rules can be derived the same way for \iocalc{3}, or alternatively analogously to the \lemmaword~\ref{lem:calc_subsums_4_2} we can prove that every sequent derivable in \iocalc{1} is also derivable in \iocalc{3} and refer to the derivations in \iocalc{1} above. The only rule left to prove is the derivability in \iocalc{3} of the additional axiom \rulename{CT} of \ioslogic{3}. The derivation is below.

\[ \begin{prooftree}
    \hypo{A \seqlk A}
    \hypo{A \wedge X \seqlk A \wedge X}
    \hypo{\seqlk Y \vee \neg X \vee \neg Y}
    \infer1[\rulename{\outaxlabel}]{\ioseq{}{A \wedge X \wedge Y}{Y \vee \neg X \vee \neg Y}}
    \infer2[\rulename{\pairelimlabel{3}}]{\ioseq{(A \wedge X, Y)}{A \wedge X}{Y \vee \neg X}}
    \infer2[\rulename{\pairelimlabel{3}}]{\ioseq{(A, X) \scomma (A \wedge X, Y)}{A}{Y}}
\end{prooftree}  \]

Then the derivations of the rules are combined using structural rules to build a derivation in \iocalc{1} (or \iocalc{3}) analogously to the proof~\ref{proof:sound_comp_2} of \theoremword~\ref{th:sound_comp_2}. Admissibility of structural rules in \iocalc{4} is given by \lemmaword~\ref{lem:struct_admissibility_1_3}.

\textbf{Soundness}

Analogous to the proof~\ref{proof:sound_comp_2} of \theoremword~\ref{th:sound_comp_2}.

To establish soundness of the rule \rulename{\pairelimlabel{1}} in \ioslogic{1}, notice that if first premise $B \seqlk A$ is derivable, then $B \seqcl A$ (by soundness and completeness of LK), and therefore ${B \seqcl B \wedge A}$. Then the following derivation provides the soundness for the rule \rulename{\pairelimlabel{1}} in \ioslogic{1}.

\[ \begin{prooftree}
    \hypo{(A, X)}
    \infer1[\rulename{WO}]{(A, Y \vee X)}
    \infer1[\rulename{SI}]{(B \wedge A, Y \vee X)}
    \hypo{(B, Y \vee \neg X)}
    \infer1[\rulename{SI}]{(B \wedge A, Y \vee \neg X)}
    \infer2[\rulename{AND}]{(B \wedge A, (Y \vee X) \wedge (Y \vee \neg X))}
    \infer1[\rulename{WO}]{(B \wedge A, Y \vee (X \wedge \neg X))}
    \infer1[\rulename{WO}]{(B \wedge A, Y)}
    \infer1[\rulename{SI}]{(B, Y)}
\end{prooftree}  \]

Analogously, for the soundness of the rule \rulename{\pairelimlabel{3}} in \ioslogic{3} we have the following derivation.

\[ \begin{prooftree}
    \hypo{(A, X)}
    \infer1[\rulename{WO}]{(A, Y \vee X)}
    \infer1[\rulename{SI}]{(B \wedge A, Y \vee X)}
        \hypo{(A, X)}
    \infer1[\rulename{SI}]{(B \wedge A, X)}
    \hypo{(B \wedge X, Y \vee \neg X)}
    \infer1[\rulename{SI}]{(B \wedge A \wedge X, Y \vee \neg X)}
    \infer2[\rulename{CT}]{(B \wedge A, Y \vee \neg X)}
    \infer2[\rulename{AND}]{(B \wedge A, (Y \vee X) \wedge (Y \vee \neg X))}
    \infer1[\rulename{WO}]{(B \wedge A, Y \vee (X \wedge \neg X))}
    \infer1[\rulename{WO}]{(B \wedge A, Y)}
    \infer1[\rulename{SI}]{(B, Y)}
\end{prooftree}  \]

\end{proof}

\end{document}